\numberwithin{figure}{section}
\newtheorem{theorem}{Theorem}[section]
\newtheorem{lemma}[theorem]{Lemma}
\newtheorem{proposition}[theorem]{Proposition}
\newtheorem{example}[theorem]{Example}
\newtheorem{definition}[theorem]{Definition}
\newcommand{\RomanNumeralCaps}[1]{\MakeUppercase{\romannumeral #1}}
\DeclareMathOperator{\Aut}{Aut}
\DeclareMathOperator{\ord}{ord}
\DeclareMathOperator{\lcm}{lcm}
\DeclareMathOperator{\aut}{Aut}
\begin{document}

\title{   On $(\theta, \Theta)$-cyclic codes and their applications in constructing QECCs
\footnotetext{Email:  shuklaawadhesh@bhu.ac.in 
 (A. K. Shukla ), sachiniitk93@gmail.com (Sachin Pathak), opandey1302@bhu.ac.in 
 (O. P. Pandey), mishravipul10@bhu.ac.in (V. Mishra), upadhyay@bhu.ac.in (A. K.  Upadhyay).}}
	\author{  {Awadhesh Kumar Shukla}$^{1}$, {Sachin Pathak}$^{2}$, {Om Prakash Pandey}$^{1}$,\\ {Vipul Mishra }$^{1}$,  {Ashish Kumar Upadhyay}$^{1}$}
\date{
		\small{
	${}^{1}${Department of Mathematics, Banaras Hindu University, Varanasi, Uttar Pradesh 221005, \\India.}\\ ${}^{2}${Department of Mathematics and Basic Sciences, NIIT University, Neemrana 301705, India.}}\\
\today}
\maketitle

\begin{abstract}
Let $\mathbb F_q$  be a finite field, where $q$ is an odd prime power. Let 
$R=\mathbb{F}_q+u\mathbb{F}_q+v\mathbb{F}_q+uv\mathbb F_q$ with $u^2=u,v^2=v,uv=vu$. In this paper, we study the algebraic structure of $(\theta, \Theta)$-cyclic codes of block length $(r,s )$ over $\mathbb{F}_qR.$ Specifically, we analyze the structure of these codes as left $R[x:\Theta]$-submodules of $\mathfrak{R}_{r,s} = \frac{\mathbb{F}_q[x:\theta]}{\langle x^r-1\rangle} \times \frac{R[x:\Theta]}{\langle x^s-1\rangle}$. Our investigation involves determining generator polynomials and minimal generating sets for this family of codes. Further, we discuss the algebraic structure of separable codes. A relationship between the generator polynomials of   $(\theta, \Theta)$-cyclic codes over $\mathbb F_qR$ and their duals is established.  Moreover, we calculate the generator polynomials of dual of $(\theta, \Theta)$-cyclic codes.  As an application of our study, we  provide a construction of quantum error-correcting codes (QECCs) from    $(\theta, \Theta)$-cyclic codes of block length $(r,s)$ over $\mathbb{F}_qR$. We support our theoretical results with illustrative examples. 

\end{abstract}

\noindent {\bf{Keywords:}} \small{ $(\theta,\Theta )$-cyclic code, Generator polynomials, Dual codes, Separable Codes,   QECCs.}\par

\section{Introduction} 
Cyclic codes have been extensively researched and analyzed by researchers due to their abundant algebraic structure. This inherent mathematical richness makes cyclic codes one of the most significant classes of error-correcting codes. 
In 1994, Hammon et al. \cite{HKC94} presented a novel viewpoint by constructing non-linear codes over $\mathbb Z_2$ using the Gray map derived from codes over $\mathbb Z_4$.  Over the last three decades, scholars have dedicated significant attention to the exploration of the characteristics of cyclic codes, particularly their generator polynomials.\par
In 1997, Rifà and Pojul \cite{RP97} pioneered introducing codes over mixed alphabets. Subsequently, Browers et al. \cite{BHOS98} examined  the codes over  the mixed alphabets $\mathbb Z_2$ and $\mathbb Z_3$, focusing on deriving the maximum size of error-correcting codes. After that many  scholars have  concentrated extensively on mixed alphabets. In 2010, Borges et al. \cite{BFRV10} delved into the realm of mixed alphabets and examined the $\mathbb Z_2\mathbb Z_4$-additive codes. In their work, they focused not only on finding   the generator matrices but also the parity check matrices  and other parameters for these codes.
Building upon prior research,  Aydogdu and Siap \cite{AS13} delved deeper into the realm of  $\mathbb Z_2\mathbb Z_{2^s}$-additive codes.  Abualrub et al. \cite{ASA14}  examined  $\mathbb Z_2\mathbb Z_4$-additive cyclic codes,  with a particular focus on identifying their  generator polynomials and minimum spanning sets in 2014. This research led to the construction of several Maximum Distance Separable (MDS) codes.\par
The double cyclic codes over $\mathbb Z_2$ were analyzed and examined in 2018 by Borges et al.  \cite{BCT18} for the first time. Their study involved considering $\mathbb Z_2$-double cyclic codes as $\mathbb Z_2[x]$-submodules of $\frac{\mathbb Z_2[x]}{\langle x^r-1\rangle}\times \frac{\mathbb Z_2[x]}{\langle x^s-1\rangle}$. Within this research, they successfully determined the generating sets and derived numerous optimal codes. The structural characteristics of double cyclic codes over $\mathbb Z_4$ was thoroughly studied by
Gao et al. \cite{GSWF16}. In their research, they focused on generator polynomials and minimum spanning sets and derived corresponding dual codes. Furthermore, they find several  non-linear optimal codes over $\mathbb Z_2.$  In 2020, Dinh et al.  \cite{DPBUC20}  examined the structural characteristics of cyclic codes over $\mathbb F_qR$ and their corresponding duals. Additionally, they applied their findings to construct several novel QECCs. \par
 
In 2007, D. Bouchers et al. \cite{BGU07} introduced skew cyclic codes. They derived improved linear codes with better  parameters in \cite{BU009}.  Shortly thereafter, 
in 2011, Siap et al.  \cite{SAAS11} extended the study  of \cite{BU009} and studied the  skew cyclic codes of arbitrary lengths.  Subsequent to their groundbreaking work, several  researchers delved into the exploration of skew rings of the form $R=\mathbb{F}_q+v\mathbb{F}_q$, where $v^2=v$, employing various automorphisms on $R$ [see  \cite{GJ13, GSY14, OOI19}]. 
In 2015, Yao et al. [13] conducted a comprehensive study on skew cyclic codes over the ring $R=\mathbb{F}_q+u\mathbb{F}_q+v\mathbb{F}_q+uv\mathbb{F}_q$, with $u^2=u, v^2=v$ and $uv=vu$.\par

A pivotal moment in the development of quantum error-correction occurred in 1995 when Shor \cite{S95} introduced the first QECC. Following this, Calderbank et al. \cite{CRSS98} introduced a technique for constructing QECCs in 1998. They achieved this by leveraging classical error-correcting codes and successfully established their existence through proof. Building upon the concept introduced by Calderbank et al. \cite{CRSS98}, numerous  QECCs have been obtained by adapting cyclic codes and their extensions, working within the realms of both finite rings and finite fields.\par
Recently, Aydogdu et al. studied  double skew cyclic code over $\mathbb F_q$ in \cite{AHS22}.  Within this research paper, they examined the generator polynomials of this family of codes and their corresponding dual codes.
Skew polynomial rings allow for numerous factorizations of every polynomial, they do not have unique factorization features. This makes the study of codes over non-commutative rings especially important.\par
After being motivated from the study of \cite{AHS22}, in our paper, we consider   
$R=\mathbb{F}_q+u\mathbb{F}_q+v\mathbb{F}q+uv\mathbb{F}_q$, with $q=p^m$, where $p$ is an odd prime and $u^2=u, v^2=v$ and $uv=vu$, $m$ is natural number and  study structural  properties of $(\theta, \Theta)$-cyclic code over $\mathbb FqR$ and their application in constructing in QECCs. This paper is outlined as follows:  Section 2 provides fundamental concepts, notations, and prior findings relevant to the study of  $(\theta, \Theta)$-cyclic codes of block length $(r,s )$ over $\mathbb F_qR$. Moreover,  a naturally extended Gray map from $\mathbb F_q^r\times R^s\to \mathbb F_q^{r+4s}$ is defined, and some properties of this map are established. In Section 3, we explore the generator polynomials by Theorem \ref{theo-4.4} and minimal generating set by Theorem \ref{theo-4.12} of this family of codes.  Moreover, we discuss the generator polynomials of separable code in Theorem  \ref{theo-4.9}. Additionally, we find some near-optimal and optimal codes over $\mathbb F_9$ in Table \ref{taboptimal-1}.  In Section $4$, we conduct the study on the dual of $(\theta, \Theta)$-cyclic  codes of block length $(r,s )$ over $\mathbb F_qR$ and determine the generator polynomials of dual codes  in Theorem \ref {theo-5.11}.   Section $5$ focuses on constructing QECCs from separable $(\theta, \Theta)$-cyclic codes over $\mathbb F_qR$.  In Examples \ref{ex-6.8}, \ref{ex-6.9} and \ref{ex-6.10}, we provide an explanation  of constructing    QECCs from separable $(\theta, \Theta)$-cyclic  codes of block length $(r,s )$ over $\mathbb F_qR$.    In Table \ref{tab-2}, we obtain several maximum distance separable (MDS) QECCs from $\theta$-cyclic code over $\mathbb F_q$ and new QECCs having better parameters than existing QECCs from $\Theta $-cyclic code over $R$ and $(\theta, \Theta)$-cyclic code over $\mathbb F_qR$ are given in tables \ref{tab-3} and \ref{tab-4}.  In Section $6$, this paper is concluded. 

\section{Preliminaries} 

Let $\mathbb{F}_q$ be  a finite field, where $q$ is a power of an odd prime $p$. The set  $\mathbb{F}_q^r$ consists of tuples of length $r$, where  elements can be added and multiplied by scalars following standard rules and forms a vector space over $\mathbb F_q$. A non-empty subspace $\mathfrak{C}_r$ of $\mathbb{F}_q^r$ is called a linear code of length $r$ over $\mathbb{F}_q$, with its dimension indicating the size of $\mathfrak{C}_r$. Elements in $\mathfrak{C}_r$ are termed codewords over $\mathbb{F}_q$. The Hamming weight of any codeword $\textbf{c} = (c_0, c_1, \ldots, c_{r-1}) \in \mathfrak{C}_r$, denoted as $w_H(\textbf{c})$, counts the number of non-zero components. Furthermore, the Hamming distance between any two codewords, $\textbf{c}$ and $\textbf{c}'$, is defined as the number of positions where they differ, given by $d_H(\textbf{c},\textbf{c}') = w_H(\textbf{c} - \textbf{c}')$. The Hamming distance of a linear code $\mathfrak{C}_r$ is determined by \[d_H((\mathfrak C_r)=\mbox{min}~\{d_H(\textbf{c},\textbf{c}'),  \forall ~ \textbf{c}, \textbf{c}' \in \mathfrak C_r, ~\textbf{c}\ne \textbf{c}'\}.\] 

\begin{definition}\label{def-2.1}
Let $\aut(\mathbb F_q)$ denote the set of all automorphism on  $\mathbb F_q$ and $\theta \in \aut(\mathbb F_q)$. Then we define the set    \[\mathbb{F}_q[x:\theta] = \{ g_0+g_1x+g_2x^2+\cdots+g_{n}x^n, g_i\in \mathbb{F}_q,~ \theta \in \aut(\mathbb{F}_q)\}.\] If $\theta$ is not an identity automorphism, then   under  the conventional addition  and multiplication  of the polynomials given by  $(gx^i)(g'x^j)= g\theta^i(g')x^{i+j}$,   the set $\mathbb{F}_q[x:\theta]$ constitutes a non-commutative ring and known as   skew polynomial ring over $\mathbb F_q$. we refer $\cite{D74} for additional details.$
\end{definition}
As the set of all automorphism on  $\mathbb F_q$ constitutes  a group under the composition of maps,  let $\theta\in \aut(\mathbb F_q)$ be an automorphism.   The smallest positive integer $t$ satisfying  $\theta^t(g)=g$ for all $g\in \mathbb F_q$, is called  the order of $\theta$ and denoted by $\ord(\theta).$ 
\begin{proposition}\label{prop-2.2}\cite{JLU12}
In  $\mathbb F_q[x:\theta]$, the following   are equivalent :

   \begin{enumerate}
       \item $x^r-1\in Z(\mathbb{F}_q[x:\theta])$, where $Z(\mathbb{F}_q[x:\theta])$ denote the center of the ring $\mathbb F_q[x:\theta ].$
       \item  $\langle x^r-1\rangle$ forms a two-sided ideal.
       \item $\ord(\theta)$ divides $r$.
      
   \end{enumerate}
   \end{proposition}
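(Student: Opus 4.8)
The plan is to prove the three-way equivalence by showing $(1)\Rightarrow(2)\Rightarrow(3)\Rightarrow(1)$, relying on the defining multiplication rule $(gx^i)(g'x^j)=g\,\theta^i(g')x^{i+j}$ from Definition \ref{def-2.1}.

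\textbf{Step 1: $(1)\Rightarrow(2)$.} Assume $x^r-1\in Z(\mathbb{F}_q[x:\theta])$. To show $\langle x^r-1\rangle$ is two-sided it suffices to check it is a right ideal as well as a left ideal; since it is automatically a left ideal, I would take an arbitrary $f\in\mathbb{F}_q[x:\theta]$ and write $(x^r-1)f = f(x^r-1)\in\langle x^r-1\rangle$ using centrality, so every left multiple is also a right multiple, hence the ideal is two-sided. This step is essentially immediate from the definition of the center.

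\textbf{Step 2: $(2)\Rightarrow(3)$.} Assume $\langle x^r-1\rangle$ is two-sided. Then for each scalar $g\in\mathbb{F}_q$, the product $g\cdot(x^r-1)$ lies in the ideal, but so does $(x^r-1)\cdot g$; computing the latter with the skew multiplication gives $(x^r-1)g = \theta^r(g)x^r - g$. Since the ideal is two-sided, $(x^r-1)g$ must be a right $\mathbb{F}_q[x:\theta]$-multiple of $x^r-1$; comparing degrees forces $(x^r-1)g = c\,(x^r-1)$ for some constant $c$, and matching the degree-$r$ and degree-$0$ coefficients yields $\theta^r(g)=c=g$. As $g$ was arbitrary, $\theta^r = \mathrm{id}$, so $\ord(\theta)\mid r$. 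Alternatively and more cleanly: $x^r-1$ being central already would give $\theta^r(g)x^r = x^r g = g x^r$, forcing $\theta^r(g)=g$; I would phrase the argument so that two-sidedness of $\langle x^r-1\rangle$ implies $x^r-1$ is central (left-multiplication by the generator equals right-multiplication up to the ideal, and a degree count pins it down exactly), then read off $\theta^r=\mathrm{id}$.

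\textbf{Step 3: $(3)\Rightarrow(1)$.} Assume $\ord(\theta)\mid r$, so $\theta^r(g)=g$ for all $g\in\mathbb{F}_q$. To show $x^r-1$ is central it is enough to show it commutes with every monomial $g x^j$, since these span the ring additively. Compute $(x^r-1)(gx^j) = \theta^r(g)x^{r+j} - gx^j = gx^{r+j} - gx^j$ and $(gx^j)(x^r-1) = g x^{j+r} - gx^j$; the two agree because $\theta^r(g)=g$ and $x^j x^r = x^{r+j} = x^r x^j$. Hence $x^r-1\in Z(\mathbb{F}_q[x:\theta])$. The main (and only mild) obstacle is the degree-comparison bookkeeping in Step 2 to extract $\theta^r=\mathrm{id}$ from mere two-sidedness rather than from outright centrality; everything else is a direct computation with the twisted product. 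I would present Steps 1 and 3 briskly and give Step 2 slightly more detail, or simply cite Proposition \ref{prop-2.2} as already stated (it is attributed to \cite{JLU12}) and sketch the equivalences for completeness.
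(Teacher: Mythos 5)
Your proof is correct. The paper itself offers no argument for this proposition --- it is stated as a citation to \cite{JLU12} --- so there is nothing to compare against; your cycle $(1)\Rightarrow(2)\Rightarrow(3)\Rightarrow(1)$ is the standard argument, and the key step (the degree count in Step 2, which is legitimate because $\mathbb{F}_q[x:\theta]$ has no zero divisors and degrees add under multiplication, forcing the cofactor of $x^r-1$ to be the constant $g$ and hence $\theta^r(g)=g$) is carried out correctly.
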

\indent We now extend our above discussion over a finite ring. We begin with following definition to expanding our study. 
\begin{definition}
 Suppose a finite commutative ring $R$  and   the set of all automorphism on $R$ denoted by $\aut(R)$. Assume   $\Theta\in \aut(R)$, then \[ R[x:\Theta]=\{b_0+b_1x+b_2x^2+\cdots+b_nx^n~|~b_i\in R,~\Theta\in \Aut(R)\}.\] If $\Theta$ is not an identity automorphism then  under  the conventional addition  and multiplication  of the polynomials given by  $(bx^i)(b'x^j)= b\Theta^i(b')x^{i+j}$,   the set $R[x:\Theta]$ constitutes a non-commutative ring over $R$. 
  \end{definition}
  
  \begin{definition}\label{def-2.11} A   non-empty $R$-submodule $\mathfrak C_s$   of $R^s$ is referred to a linear code of length $S$ over  $R$.

\end{definition}
In this context, if  ${\ord}(\Theta)$ divides $s$, the quotient space $\frac{ R[x:\Theta]}{\langle x^{s}-1\rangle}$ forms a ring and  if $\text{ord}(\Theta)$ does not divide $s$, then $\frac{ R[x:\Theta]}{\langle x^{s}-1\rangle}$ does not form a  ring. However, it still behaves as a left $R[x:\Theta]$-module in the later case. The scalar  multiplication  in this context   can be given as follows: suppose    $s(x), t(x)\in R[x:\Theta]$,\[s(x)(t(x)+\langle x^s-1\rangle)=s(x)t(x)+\langle x^s-1\rangle.\]
  \begin{definition} \label{def-2.12}
Suppose $\Theta \in \aut(R)$ and  $\mathfrak{C}_s$ is a linear code of length $s$.  If for any  $\textit{\textbf{b}} = (b_{0}, b_{1}, \ldots, b_{s-1}) \in \mathfrak{C}_s$, implies   $(\Theta(b_{s-1}), \Theta(b_{0}), \ldots, \Theta(b_{s-2}))\linebreak \in \mathfrak{C}_s$, then it refers to a  $\Theta$-cyclic code over $R$.
\end{definition}
It is possible to identify 
any codeword  $\textit{\textbf{b}}=(b_0,b_1,\ldots,b_{s-1})\in \mathfrak C_s$  as a polynomial \linebreak  $(b_0+b_1x+b_2x^2+\cdots+b_{s-1}x^{s-1})\in  \frac{R[x:\Theta]}{\langle x^s-1\rangle}$. Consequently,  any $\Theta$-cyclic  code of length $s$ It is called  a $\Theta$-cyclic code over $R$  considered   as a left $R[x:\Theta]$-submodule of $\frac{R[x:\Theta]}{\langle x^s-1\rangle}.$\par
\vspace{.2cm}
\indent  Throughout  this work , we examine the ring $R = \mathbb{F}_q + u\mathbb{F}_q + v\mathbb{F}_q + uv\mathbb{F}_q$ with  $u^2 = u$, $v^2 = v$, and $uv = vu$, where  $q = p^m$ for an odd prime $p$ and $m$ being a positive integer.  Recall that,   $\theta\in \aut(\mathbb F_{q})$ is  given by $\theta(g)=g^{p^i}$, where $1\le i\le m$. Let us   define a map    $\Theta$ on   $R$  in the following way: 
\begin{align*}
    \Theta (a_0+ua_1+va_2+uva_3)& = \theta(a_0) + u \theta(a
_1)+v\theta(a_2)+uv\theta(a_3)\\
    &=a_0^{p^i}+ua_1^{p^i}+va_2^{p^i}+uv a_3^{p^i},
\end{align*}
where $a_0,a_1,a_2,a_3\in \mathbb F_{q}$.  By the definition of $\theta$,  it can be seen that $\Theta\in \aut(R) $.\par
\vspace{.2cm}
 Suppose 
\begin{align*}
&\kappa_1 = 1-u-v+uv, ~
    \kappa_2= u-uv,\\
    &\kappa_3=v-uv, ~
    \kappa_4=uv, 
  \end{align*}
  then 
 
 \begin{enumerate}
     \item $\kappa_1 +\kappa_2+\kappa_3+\kappa_4 = 1.$
     \item $\kappa_i^2=\kappa_i $.
     \item $\kappa_{i}\cdot \kappa_{j} = 0$ for $i\neq j.$
\end{enumerate}
      
 It is clear that   $\{\kappa_1, \kappa_2, \kappa_3 , \kappa_4\} $ forms an  idempotent set of  $R$.
By using this idempotent set, we can express $R=\kappa_{1}R\oplus \kappa_2R\oplus\kappa_3R\oplus\kappa_4R.$ If  $b\in R$, then    \[b = a_0+ua_1+va_2+uva_3 = \kappa_1n_1+\kappa_2n_2+\kappa_3n_3+\kappa_4n_4,\] where
$n_1=a_0,~ n_2=a_0+a_1,~ n_3=a_0+a_2$ and $n_4=a_0+a_1+a_2+a_3$.\par
\vspace{.3cm}
Suppose  $\mathfrak C_s$ is a  linear codes of length $s$ over  $R$ and consider the sets 
\begin{align*}
    \mathfrak C_{s,1}&=\{\textit{{\textbf n}}_1\in \mathbb F_{q}^s, \kappa_1\textit{{\textbf n}}_1+\kappa_2\textit{{\textbf n}}_2+\kappa_3\textit{{\textbf n}}_3+\kappa_4\textit{{\textbf n}}_4\in \mathfrak C_s:~\textit{\textbf n}_2,\textit{\textbf n}_3,\textit{\textbf n}_4\in \mathbb F_{q}^s\},\\
   \mathfrak C_{s,2}&=\{\textit{{\textbf n}}_2\in \mathbb F_{q}^s, \kappa_1\textit{{\textbf n}}_1+\kappa_2\textit{{\textbf n}}_2+\kappa_3\textit{{\textbf n}}_3+\kappa_4\textit{{\textbf n}}_4\in \mathfrak C_s:~\textit{\textbf n}_1,\textit{\textbf n}_3,\textit{\textbf n}_4\in \mathbb F_{q}^s\},\\
  \mathfrak C_{s,3}&=\{\textit{{\textbf n}}_3\in \mathbb F_{q}^s, \kappa_1\textit{{\textbf n}}_1+\kappa_2\textit{{\textbf n}}_2+\kappa_3\textit{{\textbf n}}_3+\kappa_4\textit{{\textbf k}}_4\in \mathfrak C_s:~\textit{\textbf n}_1,\textit{\textbf n}_2,\textit{\textbf n}_4\in \mathbb F_{q}^s\},\\
    \mathfrak C_{s,4}&=\{\textit{{\textbf n}}_4\in \mathbb F_{q}^s, \kappa_1\textit{{\textbf n}}_1+\kappa_2\textit{{\textbf n}}_2+\kappa_3\textit{{\textbf n}}_3+\kappa_4\textit{{\textbf n}}_4\in \mathfrak C_s:~\textit{\textbf n}_1,\textit{\textbf n}_2,\textit{\textbf n}_3\in \mathbb F_{q}^s\}.
\end{align*}
By the definition of  $ \mathfrak C_{s,i}$, for $i=1,2,3,4$, it is clear that $ \mathfrak C_{s,i}$  are linear codes of length $s$ over $\mathbb F_{q}.$  Moreover,  \[\mathfrak C_s=\kappa_1\mathfrak C_{s,1}\oplus\kappa_2 \mathfrak C_{s,2}\oplus\kappa_3\mathfrak C_{s,3}\oplus\kappa_4 \mathfrak C_{s,4}.\]\par
Now, let us introduce a Gray map $\phi_1: R\to  \mathbb F_q^{4}$  given by  \[\phi(b)= \phi_1(\kappa_1n_1+\kappa_2n_2+\kappa_3n_3+\kappa_4n_4)=(n_1,n_2,n_3,n_4)M,\] where $M\in$ GL(4, $\mathbb{F}_q$) satisfying  $MM^T=\lambda I_4$,     $\lambda (\neq 0)\in \mathbb{F}_q$,  $I_4$ is identity matrix of order $4$ and   $M^T$ denotes the transpose of matrix $M$.   For convenience, we write  $(n_1,n_2,n_3,n_4)M=bM.$ \par
It is  straightforward to confirm that $\phi$ is $\mathbb F_q$-linear.  A natural extension of this Gray map is   $\Phi_1$ from $R^s$ to $\mathbb F_q^{4s}$, which is  given by 
\begin{multline*}  
\Phi_1 (b_{0},b_{1},\ldots,b_{s-1}) = (n_{1,0},n_{2,0},n_{3,0},n_{4,0},n_{1,1},n_{2,1},n_{3,1},n_{4,1},\ldots n_{1,s-1},n_{2,s-1},n_{3,s-1},n_{4,s-1})M\\=(b_0M,b_1M,\ldots ,b_{s-1}M),
\end{multline*}
where $b_{i} = \kappa_1n_{1,i}+\kappa_2n_{2,i}+\kappa_3n_{3,i}+\kappa_4n_{4,i}\in R$, for $i = 0,1,2\ldots,s-1$.\vskip 3pt 
Let   $w_L$ and  $w_H$  denote the Lee and  Hamming Weight, respectively. Then  the Lee weight of any arbitrary element 
${b=\kappa_1n_1+\kappa_2n_2+\kappa_3n_3+\kappa_4n_4}\in R $  is defined as 
\begin{align*}
w_L(b)&=w_L(\kappa_1n_1+\kappa_2n_2+\kappa_3n_3+\kappa_4n_4)\\
&=w_H(\phi_1(\kappa_1n_1+\kappa_2n_2+\kappa_3n_3+\kappa_4n_4))\\
&=w_H(n_1M)+w_H(n_2M)+w_H(n_3M)+w_H(n_4M).
\end{align*} 
Furthermore, if   $\textit{\textbf {b}}$ and $ \textit{\textbf {b}}^\prime$  are any two distinct elements in $R^s$, then Lee distance  is defined as  \[ d_{L}(\textit{\textbf  b} , \textit{\textbf b}^\prime ) = w_{L}( \textit{\textbf  b} - \textit{\textbf b}^\prime) = w_{H}( \Phi_1 (\textit{\textbf  b} - \textit{\textbf b}^\prime)) = d_{H}(\Phi_1 (\textit{{\textbf b}}), \Phi_1 (\textit{{\textbf  b}}^\prime)).\] 
The Lee weight of any linear code  $\mathfrak C_s$  over $R$ is the least  Lee weight  codeword among all the possible pair of  distinct codewords in $\mathfrak C_s.$\par
\vspace{.2cm}

\indent
 We define \[\mathfrak{R}_{r,s}= \frac{\mathbb F_{q}[x:\theta]}{\langle x^{r}-1\rangle}\times \frac{ R[x:\Theta]}{\langle x^{s}-1\rangle}~~~ \mbox{and} ~~~
  \mathbb F_{q}R= \{(g,b) , g\in \mathbb{F}_{q},~b\in R\}.\]\par 
 
   \vskip .3cm
   The set  $\mathbb F_{q}R$ constitutes  a ring when   addition and multiplication are operated with  component-wise . Now, a homomorphism $\eta: R\to \mathbb F_q$ is given by   \[\eta (a_0+ua_1+va_2+uva_3)=a_0.\] where $a+ub+vc+uvw\in R.$ Moreover,
     $R$ acts   on $\mathbb F_qR$ by multiplication $\star$ as  \[r\star (g,b)=(\eta(r)  g,r  b).\]
We can  extend this scalar multiplication $\star$ on  
  $\mathbb{F}_{q}^r\times R^s$ as    \[r\star \textit{\textbf d}=(\eta(r)g_{0},\eta (r) g_{1},\ldots,\eta (r) g_{r-1}, r b_{0}, r b_{1},\ldots,r b_{s-1}),\] where $\textit{\textbf d} = (g_0,g_1,\ldots,g_{r-1}, b_0,b_1,\ldots , b_{s-1})\in \mathbb{F}^r_q\times R^s$. Indeed, the multiplication  $\star $ makes $\mathbb{F}_{q}^r\times R^s$   an $R$-module.\par
  \begin{definition}  Suppose   
 $\mathfrak{D}$  is a non empty $R$  submodule   of $\mathbb{F}_{q}^r\times R^s$, then it  refers to a   linear code  of block length $(r,s)$  over $\mathbb F_qR$.
\end{definition}
  \vspace{.3cm}
Let $(g_0,g_1,\ldots,g_{r-1}, b_0,b_1,\ldots , b_{s-1})\in \mathbb F_q^r\times R^s$ and   $\rho_{(\theta, \Theta)}$ be   a map from $\mathbb F_q^r\times R^s$ to $\mathbb F_q^r\times R^s$  given by
  \[\rho_{(\theta,\Theta)}(g_0,g_1,\ldots,g_{r-1}, b_0,b_1,\ldots , b_{s-1})=(\theta (g_{r-1}),\theta (g_0),\ldots,\theta (g_{r-2}),\Theta  (b_{s-1}),\Theta (b_0),\ldots , \Theta
  (b_{s-2})).\] 
  We refer to this map as the   $(\theta, \Theta )$-cyclic shift on $\mathbb F_q^r\times R^s.$
  \begin{definition}\label{def-2.13}
      A linear code   $\mathfrak {D}$ of block length $(r,s)$   over $\mathbb F_{q}R$, which is closed  under the map $\rho_{(\theta, \Theta )}$  is refers   to a     $(\theta,\Theta)$-cyclic code, i.e.,   if any   $\textbf{d}=(g_{0},g_{1},...,g_{r-1}, b_{0},b_{1},...,b_{s-1})\in \mathfrak{D},$
      implies \[\rho_{\theta,\Theta}(\textit{\textbf{d}})~=~(\theta(g_{r-1}),\theta(g_{0}),...,\theta(g_{r-2}), \Theta(b_{s-1}),\Theta(b_{0}),...,\Theta(b_{s-2}))\in \mathfrak{D}.\]  
      \end{definition}
  \vspace{.3cm}
Now, We shall  give  a  Gray map  over  $\mathbb{F}_qR$, along with providing  a natural extension of this map from $\mathbb F_q^r\times R^s\to \mathbb F_q^{r+4s}.$  \par
Suppose   $(g,b) = (g,\kappa_1n_1+\kappa_2n_2+\kappa_3n_3+\kappa_4n_4)\in \mathbb{F}_qR.$ Consider the   map  $\phi$ from $\mathbb{F}_qR$ to $\mathbb{F}^5_q$  given by  \[\phi(g,b)=(g, \eta(b))=(g,(n_1,n_2,n_3,n_4)M). 
\]   This map is $\mathbb{F}_q$-linear. 
Now, a natural extension map   $\Phi$ from   $\mathbb{F}^r_q\times R^s$ to $\mathbb F_q^{r+4s}$  is  given by 
 \begin{align*}
(g_0,g_1,\ldots,g_{r-1},b_0,b_1,\ldots,b_{s-1})\to &(g_0,g_1,\ldots,g_{r-1},(n_{1,0},n_{2,0},n_{3,0},n_{4,0},
n_{1,1},n_{2,1},n_{3,1},n_{4,1},\\
&\hspace{2cm}\ldots, n_{1,s-1},n_{2,s-1},n_{3,s-1},n_{4,s-1})M)\\
&=(g_0,g_1,\ldots,g_{r-1}, b_0M,b_1M,\ldots,b_{s-1}M),
 \end{align*}
    where  $b_i=\kappa_1n_{1,i}+\kappa_2n_{2,i}+\kappa_3n_{3,i}+\kappa_4n_{4,i}$, for $i=0,1,\ldots ,s-1,$ $(g_0,g_1,\ldots,g_{r-1})\in \mathbb{F}^r_q$ and \linebreak $(b_0,b_1,\ldots,b_{s-1})\in R^s.$
    \vskip 2pt
    
Let  $(\textit{\textbf {g}},\textit{\textbf b)}\in \mathbb F^r_{q}\times R^{s}$. Then   $w_{L}(\textit{\textbf {g}},\textit{\textbf b)})= w_{H}(\textit{\textbf {g}})+w_{L}\textit{(\textbf b})$.  Let $\textit{\textbf{d}}_1, \textit{\textbf{d}}_2\in \mathbb F^{r}_{q}\times R^{s}$ be any two codewords in $\mathfrak{D}$. Then  the Lee distance  is given by   \[d_{L}(\textit{\textbf d}_{1}, \textit{\textbf d}_{2})=w_{L}(\textit{\textbf d}_{1}- \textit{\textbf d}_{2})=w_{H}( \Phi(\textit{\textbf d}_{1}- \textit{\textbf d}_{2}))= d_{H}(\Phi(\textit{\textbf d}_{1}), \Phi(\textit{\textbf{d}}_{2})).\]
 For the rest of this paper, we take   $\mbox M= \frac{1}{2}
 \begin{pmatrix}
       1& 1 & 1 &1 \\
    1 & -1 & 1& -1 \\
1 & 1& -1 & -1\\
   1 & -1& -1 & 1 \\
 \end{pmatrix}
 \in {\mbox {GL}}(4,\mathbb{F}_q)$. It is clear that ${MM}^T= I_{4\times4}.$
\begin{proposition}\label{prop-3.2}\cite{DPBUC20}
Let the Gray map $\Phi$. 
\begin{enumerate}
\item Then $\Phi$ is a distance-preserving and  $\mathbb F_q$-linear.
\item    $\Phi(\mathfrak{D})$ is a    $[r+4s,k,d_H]$ linear code, where $d_{L}=d_{H}$ over  $\mathbb F_{q}$, for a linear code $\mathfrak D$ of block length $(r,s)$ over $\mathbb F_qR$ with $|\mathfrak D|= q^k.$

\end{enumerate}
\end{proposition}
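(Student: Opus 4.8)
The plan is to reduce the whole statement to two elementary properties of the component map $\phi_1\colon R\to\mathbb F_q^{4}$, $\phi_1(\kappa_1n_1+\kappa_2n_2+\kappa_3n_3+\kappa_4n_4)=(n_1,n_2,n_3,n_4)M$: that it is an $\mathbb F_q$-linear bijection, and that, by the very definition of the Lee weight, it transports $w_L$ on $R$ to $w_H$ on $\mathbb F_q^{4}$.

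First I would verify that $\phi_1$ is $\mathbb F_q$-linear and bijective. The coordinate map $b=a_0+ua_1+va_2+uva_3\mapsto(n_1,n_2,n_3,n_4)$ attached to the decomposition $R=\kappa_1R\oplus\kappa_2R\oplus\kappa_3R\oplus\kappa_4R$ is an isomorphism of $\mathbb F_q$-vector spaces $R\to\mathbb F_q^{4}$: it is $\mathbb F_q$-linear because $n_1,\dots,n_4$ are $\mathbb F_q$-linear combinations of $a_0,\dots,a_3$, and it is invertible via $a_0=n_1$, $a_1=n_2-n_1$, $a_2=n_3-n_1$, $a_3=n_4-n_2-n_3+n_1$. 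Right multiplication by the fixed matrix $M\in\mathrm{GL}(4,\mathbb F_q)$ is an $\mathbb F_q$-linear automorphism of $\mathbb F_q^{4}$, so $\phi_1$, being their composite, is an $\mathbb F_q$-linear bijection; note that only the invertibility of $M$ is used here, not the stronger relation $MM^{T}=I_4$. Since $\Phi$ acts as the identity on the $\mathbb F_q^{r}$-block and coordinatewise via $\phi_1$ on the $R^{s}$-block, $\Phi\colon\mathbb F_q^{r}\times R^{s}\to\mathbb F_q^{r+4s}$ is likewise $\mathbb F_q$-linear and bijective. I would also record that for $\lambda\in\mathbb F_q$ one has $\lambda\star(\mathbf g,\mathbf b)=(\lambda\mathbf g,\lambda\mathbf b)$, so restricting scalars from $R$ to $\mathbb F_q$ turns $\mathbb F_q^{r}\times R^{s}$ into an $\mathbb F_q$-space and any linear code $\mathfrak D$ of block length $(r,s)$ into an $\mathbb F_q$-subspace.

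Next I would establish the weight identity $w_L(\mathbf d)=w_H(\Phi(\mathbf d))$ for every $\mathbf d=(\mathbf g,\mathbf b)\in\mathbb F_q^{r}\times R^{s}$: by definition $w_L(\mathbf d)=w_H(\mathbf g)+w_L(\mathbf b)$ with $w_L(\mathbf b)=\sum_{i=0}^{s-1}w_H(b_iM)$, whereas $\Phi(\mathbf d)=(\mathbf g,b_0M,\dots,b_{s-1}M)$, so $w_H(\Phi(\mathbf d))=w_H(\mathbf g)+\sum_{i=0}^{s-1}w_H(b_iM)$, and the two sides coincide. Combining this with the $\mathbb F_q$-linearity of $\Phi$ gives, for any $\mathbf d_1,\mathbf d_2$,
\[
d_L(\mathbf d_1,\mathbf d_2)=w_L(\mathbf d_1-\mathbf d_2)=w_H(\Phi(\mathbf d_1-\mathbf d_2))=w_H(\Phi(\mathbf d_1)-\Phi(\mathbf d_2))=d_H(\Phi(\mathbf d_1),\Phi(\mathbf d_2)),
\]
which proves part (1). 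For part (2), $\Phi(\mathfrak D)$ is the image of an $\mathbb F_q$-subspace under an $\mathbb F_q$-linear map, hence a linear code of length $r+4s$; injectivity of $\Phi$ gives $|\Phi(\mathfrak D)|=|\mathfrak D|=q^{k}$, so $\dim_{\mathbb F_q}\Phi(\mathfrak D)=k$; and the distance preservation just shown forces the minimum Hamming distance of $\Phi(\mathfrak D)$ to equal the minimum Lee distance $d_L$ of $\mathfrak D$.

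I do not expect a genuine obstacle here — the argument is essentially bookkeeping once the two properties of $\phi_1$ are in hand — but the one point demanding care is the block-wise reading of $M$ in the definitions of $\Phi$ and $\Phi_1$: the expression $(n_{1,0},\dots,n_{4,s-1})M$ must be interpreted (as the paper's own equality $(b_0M,\dots,b_{s-1}M)$ indicates) by applying $M$ separately to each consecutive $4$-tuple of coordinates, so that $w_H(\Phi(\mathbf d))$ splits cleanly as $w_H(\mathbf g)+\sum_{i=0}^{s-1}w_H(b_iM)$. This additivity over the $R$-coordinates is exactly what makes the Lee weight on $R^{s}$ coordinatewise and the weight identity $w_L=w_H\circ\Phi$ hold verbatim, and hence what makes $\Phi$ distance-preserving.
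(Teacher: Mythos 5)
Your proof is correct, and the paper itself offers no argument to compare against: Proposition \ref{prop-3.2} is simply imported by citation from \cite{DPBUC20}. Your reduction to the two facts that $\phi_1$ is an $\mathbb F_q$-linear bijection and that $w_L=w_H\circ\Phi$ by definition is exactly the standard argument for such Gray-map statements, and your side remark that only the invertibility of $M$ (not $MM^T=\lambda I_4$) is needed here is accurate --- the orthogonality condition is only used later, in Theorem \ref{theo-6.1}, to transport duality.
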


    Let  $\textit{\textbf d}=(g_0,g_1,\ldots,g_{r-1}, b_0,b_1,\ldots, b_{s-1})\in \mathbb{F}^r_q\times R^s$. Every element of     $\mathbb{F}_{q}^r\times R^s$ can be identified by a pair of polynomials in  $\mathfrak{R}_{r,s}$ as follows:
    \begin{align*}
(g_{0},g_{1},\ldots,g_{r-1}, b_{0},b_{1},\ldots,b_{s-1})\to& (g_{0}+g_{1}x+\cdots+g_{r-1}x^{r-1}, b_{0}+b_{1}x+\cdots+b_{s-1}x^{s-1})\\
&= (g(x), b(x)),
     \end{align*}
    where $g(x)\in \frac{\mathbb F_{q}[x:\theta]}{\langle x^{r}-1\rangle}$ and $b(x) \in \frac{ R[x:\Theta]}{\langle x^{s}-1\rangle}.$ This map   is called  the polynomial identification of  elements from  $\mathbb{F}_{q}^r\times R^s$ to $\mathfrak{R}_{r,s}$. This   shows that there is   a bijection  between  $\mathbb{F}_{q}^r\times R^s$ and $\mathfrak{R}_{r,s}$.\par 
    \vspace{.3cm}
Let $(g(x),b(x))\in \mathfrak{R}_{r,s}$ and  $r(x)=r_0+r_1x+\cdots+r_{n-1}x^{n-1}\in R[x:\Theta]$. The scalar multiplication on $\mathfrak{R}_{r,s}$ is given by \[r(x)\star(g(x),b(x))=(\eta (r(x))g(x), r(x)b(x)),\] where $\eta (r(x))=\eta (r_0)+\eta (r_1)x+\cdots +\eta (r_{n-1})x^{n-1}$. Moreover, when we consider the scalar multiplication $``\star"$ the ring $\mathfrak{R}_{r,s}$ forms a left $R[x:\Theta]$-module. \par 
 \begin{proposition}\label{prop-2.14}
      Assume  $\mathfrak{D}$ is  a linear code of block length $(r,s)$   over $\mathbb F_{q}R$. If  $\mathfrak D$ identifies as a polynomial such that it is  a left $R[x:\Theta]$-submodule of   $\mathfrak R_{r,s}$, then $\mathfrak D$ is called $(\theta,\Theta )$-cyclic code over $\mathbb F_qR$.
      \end{proposition}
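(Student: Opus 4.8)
The plan is to prove the two implications packaged in the statement: a linear code $\mathfrak{D}$ of block length $(r,s)$ over $\mathbb{F}_qR$ is closed under the shift $\rho_{(\theta,\Theta)}$ if and only if, under the polynomial identification, $\mathfrak{D}$ is a left $R[x:\Theta]$-submodule of $\mathfrak{R}_{r,s}$. First I would fix notation: identify a codeword $\textbf{d}=(g_0,\dots,g_{r-1},b_0,\dots,b_{s-1})$ with the pair $\big(g(x),b(x)\big)=\big(\sum g_ix^i,\sum b_jx^j\big)\in\mathfrak{R}_{r,s}$, as done in the excerpt. The decisive computation is to show that multiplying this pair by the monomial $x$ (acting via the $\star$-scalar multiplication on $\mathfrak{R}_{r,s}$) reproduces exactly the image of $\textbf{d}$ under $\rho_{(\theta,\Theta)}$.

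Concretely, the key step is the identity
\[
x\star\big(g(x),b(x)\big)=\big(\eta(x)g(x),\,x\,b(x)\big)=\big(x\,g(x),\,x\,b(x)\big)\ \ \text{in }\mathfrak{R}_{r,s},
\]
since $\eta(x)=x$. Then, using the skew multiplication rule $(ax^i)(a'x^j)=a\Theta^i(a')x^{i+j}$ in $R[x:\Theta]$ (and likewise with $\theta$ in $\mathbb{F}_q[x:\theta]$), one computes $x\cdot b(x)=\sum_{j}\Theta(b_j)x^{j+1}$, and reducing modulo $\langle x^s-1\rangle$ wraps the top coefficient around, yielding $\Theta(b_{s-1})+\Theta(b_0)x+\cdots+\Theta(b_{s-2})x^{s-1}$; the analogous statement holds for $x\cdot g(x)$ modulo $\langle x^r-1\rangle$. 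Comparing coefficient vectors, $x\star\big(g(x),b(x)\big)$ is precisely the polynomial identification of $\rho_{(\theta,\Theta)}(\textbf{d})$. Hence $\mathfrak{D}$ being closed under $\rho_{(\theta,\Theta)}$ is equivalent to $\mathfrak{D}$ being closed under multiplication by $x$.

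To finish, I would upgrade "closed under multiplication by $x$" to "closed under multiplication by every element of $R[x:\Theta]$". The forward direction: if $\mathfrak{D}$ is $\rho_{(\theta,\Theta)}$-invariant, then by iterating it is closed under $x^k\star(\cdot)$ for all $k\ge 0$; since $\mathfrak{D}$ is already an $R$-submodule (it is a linear code over $\mathbb{F}_qR$, hence stable under the $\star$-action of constants $r\in R$), and since a general $r(x)=\sum r_kx^k\in R[x:\Theta]$ acts by $r(x)\star(g(x),b(x))=\sum r_k\star\big(x^k\star(g(x),b(x))\big)$, closure under $x$-multiplication plus $R$-linearity plus additive closure gives closure under all of $R[x:\Theta]$, i.e. $\mathfrak{D}$ is a left submodule. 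The converse is immediate: a left $R[x:\Theta]$-submodule is in particular closed under multiplication by the element $x\in R[x:\Theta]$, which by the identity above is the shift $\rho_{(\theta,\Theta)}$.

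I do not expect a serious obstacle here; the proposition is essentially a bookkeeping translation. The one point that needs mild care is making sure the monomial $x$ genuinely lies in the center enough for the reductions modulo $\langle x^r-1\rangle$ and $\langle x^s-1\rangle$ to behave — but this is exactly why the module structure (rather than ring structure) is used on the quotients, and Proposition 2.2 guarantees the relevant two-sided ideal / well-definedness when $\ord(\theta)\mid r$ and $\ord(\Theta)\mid s$; in the general case the quotients are still left modules and the $\star$-action is well defined as set up in the excerpt, so the argument goes through verbatim. The only other thing to state cleanly is that $\eta(x)=x$ (the ring homomorphism $\eta:R\to\mathbb{F}_q$ fixes the indeterminate), which makes the $\mathbb{F}_q$-component of the $\star$-action coincide with ordinary multiplication by $x$ in $\mathbb{F}_q[x:\theta]/\langle x^r-1\rangle$.
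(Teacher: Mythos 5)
Your proposal is correct and follows essentially the same route as the paper: the central step in both is the computation that $x\star(g(x),b(x))$ reduces, modulo $\langle x^r-1\rangle$ and $\langle x^s-1\rangle$, to the polynomial identification of $\rho_{(\theta,\Theta)}(\textbf{d})$, after which linearity upgrades closure under $x$ to closure under all of $R[x:\Theta]$. You are somewhat more explicit than the paper about the bootstrapping step and about stating both directions of the equivalence, but the argument is the same.
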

 \begin{proof}
          Consider any arbitrary element  $\textit{\textbf{d}}=(g_{0},g_{1},\ldots ,g_{r-1}, b_{0},b_{1},\ldots ,b_{s-1})\in \mathfrak{D}\subseteq\mathbb F_q^r\times R^s.$   So  $\textit{\textbf d}$ can be identified by  polynomial $(g_0+g_1x+\cdots+g_{r-1}x^{r-1},b_0+b_1x+\cdots +b_{s-1}x^{s-1})\in \mathfrak R_{r,s}.$
          Now, we have 
          \begin{align*}  
          x\star d(x)&=(xg_{0}+xg_{1}x+\cdots +xg_{r-1}x^{r-1}, xb_{0}+xb_{1}x+\cdots +xb_{s-1}x^{s-1})\\
          &=(\theta(g_{r-1})+\theta(g_{0})x+\cdots +\theta(g_{r-2})x^{r-1}, \Theta(b_{s-1})+\Theta(b_{0})x+\cdots+\Theta(b_{s-2})x^{s-1})\in \mathfrak R_{r,s}.
          \end{align*}
        Clearly, the corresponding codeword  of $x\star d(x)$ is  $(\theta(g_{r-1}),\theta(g_{0}),...,\theta(g_{r-2}), \Theta(b_{s-1}),\Theta(b_{0}),...,\linebreak\Theta(b_{s-2}))$ and  it is a $(\theta,\Theta)$-cyclic shift of codeword $\textit{\textbf{d}}\in \mathfrak D.$ Hence, by the linearity   of $\mathfrak D$, it can be easily seen that the polynomial representation of $\mathfrak D$ is a $R[x:\Theta ]$-submodule of $\mathfrak R_{r,s}.$
         
          \end{proof}
          \vspace{.3cm}
           Let $\textit{\textbf d}=(g_{0},g_{1},\ldots,g_{r-1}, b_{0},b_{1},\ldots,b_{s-1})$ and ${\textit{\textbf d}^\prime}=(g'_{0},g'_{1},\ldots,g'_{r-1}, b'_{0}, b'_{1},\ldots,b'_{s-1})\in $ $\mathbb F_q^r\times R^s.$ The inner product is defined as \[\textit{\textbf d}\cdot  \textit{\textbf d}^\prime=\kappa_1\sum_{i=0}^{r-1}g_{i} g'_{i}+\sum_{j=0}^{s-1}b_{j} b'_{j}\in R.\]
          
The  dual of a linear code $\mathfrak D$   is denoted by  $\mathfrak D^\perp$, and   defined as       \[{\mathfrak{D}^ \perp} =\{{\textit{\textbf d}}^\prime\in \mathbb{F}_{q}^r\times R^s\mid {\textit{\textbf d}}\cdot {\textit{\textbf d}}'=0~\mbox{$\forall$}~{\textit{\textbf d}}\in\mathfrak{D}\}.\]  Moreover, if $\mathfrak{D} $ is equal to  $\mathfrak{D}^\perp$, then  
  $\mathfrak{D}$ is  self-dual and  if $\mathfrak{D}\subseteq \mathfrak{D}^\perp$, then  
  $\mathfrak{D}$ is called self-orthogonal.\par
 Now we establish the relation between $\mathfrak D$ and its dual $\mathfrak D^\perp.$

  \begin{proposition}
      Suppose  $ \ord(\theta)\mid  r,~ \ord(\Theta )\mid  s$ and $\mathfrak{D}$ is   a $(\theta, \Theta )$-cyclic  code of  block length $(r,s)$ over $\mathbb F_qR$, then $\mathbf{\mathfrak{D}^\perp}$  is also a $(\theta, \Theta)$-cyclic  code of the same  block length over $\mathbb F_{q}R$.
      \end{proposition}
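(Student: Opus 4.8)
The plan is to prove directly that $\mathfrak{D}^\perp$ is stable under the $(\theta,\Theta)$-cyclic shift $\rho_{(\theta,\Theta)}$; since the dual of a linear code is linear, this is all that is needed. I would first set up two preliminary facts. The first is that $\rho_{(\theta,\Theta)}$ has finite order dividing $n:=\lcm(r,s)$: on $\mathbb{F}_q^r$ the cyclic shift commutes with the coordinatewise application of $\theta$, so $\rho_{(\theta,\Theta)}^{\,n}$ acts on the first block as $\theta^{\,n}$ composed with an $n$-fold cyclic shift, and both factors are the identity because $r\mid n$ and $\ord(\theta)\mid r\mid n$; the analogous reasoning on $R^s$ (using $s\mid n$ and $\ord(\Theta)\mid s\mid n$) gives $\rho_{(\theta,\Theta)}^{\,n}=\mathrm{id}$ on all of $\mathbb{F}_q^r\times R^s$. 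The second is the compatibility of the mixed inner product with the shift: using that $\Theta$ fixes $u$ and $v$ (hence fixes the idempotent $\kappa_1$) and restricts to $\theta$ on the natural copy of $\mathbb{F}_q$ inside $R$, together with $\Theta$ being a ring homomorphism, a direct coordinate computation should yield
$$\rho_{(\theta,\Theta)}(\mathbf{x})\cdot\rho_{(\theta,\Theta)}(\mathbf{y})=\Theta\bigl(\mathbf{x}\cdot\mathbf{y}\bigr)\qquad\text{for all }\mathbf{x},\mathbf{y}\in\mathbb{F}_q^r\times R^s.$$

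With these in hand the main step is short. Fix $\mathbf{d}'\in\mathfrak{D}^\perp$ and an arbitrary $\mathbf{d}\in\mathfrak{D}$. Iterating Definition~\ref{def-2.13}, I obtain $\rho_{(\theta,\Theta)}^{\,n-1}(\mathbf{d})\in\mathfrak{D}$, whence $\rho_{(\theta,\Theta)}^{\,n-1}(\mathbf{d})\cdot\mathbf{d}'=0$ by definition of the dual. Applying the inner-product identity to $\mathbf{x}=\rho_{(\theta,\Theta)}^{\,n-1}(\mathbf{d})$ and $\mathbf{y}=\mathbf{d}'$, and then using $\rho_{(\theta,\Theta)}^{\,n}=\mathrm{id}$, I would get
$$\mathbf{d}\cdot\rho_{(\theta,\Theta)}(\mathbf{d}')=\rho_{(\theta,\Theta)}^{\,n}(\mathbf{d})\cdot\rho_{(\theta,\Theta)}(\mathbf{d}')=\Theta\!\left(\rho_{(\theta,\Theta)}^{\,n-1}(\mathbf{d})\cdot\mathbf{d}'\right)=\Theta(0)=0,$$
so $\rho_{(\theta,\Theta)}(\mathbf{d}')$ is orthogonal to every codeword of $\mathfrak{D}$, i.e. $\rho_{(\theta,\Theta)}(\mathbf{d}')\in\mathfrak{D}^\perp$. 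Together with the linearity of $\mathfrak{D}^\perp$ this shows $\mathfrak{D}^\perp$ is a $(\theta,\Theta)$-cyclic code of block length $(r,s)$.

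I expect the only genuinely delicate point to be verifying the identity $\rho_{(\theta,\Theta)}(\mathbf{x})\cdot\rho_{(\theta,\Theta)}(\mathbf{y})=\Theta(\mathbf{x}\cdot\mathbf{y})$: one has to keep track of the $\theta$ acting on the $\mathbb{F}_q$-block against the $\Theta$ acting on the $R$-block, and in particular check that $\Theta(\kappa_1 t)=\kappa_1\theta(t)$ for $t\in\mathbb{F}_q$ so that the $\kappa_1$ gluing the two halves of the mixed inner product is handled correctly. Everything else---the finite order of $\rho_{(\theta,\Theta)}$, closure of $\mathfrak{D}$ under iterated shifts, and linearity of $\mathfrak{D}^\perp$---is routine. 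One could alternatively run the whole argument at the polynomial level in $\mathfrak{R}_{r,s}$, but the coordinate version above seems the most transparent.
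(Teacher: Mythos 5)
Your proof is correct and follows essentially the same route as the paper: both arguments apply $\rho_{(\theta,\Theta)}^{\,l-1}$ (with $l=\lcm(r,s)$) to a codeword of $\mathfrak{D}$, use orthogonality against $\mathbf{d}'\in\mathfrak{D}^\perp$, and then apply $\Theta$ to both sides (your identity $\rho(\mathbf{x})\cdot\rho(\mathbf{y})=\Theta(\mathbf{x}\cdot\mathbf{y})$ is precisely the paper's ``applying $\Theta$ to both sides,'' made explicit, and your check that $\Theta(\kappa_1 t)=\kappa_1\theta(t)$ does go through since $\Theta$ fixes $u$ and $v$). If anything, your formulation via the finite order of $\rho_{(\theta,\Theta)}$ and the compatibility identity is cleaner and more rigorous than the paper's coordinate computation.
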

      
  \begin{proof}   
   Suppose  ${\textit{\textbf d}}=(g_{0},g_{1},\ldots,g_{r-1}, b_{0},b_{1},\ldots,b_{s-1}) \in \mathfrak{D}$ and  
      since $\mathfrak{D}$ is a $(\theta, \Theta )$-cyclic code, so that     $( \theta (g_{r-1}) ,\theta (g_{0}),\ldots ,\theta (g_{r-2}), \Theta (b_{s-1}), \Theta (b_{0}), \ldots, \Theta (b_{s-2})) \in \mathfrak{D}$.
     Now, consider \linebreak${\textit{\textbf d}}'= (g'_{0},g'_{1},\ldots,g'_{r-1} , b'_{0},b'_{1},\ldots,b'_{s-1})  \in \mathfrak {D^\perp}$. 
      Then   ${\textit{\textbf d}}\cdot\textit{{\textbf d}}'= 0.$ \par 
    In order to check that   $\mathfrak{D}^\perp$ is also a   $(\theta , \Theta )$-cyclic code of block  length $(r,s)$ over $\mathbb F_qR$, it is enough to prove that 
   \[ \rho_{(\theta,\Theta)} ({\textit{\textbf d}}') = (\theta (g'_{r-1}),\ldots, \theta (g'_{r-2}), \Theta (b'_{s-1}),\ldots, \Theta (b'_{s-2})) \in \mathbf{\mathfrak{D}^\perp}.\]
     In other words, we need to have 
      \[{\textit{\textbf d}}\cdot \rho_{(\theta,\Theta)} ({\textit{\textbf d}}') = 0 ,\]
      i.e.,  \[ \kappa_1g_{0} \theta (g_{r-1}) + \kappa_1g_{1} \theta (g_{0}) +\cdots+ \kappa_1g_{r-1} \theta (g_{r-2}) +b_0 \Theta (b'_{s-1}) + b_{1} \Theta (b'_{0}) +\cdots+ b_{s-1} \Theta (b'_{s-2}) =0 .\] 
     As we have $\mathfrak D$ is a $(\theta,\Theta)$-cyclic code. So, we have 
       \[\rho_{(\theta,\Theta)} ({\textit{\textbf d}}) \in \mathfrak{D} \mbox{~and}~\ 
      \rho_{(\theta,\Theta)} ({\textit{\textbf d}}) \cdot {\textit{\textbf d}}'= 0.\]
       Let $l= \lcm(r,s)$. Then
       \[\rho ^{l-1}_{(\theta,\Theta )} ({\textit{\textbf d}})= (\theta ^{l-1}(g_{1}),\ldots,\theta ^{l-1} (g_{0}), \Theta ^{l-1}(b_{1}),\ldots, \Theta ^{l-1}(b_{0}))\in \mathfrak D.\]  
       Now, we get 
       \begin{align*}
           0&= \rho ^{l-1}_{(\theta , \Theta )} ({\textit{\textbf d}}) \cdot {\textit{\textbf d}}'\\
           &= \kappa_1(\theta ^{l-1}(g_{1})g'_{0} + \theta^{l-1}(g_{2})g'_{1} +\cdots + \theta ^{l-1} (g_{r-1})g_{r-2} + \theta ^{l-1}(g_{0}) g'_{r-1})\\
           &\hspace{2cm}+ (\Theta ^{l-1}(b_{1})b_0 + \Theta ^{l-1}(b_{2}) b'_{1} +\cdots+ \Theta ^{l-1}(b_{s-1}) b'_{s-2} + \Theta ^{l-1}(b_{0}) b_{s-1}).
       \end{align*}
       Since $ \ord(\theta)\mid r$ and $\ord(\Theta)\mid s$, we have  $\theta ^l$ and  $\Theta ^l$ are identity automorphisms and $\theta = \Theta $ over $\mathbb F_q. $ Now, applying $\Theta$  both side in the above equation
       \[\kappa_1(g_{1} \theta (g'_{0}) + g_{2} \theta (g'_{1}) +\cdots+ g_{r-2} \theta (g'_{r-1}) + g_{r-1} \theta (g'_{0}) ) + (b_{1} \Theta (b'_{0}) +\cdots+ b_{s-2} \Theta (b'_{s-1}) + b_{s-1} \Theta (b'_{0}) ) =0. \]
       Hence, ${\textit{\textbf d}} \cdot \rho _{(\theta, \Theta )} ({\textit{\textbf d}}') = 0 $. Thus, we infer that    $\rho_{(\theta,\Theta)} ({\textit{\textbf d}}) \in {\mathfrak{D}^\perp}$. This completes our proof.   
       \end{proof}

\section{ Algebraic structure of  $(\theta, \Theta)$-cyclic codes over $\mathbb F_qR$}

Let us  begin by providing the decomposed structure of  linear codes $\mathfrak  D$ of block length $(r,s)$ over  $\mathbb F_qR$. Let $\mathcal{D}_i$ be sets, for $i=1,2,3,4.$ We define \[\mathcal{D}_1\oplus\mathcal{D}_2\oplus\mathcal{D}_3\oplus\mathcal{D}_4= \{d_1+d_2+d_3+d_4\mid d_i\in \mathcal{D}_i~ {\mbox {for i= 1,2,3,4}}\}.\]
Since, $\sum_{i=1}^4\kappa_i=1$, so  any   $\textit{\textbf d}=(\textit{\textbf{g}},\textit{\textbf{b})}\in \mathbb{F}^r_q\times R^s$ can be expressed as   $\textit{\textbf d}=\sum_{i=1}^4\kappa_i(\textit{\textbf{g}}, \textit{\textbf{b}})\in \mathbb{F}^r_q\times R^s$, where $\textit{\textbf{g}}\in \mathbb{F}_q^r$ and $\textit{\textbf{b}}=\kappa_1\textit{{\textbf{n}}}_1+\kappa_2\textit{{\textbf{n}}}_2+\kappa_3\textit{{\textbf{n}}}_3+\kappa_4\textit{{\textbf{n}}}_4\in R^s.$ We define 

  \begin{align*}
    \mathfrak{D}_1&=\{(\textit{\textbf g},\textit{\textbf{n}}_1)\in \mathbb F_{q}^r\times \mathbb F_q^{s}\mid \textit{\textbf g}\in \mathbb F_{q}^r, ~{\textit{\textbf{n}}}_1\in \mathfrak C_{s,1}\},\\
    \mathfrak{D}_2&=\{(\textit{\textbf g},\textit{\textbf{n}}_2)\in \mathbb F_{q}^r\times \mathbb F_q^{s}\mid  \textit{\textbf g}\in \mathbb F_{q}^r, ~{\textit{\textbf{n}}}_2\in \mathfrak C_{s,2}\},\\
    \mathfrak{D}_3&=\{(\textit{\textbf g},\textit{\textbf{n}}_3)\in \mathbb F_{q}^r\times \mathbb F_q^{s}\mid  \textit{\textbf g}\in \mathbb F_{q}^r, ~{\textit{\textbf{n}}}_3\in \mathfrak C_{s,3}\},\\
    \mathfrak{D}_4&=\{(\textit{\textbf g},\textit{\textbf{n}}_4)\in \mathbb F_{q}^r\times \mathbb F_q^{s}\mid  \textit{\textbf g}\in \mathbb F_{q}^r, ~{\textit{\textbf{n}}}_4\in \mathfrak C_{s,4}\}.
\end{align*}  
It is evident that $\mathfrak{D}_i$, for $i=1,2,3,4$ is a   linear code of block length $(r,s)$ over $\mathbb{F}_q.$ Consequently, any linear code $\mathfrak{D}$ of block length $(r,s)$ over $\mathbb{F}_qR$ has a unique expression as    \[\mathfrak{D} = \kappa_1\mathfrak{D}_1\oplus\kappa_2\mathfrak{D}_2\oplus \kappa_3 \mathfrak{D}_3\oplus \kappa_4\mathfrak{D}_4.\]

Let $c(x),~ a(x)(\neq 0)$ be two polynomials in $\mathbb{F}_q[x:\theta].$ Then  $a(x)$ is called  the right divisor of $c(x)$,   
 if  $c(x)= d(x)a(x)$, where $d(x)\in \mathbb{F}_q[x:\theta]$ and denoted by $a(x)\mid_r c(x).$ Similarly, we can define left divisor of $c(x).$ 
 \begin{definition}
     
The polynomial $c(x)$ is central polynomial in  $\mathbb F_q[x:\theta]$, if 
$xc(x)=c(x)x$.  
 \end{definition}
\begin{proposition}\label{prop-2.4}\cite[Proposition-2.3]{JLU12}
  Suppose  $c(x)$, $d(x)\in \mathbb{F}_q[x:\theta]$ are monic central  polynomials.  Then $d(x)c(x) = c(x)d(x)$.
\end{proposition}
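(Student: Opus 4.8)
The plan is to prove that if $c(x)$ and $d(x)$ are monic central polynomials in $\mathbb{F}_q[x:\theta]$, then they commute, i.e., $d(x)c(x) = c(x)d(x)$. Recall that by the definition just above the statement, a polynomial $c(x)$ is central in $\mathbb{F}_q[x:\theta]$ precisely when $xc(x) = c(x)x$; I would first observe that this single relation, together with the obvious fact that constants from the fixed field $\mathbb{F}_q^{\theta}$ commute with everything, forces $c(x)$ to commute with every element of $\mathbb{F}_q[x:\theta]$ that can be built from $x$ and such constants. More precisely, the first step is to unpack what $xc(x) = c(x)x$ means coefficient-wise: writing $c(x) = \sum_i c_i x^i$, the identity $xc(x) = c(x)x$ becomes $\sum_i \theta(c_i) x^{i+1} = \sum_i c_i x^{i+1}$, so centrality with respect to $x$ is equivalent to $\theta(c_i) = c_i$ for all $i$, i.e. all coefficients of $c(x)$ lie in the fixed field $F := \mathbb{F}_q^{\theta} = \{a \in \mathbb{F}_q : \theta(a) = a\}$.

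Once this characterization is in hand, the argument is short: both $c(x)$ and $d(x)$ have all coefficients in $F$, and $F$ is a subfield of $\mathbb{F}_q$ on which $\theta$ acts trivially, so the subring $F[x] \subseteq \mathbb{F}_q[x:\theta]$ is just the ordinary (commutative) polynomial ring over $F$ — the twisted multiplication rule $(ax^i)(bx^j) = a\theta^i(b)x^{i+j}$ collapses to $ab\,x^{i+j}$ when $a,b \in F$. Therefore $c(x)$ and $d(x)$, being elements of the commutative ring $F[x]$, satisfy $d(x)c(x) = c(x)d(x)$. I would also note that monicity is not strictly needed for this direction of the argument but is the hypothesis under which the cited proposition is stated (it matters for the converse and for divisibility applications), so I would simply carry it along.

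The only mild subtlety — and the step I would be most careful about — is justifying that centrality of $c(x)$ in the full ring $\mathbb{F}_q[x:\theta]$ really is equivalent to the single relation $xc(x) = c(x)x$; this is exactly the content of the preceding \textbf{Definition}, and it holds because $\mathbb{F}_q[x:\theta]$ is generated as a ring by $x$ together with the scalars $\mathbb{F}_q$, and a straightforward check shows that an element commuting with $x$ and having $\theta$-fixed coefficients automatically commutes with every scalar $a \in \mathbb{F}_q$ (since $a \cdot c_i x^i = c_i \theta^i(a) x^i$ while $c_i x^i \cdot a = c_i \theta^i(a) x^i$ — wait, here one must instead use that centrality is being tested and the scalars need not be fixed, so the relevant computation is $a c(x) = \sum_i a c_i x^i$ versus $c(x) a = \sum_i c_i \theta^i(a) x^i$, and equality for all $a$ is an extra constraint, but for the proposition we only need that $c(x)$ and $d(x)$ commute \emph{with each other}, which follows purely from their coefficients lying in $F$). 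So the cleanest route avoids worrying about full centrality at all: extract the coefficient condition from $xc(x) = c(x)x$, deduce coefficients lie in $F$, and conclude commutativity inside $F[x]$.

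In summary, the key steps in order are: (1) expand $xc(x) = c(x)x$ to get $\theta(c_i) = c_i$ for all $i$, hence all coefficients of $c(x)$ lie in the fixed field $F = \mathbb{F}_q^{\theta}$; (2) do the same for $d(x)$; (3) observe that on $F$ the twisting $\theta^i$ is the identity, so the multiplication of $c(x)$ and $d(x)$ in $\mathbb{F}_q[x:\theta]$ coincides with ordinary polynomial multiplication in $F[x]$; (4) conclude $d(x)c(x) = c(x)d(x)$ by commutativity of $F[x]$. I expect no real obstacle; the proof is essentially a bookkeeping argument reducing the skew setting to the commutative one via the fixed field.
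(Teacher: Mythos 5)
Your proof is correct. Note first that the paper does not actually prove this proposition --- it is quoted verbatim from \cite[Proposition 2.3]{JLU12} --- so there is no in-paper argument to compare against; your write-up supplies a complete, self-contained justification. The core of your argument is sound: expanding $xc(x)=c(x)x$ with $c(x)=\sum_i c_i x^i$ gives $\sum_i \theta(c_i)x^{i+1}=\sum_i c_i x^{i+1}$, hence $\theta(c_i)=c_i$ for all $i$, so all coefficients of a ``central'' polynomial (in the paper's sense, which only requires commutation with $x$) lie in the fixed field $F=\mathbb{F}_q^{\theta}$; then for $c_i, d_j\in F$ the skew product $(c_i x^i)(d_j x^j)=c_i\theta^i(d_j)x^{i+j}=c_i d_j x^{i+j}$ collapses to the ordinary commutative product, and $d(x)c(x)=c(x)d(x)$ follows. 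You are also right that monicity plays no role here, and your parenthetical self-correction lands in the right place: one should not claim that such polynomials are central in the full ring (they need not commute with arbitrary scalars $a\in\mathbb{F}_q$ unless $\theta^i(a)=a$ for every $i$ with $c_i\neq 0$); the proposition only needs $c(x)$ and $d(x)$ to commute with each other, which your fixed-field reduction delivers. The only cosmetic suggestion is to excise the exploratory detour in your third paragraph and state the coefficient characterization and the reduction to $F[x]$ directly.
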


\begin{lemma}\label{lem-2.5} \cite[Theorem-\RomanNumeralCaps{2}.11] {D74}
Suppose  $c(x)$ and $a(x)(\mbox{~monic~})\in $  
 $\mathbb F_q[x:\theta]$. Then  $c(x)=q_1(x)a(x)+q_2(x)$, where $q_2=0$ or  $\deg(q_2(x))<\deg(a(x))$, for some      $q_1(x)$ and $q_2(x)$ in $\mathbb F_q[x:\theta]$.

\end{lemma}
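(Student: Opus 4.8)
The plan is to run the usual Euclidean division argument, carefully accounting for the twist introduced by $\theta$, and to organize it as an induction on $\deg(c(x))$. First I would dispose of the trivial cases: if $c(x) = 0$ or $\deg(c(x)) < \deg(a(x))$, take $q_1(x) = 0$ and $q_2(x) = c(x)$, and we are done. So assume $n := \deg(c(x)) \ge \deg(a(x)) =: k$, and write $c(x) = c_n x^n + (\text{lower order terms})$ with $c_n \ne 0$ and $a(x) = x^k + (\text{lower order terms})$ (monic). The key computational point is that because $\theta$ is an automorphism of $\mathbb{F}_q$, it is surjective, so $\theta^{\,n-k}$ is also a bijection on $\mathbb{F}_q$; hence there is a unique $\alpha \in \mathbb{F}_q$ with $\theta^{\,n-k}(\alpha) = c_n$. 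Using the multiplication rule $(\alpha x^{\,n-k})(x^k) = \alpha\,\theta^{\,n-k}(1)\,x^n = \alpha x^n$ — and more to the point $(\alpha x^{\,n-k}) \cdot (x^k + \cdots)$ has leading term $\alpha\,\theta^{\,n-k}(1)\,x^n$; wait, I need the leading coefficient of $a$, which is $1$, so the leading term of $(\alpha x^{\,n-k}) a(x)$ is $\alpha\,\theta^{\,n-k}(1) x^n = \alpha x^n$. That is not $c_n x^n$ unless $\alpha = c_n$. Let me restate: since $a$ is monic we actually only need $\alpha = c_n$ directly, and the twist is irrelevant at the top degree. So set $q_1'(x) := c_n x^{\,n-k}$; then $c(x) - q_1'(x)a(x)$ has degree strictly less than $n$.

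Then I would apply the induction hypothesis to the polynomial $c_1(x) := c(x) - c_n x^{\,n-k} a(x)$, which has degree $< n$: there exist $q_1''(x), q_2(x) \in \mathbb{F}_q[x:\theta]$ with $c_1(x) = q_1''(x) a(x) + q_2(x)$ and $q_2(x) = 0$ or $\deg(q_2(x)) < k$. Combining, $c(x) = \big(c_n x^{\,n-k} + q_1''(x)\big) a(x) + q_2(x)$, which gives the claimed decomposition with $q_1(x) = c_n x^{\,n-k} + q_1''(x)$. Since each step strictly decreases the degree, the induction terminates, so this establishes existence. (Uniqueness, while not explicitly demanded by the statement as quoted, would follow by subtracting two such representations: the difference $q_2(x) - \tilde q_2(x)$ would equal $(\tilde q_1(x) - q_1(x))a(x)$, and since $a$ is monic the right-hand side, if nonzero, has degree $\ge k$, contradicting $\deg < k$ on the left.)

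The point where care is genuinely required — and the only place the non-commutativity matters — is getting the monomial multiplication right: one must multiply $a(x)$ \emph{on the left} by the correction term $c_n x^{\,n-k}$ (not on the right), since right division $a(x)\mid_r c(x)$ means $c(x) = d(x)a(x)$ with the quotient on the left. With the factor on the left, the leading coefficient of $c_n x^{\,n-k}\cdot a(x)$ is $c_n \cdot \theta^{\,n-k}(1) = c_n$ because $a$ is monic, so the cancellation at the top degree goes through cleanly; the $\theta$-twist only reshuffles the lower-order coefficients, which are absorbed into the induction. Thus the main obstacle is purely bookkeeping — tracking that the quotient sits on the correct side and that ``monic'' is what makes the leading-term cancellation insensitive to $\theta$ — rather than anything structural. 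Since this is precisely Dickson/Ore's classical right-division theorem for skew polynomial rings (as cited to \cite{D74}), I would present the induction compactly and refer to \cite{D74} for the full details.
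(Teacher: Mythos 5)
Your argument is correct and is essentially the same as what the paper does: the paper states Lemma~\ref{lem-2.5} as a citation to \cite{D74} without proof, but its proof of the analogous right division algorithm over $R[x:\Theta]$ (Proposition~\ref{prop-2.6}) is exactly your iteration --- subtract $c_n x^{\,n-k}a(x)$ to strictly lower the degree and repeat, with monicity of $a(x)$ ensuring the leading-term cancellation is unaffected by the twist. Your added uniqueness remark likewise mirrors the degree-comparison argument in Proposition~\ref{prop-2.6}, so nothing further is needed.
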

In the context of $R[x:\Theta]$, we now introduce  right division algorithm.

\begin{proposition}\label{prop-2.6} 
Let $s(x) \mbox{and}~t(x)\in R[x:\Theta]$ be two polynomials such that  $t(x)$ is monic. Then $s(x)=q'_1(x)t(x)+q'_2(x)$, where $ q'_2(x)=0$ or $ \deg(q'_2(x))<\deg(t(x)),$ for some $q'_1(x),~ q_2'(x) \in R[x:\Theta]$.  
\end{proposition}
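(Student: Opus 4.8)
The plan is to prove this by strong induction on $\deg(s(x))$, mirroring the classical Euclidean division but exploiting that $t(x)$ is monic so that its leading coefficient is the unit $1 \in R$; this is exactly what makes the argument go through even though $R$ is not a field (so general leading coefficients need not be invertible). First I would dispose of the base case: if $s(x)=0$ or $\deg(s(x))<\deg(t(x))$, then one simply takes $q'_1(x)=0$ and $q'_2(x)=s(x)$, and the conclusion holds trivially.

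For the inductive step, suppose $n:=\deg(s(x))\ge m:=\deg(t(x))$ and write $s(x)=a_nx^n+(\text{lower-degree terms})$ with $a_n\in R\setminus\{0\}$. The key computation is that the monomial $a_nx^{n-m}$ satisfies $(a_nx^{n-m})\,t(x)=a_n\,\Theta^{n-m}(1)\,x^n+(\text{lower-degree terms})=a_nx^n+(\text{lower-degree terms})$, where I use that $\Theta$ is a ring automorphism, so $\Theta^{k}(1)=1$ for every $k\ge 0$, together with the hypothesis that $t(x)$ is monic. Consequently $s_1(x):=s(x)-a_nx^{n-m}t(x)$ is either zero or has degree strictly less than $n$. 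Applying the induction hypothesis to $s_1(x)$ yields $q(x),q'_2(x)\in R[x:\Theta]$ with $s_1(x)=q(x)t(x)+q'_2(x)$ and $q'_2(x)=0$ or $\deg(q'_2(x))<m$. Setting $q'_1(x)=a_nx^{n-m}+q(x)$ gives $s(x)=q'_1(x)t(x)+q'_2(x)$ with the required degree bound on the remainder, completing the induction.

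The only point that requires care — and the sole reason the statement demands $t(x)$ monic — is that at each stage of the recursion we must be able to cancel the leading term of the current remainder by left-multiplying $t(x)$ by a single monomial; this is possible precisely because the leading coefficient of $t(x)$ is $1$, a unit, so no invertibility of arbitrary elements of $R$ is ever invoked. Everything else is routine bookkeeping with the twist $\Theta$ in the monomial product $(bx^i)(b'x^j)=b\Theta^i(b')x^{i+j}$.

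As an alternative (which I would mention in a remark), one can instead use the idempotent decomposition $R=\kappa_1R\oplus\kappa_2R\oplus\kappa_3R\oplus\kappa_4R$ to identify $R[x:\Theta]$ with $\prod_{i=1}^{4}\kappa_iR[x:\Theta]$, each factor being isomorphic to the skew polynomial ring $\mathbb{F}_q[x:\theta]$ since $\kappa_iR\cong\mathbb{F}_q$ and $\Theta$ restricts to $\theta$ on each component; writing $t(x)=\sum_{i=1}^{4}\kappa_i t_i(x)$ with each $t_i(x)$ monic in $\mathbb{F}_q[x:\theta]$, one applies Lemma \ref{lem-2.5} in each of the four coordinates and reassembles the quotients and remainders. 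I expect the induction argument to be the cleaner presentation, with the CRT-style argument useful as a conceptual cross-check.
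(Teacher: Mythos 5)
Your proof is correct and follows essentially the same route as the paper: both reduce the degree of $s(x)$ by subtracting $a_nx^{n-m}t(x)$ (relying on $t(x)$ being monic and $\Theta(1)=1$ so the leading terms cancel) and iterate, with your version merely packaging the iteration as a formal induction. The only difference is that the paper additionally proves uniqueness of $q_1'(x)$ and $q_2'(x)$, which is not required by the statement as written.
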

\begin{proof}
Let $s(x) = \sum_{i=0}^{m} s_i x^i$ and $t(x) = \sum_{j=0}^{k-1} t _j x^j+x^k$. 
 Then  $\deg(s(x) - {s_m}x^{m-k}t(x))$    less than   $\deg(s(x))$. By repeatedly applying this procedure with successive polynomials, we can get the desired  $q'_1(x)$ and $q'_2(x)$. 
Now, we prove the uniqueness of  $q'_1(x)$ and $q'_2(x)$.   Suppose \[s(x) = q_1^\prime(x) t(x) + q_2^\prime(x) = q_1^{\prime\prime} (x)t(x) + q_2^{\prime\prime}(x),\] then \[(q_1^\prime(x) - q_1^{\prime\prime}(x))t(x) = q_2^{\prime\prime}(x) - q_2^\prime(x).\] Here $q_1^\prime(x) - q_1^{\prime\prime}(x) \neq 0$, implying that   the right polynomial must has a degree equal to or greater than the degree of  $t(x)$ and at the same time, the degree of right polynomial not be more than  $\deg(t(x))-1$. This leads us to a contradiction.  Therefore, $q_1^\prime(x) $ and $q_2^\prime(x) $ are unique.
\end{proof}
Now, we present the definition of least common multiple $(\lcm)$ and  greatest common divisor $(\gcd )$, which are going to be used frequently throughout this paper.

\begin{definition}\label{def-2.7} Suppose polynomials   $a(x),~c(x), ~ d(x)\in \mathbb F_q[x:\theta]$ with  $d(x)\mid_r~a(x)$ and $ d(x)\mid _r~c(x)$.  If $d'(x)$ is another polynomial with the same property, then $d'(x)\mid_r~d(x)$. Consequently, $d(x)$ is said to be right gcd  and  
 denoted by  $\gcd_r(a(x), c(x))$. Similarly, we can define the left gcd $(\gcd_l)$.

\end{definition}

\begin{definition}\label{def-2.8}
Suppose polynomials $m(x),a(x), c(x)\in \mathbb F_q[x:\Theta]$ with  $a(x)\mid_l m(x)$ and $c(x)\mid_l m(x)$  and if any other  polynomial $m'(x) \in \mathbb F_q[x:\theta]$  with same property implies $m(x)\mid_l~ m'(x)$. Then $m(x)$ is said to be left lcm and  denoted by $\lcm_l(a(x),c(x))$.

\end{definition}

\begin{lemma}\label{lem-2.9}\cite[pp.486]{O33}
Let $ d(x)$ and $c(x)\in \mathbb F_q[x:\theta]$. Then  $\deg(\lcm_l(d(x),c(x))= \deg(d(x))\linebreak+\deg(c(x))-\deg(\gcd_r(d(x),c(x))).$
\end{lemma}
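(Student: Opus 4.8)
\textbf{Proof proposal for Lemma \ref{lem-2.9} (the degree formula for $\lcm_l$).}

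The plan is to derive the identity
\[
\deg(\lcm_l(d(x),c(x))) = \deg(d(x)) + \deg(c(x)) - \deg(\gcd_r(d(x),c(x)))
\]
by an Euclidean/module-theoretic counting argument, exactly parallel to the commutative case but carried out with the handedness of the skew polynomial ring $\mathbb F_q[x:\theta]$ kept straight. First I would set $g(x) = \gcd_r(d(x),c(x))$ and $\ell(x) = \lcm_l(d(x),c(x))$, taken monic (one should note at the outset that since $\mathbb F_q[x:\theta]$ is a left (and right) principal ideal domain by Lemma \ref{lem-2.5} and its right analogue Proposition \ref{prop-2.6}, both $g(x)$ and $\ell(x)$ exist and are unique up to a unit, hence unique as monic polynomials). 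The right gcd satisfies $\mathbb F_q[x:\theta]d(x) + \mathbb F_q[x:\theta]c(x) = \mathbb F_q[x:\theta]g(x)$, while the left lcm satisfies $\mathbb F_q[x:\theta]\ell(x) = \mathbb F_q[x:\theta]d(x) \cap \mathbb F_q[x:\theta]c(x)$ — one must check these module-theoretic descriptions match Definitions \ref{def-2.7} and \ref{def-2.8}, which is immediate from the universal properties stated there once one knows every left ideal is principal.

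The core step is a dimension count over $\mathbb F_q$. Write $I_d = \mathbb F_q[x:\theta]d(x)$ and $I_c = \mathbb F_q[x:\theta]c(x)$; these are left ideals, and for a monic polynomial $p(x)$ of degree $n$ the quotient $\mathbb F_q[x:\theta]/\mathbb F_q[x:\theta]p(x)$ is an $\mathbb F_q$-vector space of dimension $n$ (spanned by $1, x, \dots, x^{n-1}$, using the right division algorithm Lemma \ref{lem-2.5} to reduce). Now apply the second isomorphism theorem for modules to $I_d$ and $I_c$:
\[
\frac{I_d + I_c}{I_c} \cong \frac{I_d}{I_d \cap I_c}.
\]
The left-hand side has $\mathbb F_q$-codimension-difference giving $\dim_{\mathbb F_q}(I_d+I_c)/I_c = \deg c(x) - \deg g(x)$ (since $(I_d+I_c)/I_c$ sits inside $R/I_c$ which has dimension $\deg c$, and the quotient $R/(I_d+I_c) = R/I_g$ has dimension $\deg g$), while the right-hand side equals $\dim_{\mathbb F_q} I_d/(I_d\cap I_c)$; and since $I_d \cap I_c = I_\ell$, we have $R/I_d$ of dimension $\deg d$ sitting inside $R/I_\ell$ of dimension $\deg\ell$, so $\dim_{\mathbb F_q} I_d/I_\ell = \deg \ell - \deg d$. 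Equating the two sides gives $\deg c - \deg g = \deg \ell - \deg d$, which rearranges to the claim. Alternatively — and this is probably the cleaner writeup — one observes $\dim_{\mathbb F_q}(R/(I_d\cap I_c)) + \dim_{\mathbb F_q}(R/(I_d+I_c)) = \dim_{\mathbb F_q}(R/I_d) + \dim_{\mathbb F_q}(R/I_c)$, which is just the inclusion–exclusion/modular law for the two submodules $I_d, I_c$ of $R$, and then substitute the four degrees.

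The main obstacle is purely bookkeeping about sides: in a noncommutative PID one has to be careful that the \emph{left} ideal generated by $d(x)$ corresponds to \emph{right} divisibility (since $d(x) \mid_r a(x)$ means $a(x) \in R\, d(x)$, i.e. $a(x)$ is a left multiple), so that $\gcd_r$ really is the generator of $I_d + I_c$ and $\lcm_l$ really is the generator of $I_d \cap I_c$ — matching Definitions \ref{def-2.7}–\ref{def-2.8}. Once that dictionary is pinned down, the argument is the standard modular-law computation and there is no further subtlety; the finiteness of all dimensions involved is guaranteed because $\theta$ has finite order so $x^N - 1$ is central for suitable $N$ (Proposition \ref{prop-2.2}) and everything in sight is a quotient by a nonzero ideal. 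I would therefore spend most of the written proof establishing the correspondence between one-sided divisibility and one-sided ideals, then close with the two-line dimension count. (One may also cite \cite{O33} directly, as the excerpt does, and present this only as the module-theoretic explanation.)
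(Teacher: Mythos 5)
Your argument is correct, but note that the paper itself offers no proof of this lemma: it is quoted verbatim from Ore \cite{O33}, so there is nothing internal to compare against. Ore's original derivation runs the (right) Euclidean algorithm and tracks degrees; your module-theoretic repackaging --- identifying $\gcd_r(d,c)$ with the monic generator of the left ideal $Sd(x)+Sc(x)$ and $\lcm_l(d,c)$ with that of $Sd(x)\cap Sc(x)$, where $S=\mathbb F_q[x:\theta]$, then applying $\dim_{\mathbb F_q}\bigl(S/(I_d\cap I_c)\bigr)+\dim_{\mathbb F_q}\bigl(S/(I_d+I_c)\bigr)=\dim_{\mathbb F_q}(S/I_d)+\dim_{\mathbb F_q}(S/I_c)$ with $\dim_{\mathbb F_q}\bigl(S/Sp(x)\bigr)=\deg p(x)$ for monic $p$ --- is the standard modern proof and buys a cleaner, coordinate-free statement at the cost of setting up the dictionary between one-sided divisibility and one-sided ideals. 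Two remarks on that dictionary. First, you are right to flag that it must be checked against Definitions \ref{def-2.7} and \ref{def-2.8}: as literally printed, Definition \ref{def-2.8} uses $a(x)\mid_l m(x)$, which would make $m(x)$ a common \emph{right} multiple and would pair with $\gcd_l$, not $\gcd_r$; the formula you prove is the one that is actually true and the one the paper uses later (Lemma \ref{lem-3.13} invokes ``$\ell(x)\mid_r h_1s$ and $s\mid_r h_1s$ imply $\lcm_l(\ell,s)\mid_r h_1s$'', i.e.\ the generator of $S\ell\cap Ss$), so the intended convention is yours and Definition \ref{def-2.8} contains a typo. Second, your appeal to Proposition \ref{prop-2.2} and the centrality of $x^N-1$ to guarantee finite dimensions is unnecessary: finiteness of $\dim_{\mathbb F_q}(S/Sp(x))$ follows directly from the right division algorithm, and $\dim_{\mathbb F_q}\bigl(S/(I_d\cap I_c)\bigr)\le\deg d+\deg c$ via the embedding into $S/I_d\oplus S/I_c$, which also shows $I_d\cap I_c\ne 0$ so the left lcm exists. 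With those two points tidied up, the proof is complete.
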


\begin{lemma}\label{lem-2.10}\cite[Lemma-2.7]{AHS22} Suppose   $d_1(x),d_2(x),d_3(x)$ and $d(x)\in \mathbb F_q[x:\theta]$, where $d(x)$ is  a central polynomial  with   \[d_1(x)d_2(x)\equiv0\pmod{d(x)},\] and \[d_1(x)d_3(x)\equiv 0\pmod{d(x)}.\] Then \[d_1(x){\gcd}_l(d_2(x),d_3(x))\equiv 0 \pmod{d(x)}.\]  
\end{lemma}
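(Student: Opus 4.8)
The plan is to reduce the statement to a Bézout-type identity in the skew polynomial ring $\mathbb F_q[x:\theta]$. Since $\mathbb F_q$ is a field and $\theta$ is an automorphism, $\mathbb F_q[x:\theta]$ admits the right division algorithm of Lemma \ref{lem-2.5}, and by the symmetric argument a left division algorithm as well; hence it is both a left and a right principal ideal domain. The first thing I would do is record that the left gcd of Definition \ref{def-2.7} is realized as a right Bézout combination.

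Concretely, I would consider the right ideal $J = d_2(x)\mathbb F_q[x:\theta] + d_3(x)\mathbb F_q[x:\theta]$. As $\mathbb F_q[x:\theta]$ is a right PID, $J = g(x)\mathbb F_q[x:\theta]$ for some monic $g(x)$ (if $d_2 = d_3 = 0$ the lemma is trivial, so assume not). From $d_2, d_3 \in g(x)\mathbb F_q[x:\theta]$ one gets $g(x)\mid_l d_2(x)$ and $g(x)\mid_l d_3(x)$, and if $g'(x)$ is any common left divisor of $d_2(x)$ and $d_3(x)$ then $J \subseteq g'(x)\mathbb F_q[x:\theta]$, so $g'(x)\mid_l g(x)$; thus $g(x) = \gcd_l(d_2(x),d_3(x))$ in the sense of Definition \ref{def-2.7}. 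Since $g(x) \in J$, there are $u(x), v(x)\in\mathbb F_q[x:\theta]$ with
\[
\gcd_l(d_2(x),d_3(x)) = d_2(x)u(x) + d_3(x)v(x).
\]
Now I would multiply on the left by $d_1(x)$. Because $d(x)$ is central, $\langle d(x)\rangle$ is a two-sided ideal and the hypotheses give $d_1(x)d_2(x) = d(x)h_2(x)$ and $d_1(x)d_3(x) = d(x)h_3(x)$ for some $h_2(x), h_3(x) \in \mathbb F_q[x:\theta]$; hence
\[
d_1(x)\gcd_l(d_2(x),d_3(x)) = d_1(x)d_2(x)u(x) + d_1(x)d_3(x)v(x) = d(x)\bigl(h_2(x)u(x) + h_3(x)v(x)\bigr),
\]
which is $\equiv 0 \pmod{d(x)}$, as desired.

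The hard part will be the first step — identifying $\gcd_l(d_2(x),d_3(x))$ with the monic generator of the right ideal $J$, which relies on the left division algorithm in $\mathbb F_q[x:\theta]$ (the mirror image of Lemma \ref{lem-2.5}) together with the two-sided PID property. Once the Bézout representation is in hand, the rest is a single line of manipulation, and centrality of $d(x)$ enters only to make the congruence unambiguous and to let one treat $d(x)$ as a two-sided factor.
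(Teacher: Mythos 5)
The paper states this lemma without proof, deferring to \cite[Lemma~2.7]{AHS22}, so there is no in-text argument to compare against; your proof is correct and is essentially the standard argument used in that reference. The one step needing care --- that $\gcd_l(d_2(x),d_3(x))$ generates the right ideal $d_2(x)\mathbb F_q[x:\theta]+d_3(x)\mathbb F_q[x:\theta]$ and hence admits a B\'ezout representation $d_2(x)u(x)+d_3(x)v(x)$ --- is valid because $\theta$ is an automorphism, so $\mathbb F_q[x:\theta]$ has a left division algorithm mirroring Lemma~\ref{lem-2.5} and is a right PID; you flag exactly this point, and the remainder (left-multiplying by $d_1(x)$ and using that $\langle d(x)\rangle$ is two-sided by centrality) is a one-line verification.
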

In the upcoming two propositions, we articulate the findings discussed by T. Yao et al. \cite{YSS15}, wherein they  characterized $\Theta$-cyclic codes over $R$.

\begin{proposition}\cite[Theorem-4.3]{YSS15}
Let 
 $\mathfrak C_s=\kappa_1\mathfrak C_{s,1}\oplus\kappa_2\mathfrak C_{s,2}\oplus\kappa_3\mathfrak C_{s,3}\oplus\kappa_4\mathfrak C_{s,4}$ be a linear code of  length $s$ over $R$. Then  $\mathfrak C_s$ is $\Theta$-cyclic code over $R$ if and only if  $\mathfrak C_{s,i}$ are $\theta$-cyclic codes  over $\mathbb F_q.$
\end{proposition}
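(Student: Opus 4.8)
The plan is to show the equivalence directly from the idempotent decomposition, translating the $\Theta$-cyclic shift on $R$ into four simultaneous $\theta$-cyclic shifts on $\mathbb{F}_q$. First I would fix a codeword $\textit{\textbf{b}} = (b_0, b_1, \ldots, b_{s-1}) \in \mathfrak{C}_s$ and write each coordinate $b_j = \kappa_1 n_{1,j} + \kappa_2 n_{2,j} + \kappa_3 n_{3,j} + \kappa_4 n_{4,j}$, so that $\textit{\textbf{b}} = \kappa_1 \textit{\textbf{n}}_1 + \kappa_2 \textit{\textbf{n}}_2 + \kappa_3 \textit{\textbf{n}}_3 + \kappa_4 \textit{\textbf{n}}_4$ with $\textit{\textbf{n}}_i = (n_{i,0}, \ldots, n_{i,s-1}) \in \mathfrak{C}_{s,i}$. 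The key computational observation is that $\Theta$ respects the idempotents: since $\Theta(\kappa_i) = \kappa_i$ (each $\kappa_i$ has coordinates in the prime field, hence is fixed by $\theta$ and thus by $\Theta$), we get $\Theta(b_j) = \kappa_1 \theta(n_{1,j}) + \kappa_2 \theta(n_{2,j}) + \kappa_3 \theta(n_{3,j}) + \kappa_4 \theta(n_{4,j})$. Therefore the $\Theta$-cyclic shift $(\Theta(b_{s-1}), \Theta(b_0), \ldots, \Theta(b_{s-2}))$ decomposes coordinate-wise as $\kappa_1 \sigma_\theta(\textit{\textbf{n}}_1) + \kappa_2 \sigma_\theta(\textit{\textbf{n}}_2) + \kappa_3 \sigma_\theta(\textit{\textbf{n}}_3) + \kappa_4 \sigma_\theta(\textit{\textbf{n}}_4)$, where $\sigma_\theta$ denotes the $\theta$-cyclic shift on $\mathbb{F}_q^s$.

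For the forward direction, assume $\mathfrak{C}_s$ is $\Theta$-cyclic. Fix an index $i$ and take any $\textit{\textbf{n}}_i \in \mathfrak{C}_{s,i}$; by definition of $\mathfrak{C}_{s,i}$ there exist $\textit{\textbf{n}}_j \in \mathbb{F}_q^s$ for $j \neq i$ with $\kappa_1 \textit{\textbf{n}}_1 + \cdots + \kappa_4 \textit{\textbf{n}}_4 \in \mathfrak{C}_s$. Applying the $\Theta$-cyclic shift and using the decomposition above, $\kappa_1 \sigma_\theta(\textit{\textbf{n}}_1) + \cdots + \kappa_4 \sigma_\theta(\textit{\textbf{n}}_4) \in \mathfrak{C}_s$, and since the decomposition $\mathfrak{C}_s = \bigoplus \kappa_i \mathfrak{C}_{s,i}$ is unique this forces $\sigma_\theta(\textit{\textbf{n}}_i) \in \mathfrak{C}_{s,i}$; hence $\mathfrak{C}_{s,i}$ is $\theta$-cyclic. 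For the converse, assume all $\mathfrak{C}_{s,i}$ are $\theta$-cyclic and take $\textit{\textbf{b}} \in \mathfrak{C}_s$; writing $\textit{\textbf{b}} = \sum \kappa_i \textit{\textbf{n}}_i$ with $\textit{\textbf{n}}_i \in \mathfrak{C}_{s,i}$, each $\sigma_\theta(\textit{\textbf{n}}_i) \in \mathfrak{C}_{s,i}$, so the $\Theta$-cyclic shift of $\textit{\textbf{b}}$, which equals $\sum \kappa_i \sigma_\theta(\textit{\textbf{n}}_i)$, again lies in $\mathfrak{C}_s = \bigoplus \kappa_i \mathfrak{C}_{s,i}$. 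This proves $\mathfrak{C}_s$ is $\Theta$-cyclic.

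The one point that deserves care — and the only place I expect any friction — is the claim $\Theta(\kappa_i) = \kappa_i$, i.e., that $\Theta$ fixes the idempotents, together with the fact that $\Theta$ acts on the $n_i$-coordinates exactly as $\theta$ does. This follows because $\kappa_1 = 1 - u - v + uv$, $\kappa_2 = u - uv$, etc., have all their $\mathbb{F}_q$-coefficients in $\{0, \pm 1\} \subseteq$ the prime subfield, which is fixed pointwise by any automorphism of $\mathbb{F}_q$; combined with the definition $\Theta(a_0 + u a_1 + v a_2 + uv a_3) = \theta(a_0) + u\theta(a_1) + v\theta(a_2) + uv\theta(a_3)$ and the change of basis $b = \kappa_1 n_1 + \cdots + \kappa_4 n_4$ with $n_1 = a_0$, $n_2 = a_0 + a_1$, $n_3 = a_0 + a_2$, $n_4 = a_0 + a_1 + a_2 + a_3$, one checks (using $\theta$-linearity and $\theta(0)=0$, $\theta(1)=1$) that $\Theta(b) = \kappa_1 \theta(n_1) + \cdots + \kappa_4 \theta(n_4)$. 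Once this intertwining identity is recorded as a short lemma or inline remark, the theorem is immediate by the uniqueness of the idempotent decomposition; everything else is bookkeeping.
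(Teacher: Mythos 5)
Your proof is correct, and the key intertwining identity $\Theta(b_j)=\kappa_1\theta(n_{1,j})+\kappa_2\theta(n_{2,j})+\kappa_3\theta(n_{3,j})+\kappa_4\theta(n_{4,j})$ (resting on $\Theta(\kappa_i)=\kappa_i$, which holds since the $\kappa_i$ have prime-subfield coefficients) is exactly the right point to isolate. The paper does not prove this statement itself --- it cites it to Yao et al.\ \cite{YSS15} --- but your idempotent-decomposition argument is essentially identical to the one the paper uses for its own analogous result, Proposition \ref{prop-3.4}, on $(\theta,\Theta)$-cyclic codes over $\mathbb F_qR$, so no further comparison is needed.
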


\begin{proposition}\label{prop-4.3}
\cite[Theorem-4.5]{YSS15} Let 
 $\mathfrak C_s=\kappa_1\mathfrak C_{s,1}\oplus\kappa_2\mathfrak C_{s,2}\oplus\kappa_3\mathfrak C_{s,3}\oplus\kappa_4\mathfrak C_{s,4}$ and   $\mathfrak C_{s,i}=\langle t_i(x)\rangle$ with $t_i(x)\mid_r x^s-1$  in $R[x:\Theta]$ for  $i=1,2,3,4$.   Then $\mathfrak C_s=\langle t(x)\rangle$, where $t(x)= \kappa_1t_1(x)+\kappa_2t_2(x)+\kappa_3t_3(x)+\kappa_4t_4(x)$ with $t(x)\mid_r~ x^s-1$ in $ R[x:\Theta].$  Additionally, $|\mathfrak C_s|= q^{4s-\sum_{i=1}^4\deg(t_i(x))}.$ 
\end{proposition}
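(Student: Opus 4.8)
The plan is to build the generator $t(x)$ directly from the four component generators $t_i(x)$ and verify the three claims (it generates $\mathfrak{C}_s$, it right-divides $x^s-1$, and the cardinality count) in turn. First I would record the elementary but essential fact that the idempotents behave well under $\Theta$: since $\Theta$ fixes $u$ and $v$, it fixes each $\kappa_i$, so $\Theta(\kappa_i b) = \kappa_i \Theta(b)$ and more generally, for any polynomial, $\kappa_i$ commutes past $\Theta$ in the skew multiplication of $R[x:\Theta]$. This means $R[x:\Theta] = \kappa_1 R[x:\Theta] \oplus \cdots \oplus \kappa_4 R[x:\Theta]$ as rings (with $\kappa_i R[x:\Theta] \cong \mathbb{F}_q[x:\theta]$ via $\kappa_i \mapsto 1$), and the same decomposition descends to the quotient $\frac{R[x:\Theta]}{\langle x^s-1\rangle}$, compatibly with the decomposition $\mathfrak{C}_s = \kappa_1\mathfrak{C}_{s,1}\oplus\cdots\oplus\kappa_4\mathfrak{C}_{s,4}$.

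Next, set $t(x) = \kappa_1 t_1(x) + \kappa_2 t_2(x) + \kappa_3 t_3(x) + \kappa_4 t_4(x)$. To see $t(x) \mid_r x^s-1$ in $R[x:\Theta]$: by hypothesis, in $\mathbb{F}_q[x:\theta]$ there are $h_i(x)$ with $x^s-1 = h_i(x) t_i(x)$. Viewing each identity inside $\kappa_i R[x:\Theta]$ and summing over $i$ using $\sum \kappa_i = 1$ and $\kappa_i\kappa_j = 0$, one gets $x^s-1 = \big(\sum_i \kappa_i h_i(x)\big)\big(\sum_i \kappa_i t_i(x)\big) = h(x) t(x)$ where $h(x) = \sum_i \kappa_i h_i(x)$; here one uses that the cross terms $\kappa_i h_i(x) \cdot \kappa_j t_j(x)$ vanish for $i \neq j$ because $\kappa_i$ commutes with the skew product and $\kappa_i\kappa_j = 0$. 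Hence $t(x)\mid_r x^s-1$. For the generation claim $\mathfrak{C}_s = \langle t(x)\rangle$: the inclusion $\langle t(x)\rangle \subseteq \mathfrak{C}_s$ follows since $\kappa_i t(x) = \kappa_i t_i(x) \in \kappa_i \mathfrak{C}_{s,i}$ and left-multiples stay in the submodule (using the component description and that $\mathfrak{C}_{s,i} = \langle t_i(x)\rangle$ is a left ideal mod $x^s-1$). Conversely, any $c(x) \in \mathfrak{C}_s$ decomposes as $c(x) = \sum_i \kappa_i c_i(x)$ with $c_i(x) \in \mathfrak{C}_{s,i}$, so $c_i(x) = a_i(x) t_i(x)$ for some $a_i(x) \in \mathbb{F}_q[x:\theta]$, and then $c(x) = \big(\sum_i \kappa_i a_i(x)\big) t(x) \in \langle t(x)\rangle$, again because the cross terms die.

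For the cardinality: by the direct-sum decomposition $\mathfrak{C}_s = \bigoplus \kappa_i \mathfrak{C}_{s,i}$ we have $|\mathfrak{C}_s| = \prod_{i=1}^4 |\mathfrak{C}_{s,i}|$, and each $\mathfrak{C}_{s,i}$ is a $\theta$-cyclic code over $\mathbb{F}_q$ generated by $t_i(x)$ with $t_i(x)\mid_r x^s-1$; the standard fact (or a short argument via the right division algorithm, Lemma~\ref{lem-2.5}, using that $\{t_i(x), x t_i(x), \ldots, x^{s-\deg t_i - 1} t_i(x)\}$ is an $\mathbb{F}_q$-basis) gives $|\mathfrak{C}_{s,i}| = q^{s - \deg(t_i(x))}$. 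Multiplying yields $|\mathfrak{C}_s| = q^{4s - \sum_{i=1}^4 \deg(t_i(x))}$.

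I expect the only real subtlety — the thing worth being careful about rather than genuinely hard — to be the bookkeeping of the skew multiplication in the presence of the idempotents: one must be sure that $\kappa_i$ really is central in $R[x:\Theta]$ (which rests on $\Theta(\kappa_i) = \kappa_i$) so that products $\big(\sum_i \kappa_i a_i(x)\big)\big(\sum_i \kappa_i b_i(x)\big)$ collapse cleanly to $\sum_i \kappa_i a_i(x) b_i(x)$. Once that observation is in place, each of the three assertions reduces to the corresponding statement for $\theta$-cyclic codes over $\mathbb{F}_q$ componentwise, and everything follows formally.
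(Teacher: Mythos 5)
Your proof is correct. Note that the paper itself gives no proof of this proposition---it is quoted verbatim from Yao, Shi and Sol\'e \cite{YSS15}---but your argument is the natural one and matches the machinery the paper sets up in its preliminaries: the key observation that $\Theta(\kappa_i)=\kappa_i$ (since the coefficients of each $\kappa_i$ lie in the prime field and are fixed by the Frobenius), hence that each $\kappa_i$ is central in $R[x:\Theta]$ and all cross terms $\kappa_i\kappa_j$ vanish, is exactly what makes the componentwise reduction to $\theta$-cyclic codes over $\mathbb{F}_q$ work, and your verification of the three claims (generation, right divisibility of $x^s-1$, and the cardinality count) is complete.
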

In next result, we scrutinize the decompose phenomena of $(\theta, \Theta )$-cyclic code over $\mathbb F_qR.$
\begin{proposition}\label{prop-3.4}
Let $\mathfrak{D}=\kappa_1\mathfrak{D}_1\oplus\kappa_2\mathfrak{D}_2\oplus \kappa_3\mathfrak{D}_3\oplus \kappa_4\mathfrak{D}_4$ be  a linear code of  block length $(r,s)$ over $\mathbb{F}_qR$. Then $\mathfrak D$ is  $(\theta,\Theta)$-cyclic code if and only if each $\mathfrak{D}_i$   is a $\theta$-cyclic code of block length $(r,s)$ over  $\mathbb{F}_q$ for $i=1,2,3,4$. 
\end{proposition}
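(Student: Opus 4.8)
The plan is to prove both directions by decomposing every codeword via the idempotents $\kappa_1,\kappa_2,\kappa_3,\kappa_4$ and tracking how the shift $\rho_{(\theta,\Theta)}$ interacts with this decomposition. First I would fix notation: write an arbitrary element of $\mathbb F_q^r\times R^s$ as $\textit{\textbf d}=(\textit{\textbf g},\textit{\textbf b})$ with $\textit{\textbf b}=\kappa_1\textit{\textbf n}_1+\kappa_2\textit{\textbf n}_2+\kappa_3\textit{\textbf n}_3+\kappa_4\textit{\textbf n}_4$, so that $\textit{\textbf d}=\sum_{i=1}^4\kappa_i(\textit{\textbf g},\textit{\textbf n}_i)$ where $(\textit{\textbf g},\textit{\textbf n}_i)\in\mathbb F_q^r\times\mathbb F_q^s$ is the ``$i$-th component'' living in $\mathfrak D_i$. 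The crucial algebraic fact is that $\Theta$ fixes each $\kappa_i$ (since $\Theta$ acts coordinatewise on the $u,v$-coordinates and the $\kappa_i$ have coefficients in the prime field, or more directly because $\Theta(\kappa_i)$ is again an idempotent with the same defining relations), hence $\Theta(\kappa_i\textit{\textbf n}_i)=\kappa_i\theta(\textit{\textbf n}_i)$ componentwise. Therefore applying $\rho_{(\theta,\Theta)}$ to $\textit{\textbf d}$ and regrouping by $\kappa_i$ gives
\[
\rho_{(\theta,\Theta)}(\textit{\textbf d})=\sum_{i=1}^4\kappa_i\,\rho_{\theta}\big((\textit{\textbf g},\textit{\textbf n}_i)\big),
\]
where $\rho_\theta$ denotes the $\theta$-cyclic shift on $\mathbb F_q^r\times\mathbb F_q^s$ (i.e.\ the $(\theta,\theta)$-shift, since over $\mathbb F_q$ the automorphism induced by $\Theta$ on each $\kappa_i$-component is just $\theta$). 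This identity is the engine of the whole proof.

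For the forward direction, assume $\mathfrak D$ is $(\theta,\Theta)$-cyclic. Take $(\textit{\textbf g},\textit{\textbf n}_i)\in\mathfrak D_i$. By the definition of $\mathfrak D_i$ (via $\mathfrak C_{s,i}$ and the decomposition $\mathfrak D=\bigoplus\kappa_i\mathfrak D_i$) there is a codeword of $\mathfrak D$ whose $i$-th $\kappa$-component is $(\textit{\textbf g},\textit{\textbf n}_i)$; in fact $\kappa_i\star\textit{\textbf d}'$ for a suitable $\textit{\textbf d}'\in\mathfrak D$ isolates this component, and $\kappa_i\star\textit{\textbf d}'\in\mathfrak D$ by $R$-linearity. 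Applying $\rho_{(\theta,\Theta)}$ and using the displayed identity together with $\kappa_i\kappa_j=0$ for $i\neq j$ shows $\rho_\theta(\textit{\textbf g},\textit{\textbf n}_i)$ is the $i$-th component of $\rho_{(\theta,\Theta)}(\kappa_i\star\textit{\textbf d}')\in\mathfrak D$, hence lies in $\mathfrak D_i$. Thus each $\mathfrak D_i$ is closed under $\rho_\theta$, i.e.\ is a $\theta$-cyclic code of block length $(r,s)$ over $\mathbb F_q$. Conversely, if every $\mathfrak D_i$ is $\theta$-cyclic, take $\textit{\textbf d}\in\mathfrak D$, decompose it as $\sum_i\kappa_i(\textit{\textbf g},\textit{\textbf n}_i)$ with $(\textit{\textbf g},\textit{\textbf n}_i)\in\mathfrak D_i$, apply the displayed identity, note $\rho_\theta(\textit{\textbf g},\textit{\textbf n}_i)\in\mathfrak D_i$ by hypothesis, and conclude $\rho_{(\theta,\Theta)}(\textit{\textbf d})\in\bigoplus\kappa_i\mathfrak D_i=\mathfrak D$. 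Linearity of $\mathfrak D$ is inherited from linearity of the $\mathfrak D_i$ (and vice versa), so $\mathfrak D$ is $(\theta,\Theta)$-cyclic.

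The step I expect to be the main obstacle — or at least the one needing the most care — is the bookkeeping in the forward direction: justifying cleanly that the $i$-th $\kappa$-component of an arbitrary codeword of $\mathfrak D$ can itself be realized as (the component of) a codeword of $\mathfrak D$, so that closure of $\mathfrak D$ under $\rho_{(\theta,\Theta)}$ descends to closure of $\mathfrak D_i$ under $\rho_\theta$. This is where one uses that $\kappa_i\star\textit{\textbf d}\in\mathfrak D$ and that $\star$-multiplication by $\kappa_i$ acts as the projection onto the $i$-th summand, together with the orthogonality relations $\kappa_i\kappa_j=0$, $\kappa_i^2=\kappa_i$. A secondary point to state explicitly (rather than grind through) is the commutation $\Theta(\kappa_i b)=\kappa_i\theta(b)$ for $b\in\mathbb F_q$, which reduces the $\Theta$-shift on $R^s$ to the $\theta$-shift on each $\mathbb F_q^s$-component; this is immediate from the coordinatewise definition of $\Theta$ and the fact that the change-of-basis between $(a_0,a_1,a_2,a_3)$ and $(n_1,n_2,n_3,n_4)$ has entries in $\mathbb F_p$, hence is $\theta$-equivariant. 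Once these two facts are in place, everything else is a routine regrouping argument.
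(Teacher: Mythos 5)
Your proposal is correct and follows essentially the same route as the paper: decompose each codeword by the idempotents, use that $\Theta$ fixes each $\kappa_i$ so that the $\Theta$-shift restricts to the $\theta$-shift on each $\kappa_i$-component, and transfer closure under the shift between $\mathfrak D$ and the components $\mathfrak D_i$ in both directions. One small caveat on your aside: since $\eta(\kappa_i)=0$ for $i=2,3,4$, the product $\kappa_i\star\textit{\textbf d}'$ equals $(0,\kappa_i\textit{\textbf n}_i)$ and does not retain the $\mathbb F_q^r$-block, so it does not literally ``isolate'' $(\textit{\textbf g},\textit{\textbf n}_i)$; but this projection step is not actually needed, because membership of $\rho_\theta(\textit{\textbf g},\textit{\textbf n}_i)$ in $\mathfrak D_i$ follows directly from applying $\rho_{(\theta,\Theta)}$ to the full codeword and invoking the definition of $\mathfrak D_i$, which is exactly what the paper does.
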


\begin{proof}
    Suppose  $\textit{\textbf d}= (g_0,g_1,\ldots,g_{r-1} , b_0,b_1,\ldots,b_{s-1})\in \mathfrak{D}$, where $b_i=\kappa_1n_{1,i}+\kappa_2n_{2,i}+\kappa_3n_{3,i}+\kappa_4n_{4,i}$ for $i=0,1,\ldots,s-1.$ Then we have 
    \begin{align*}
       (g_0,g_1,\ldots,g_{r-1},n_{1,0},n_{1,1},\ldots,n_{1,s-1})&\in \mathfrak{D}_1\\
       (g_0,g_1,\ldots,g_{r-1},n_{2,0},n_{2,1},\ldots,n_{2,s-1})&\in \mathfrak{D}_2\\
       (g_0,g_1,\ldots,g_{r-1},n_{3,0},n_{3,1},\ldots,n_{3,s-1})&\in \mathfrak{D}_3\\
       (g_0,g_1,\ldots,g_{r-1},n_{4,0},n_{4,1},\ldots,n_{4,s-1})&\in \mathfrak{D}_4.
    \end{align*}
     Now, consider  if  $\mathfrak{D}$ is a    $(\theta ,\Theta)$-cyclic code, then     \[\rho_{(\theta,\Theta)}(\textit{\textbf d})=( \theta (g_{r-1}) ,\theta (g_{0}),\ldots,\theta (g_{r-1}), \Theta (b_{s-1}), \Theta (b_{0}), \Theta (b_{1}),\ldots, \Theta (b_{s-2})) \in \mathfrak{D}.\]
    Therefore,
    \begin{align*}
       (\theta (g_{r-1}),\theta (g_0),\ldots,\theta (g_{r-2}),\theta (n_{1,s-1}),\theta (n_{1,0}),\ldots,\theta (n_{1,s-2}))&\in \mathfrak{D}_1\\
       (\theta (g_{r-1}),\theta (g_0),\ldots,\theta (g_{r-2}),\theta (n_{2,s-1}),\theta(n_{2,0}),\ldots,\theta (n_{2,s-2}))&\in \mathfrak{D}_2\\
       (\theta(g_{r-1}),\theta(g_0),\ldots,\theta(g_{r-2}),\theta(n_{3,s-1}),\theta(n_{3,0}),\ldots,\theta(n_{3,s-2}))&\in \mathfrak{D}_3\\
       (\theta(g_{r-1}),\theta(g_0),\ldots,\theta(g_{r-2}),\theta(n_{4,s-1}),\theta(n_{4,0}),\ldots,\theta(n_{4,s-2}))&\in \mathfrak{D}_4.
    \end{align*} 
    Consequently, it follows that  $\mathfrak{D}_i$ for $i=1,2,3,4$ is    $\theta$-cyclic code of   block length  $(r,s)$ over $\mathbb{F}_q$.\par
 The converse part can be proved by using similar arguments.
\end{proof}

Now, we  determine    the generator polynomials for  $(\theta, \Theta)$-cyclic codes of block length $(r,s)$  over $\mathbb F_qR$. Additionally, we shall investigate the size of this  family of codes and examine the minimum generating set.

\begin{theorem}\label{theo-4.4}
    Let  $\mathfrak{D}$ be a  $(\theta, \Theta)$-cyclic code of block length $(r,s)$ over $\mathbb{F}_qR$. Then \[\mathfrak{D} = \langle(\ell(x), 0), (s(x),t(x)\rangle,\] where the  polynomials   $\ell(x), s(x)\in \mathbb{F}_q[x:\theta]$ with  $\ell(x)\mid _r~x^r-1$ and   $t(x)= \kappa_1t_1(x) + \kappa_2t_2(x)+\kappa_3t_3(x)+\kappa_4t_4(x)$,~  $t_i(x)\in \mathbb 
 F_q[x:\theta]$ with $t_i(x)\mid_r~(x^s-1)$ in  $\mathbb F_q[x:\theta ],$  for  $i=1,2,3,4$ and $t(x)\mid_r x^s-1$ in $R[x:\Theta ]$.  
    \end{theorem}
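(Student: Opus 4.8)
The plan is to mimic the classical argument for double cyclic codes (as in \cite{BCT18, AHS22}), adapting it to the mixed alphabet $\mathbb F_qR$ and the skew setting. Consider the projection $\pi : \mathfrak{R}_{r,s} \to \frac{R[x:\Theta]}{\langle x^s-1\rangle}$ onto the second coordinate; this is a left $R[x:\Theta]$-module homomorphism. Since $\mathfrak D$ is a left $R[x:\Theta]$-submodule of $\mathfrak{R}_{r,s}$, its image $\pi(\mathfrak D)$ is a left $R[x:\Theta]$-submodule of $\frac{R[x:\Theta]}{\langle x^s-1\rangle}$, i.e.\ a $\Theta$-cyclic code over $R$. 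By Proposition~\ref{prop-4.3}, $\pi(\mathfrak D) = \langle t(x)\rangle$ with $t(x) = \kappa_1 t_1(x) + \kappa_2 t_2(x) + \kappa_3 t_3(x) + \kappa_4 t_4(x)$, where each $t_i(x) \mid_r x^s-1$ in $\mathbb F_q[x:\theta]$ and $t(x)\mid_r x^s-1$ in $R[x:\Theta]$. Pick any preimage $(s(x), t(x)) \in \mathfrak D$.

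Next, consider the kernel of $\pi$ restricted to $\mathfrak D$, namely $\mathfrak D \cap \ker\pi = \{(\ell(x),0) : (\ell(x),0)\in\mathfrak D\}$. The set of first coordinates of these elements forms a left $R[x:\Theta]$-submodule of $\frac{\mathbb F_q[x:\theta]}{\langle x^r-1\rangle}$, where the action factors through $\eta$; concretely it is a $\theta$-cyclic code over $\mathbb F_q$ of length $r$, hence a left ideal of $\frac{\mathbb F_q[x:\theta]}{\langle x^r-1\rangle}$. By the standard structure theorem for skew cyclic codes (using the right division algorithm, Lemma~\ref{lem-2.5}, and that $x^r-1$ is central since $\ord(\theta)\mid r$) such an ideal is generated by a single monic $\ell(x)$ with $\ell(x)\mid_r x^r-1$; thus $\mathfrak D \cap \ker\pi = \langle(\ell(x),0)\rangle$.

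It then remains to show $\mathfrak D = \langle (\ell(x),0), (s(x),t(x))\rangle$. The inclusion $\supseteq$ is immediate since both generators lie in $\mathfrak D$ and $\mathfrak D$ is a submodule. For $\subseteq$, take any $(a(x),b(x))\in\mathfrak D$. Then $b(x)\in\pi(\mathfrak D) = \langle t(x)\rangle$, so $b(x) = c(x)\,t(x)$ for some $c(x)\in R[x:\Theta]$ (working modulo $x^s-1$). Then $(a(x),b(x)) - c(x)\star(s(x),t(x)) = (a(x) - \eta(c(x))s(x),\,0) \in \mathfrak D \cap \ker\pi = \langle(\ell(x),0)\rangle$, so this difference is $d(x)\star(\ell(x),0)$ for some $d(x)$, and hence $(a(x),b(x)) \in \langle(\ell(x),0),(s(x),t(x))\rangle$.

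The main obstacle I anticipate is the careful bookkeeping in the non-commutative/skew setting: one must be sure that "kernel first-coordinates" and "image" are genuinely closed under the twisted scalar action $r(x)\star(\,\cdot\,) = (\eta(r(x))\,\cdot\,,\ r(x)\,\cdot\,)$, and that the single-generator structure theorems invoked really do require $\ord(\theta)\mid r$ and $\ord(\Theta)\mid s$ (so that $x^r-1$, $x^s-1$ generate two-sided ideals, by Proposition~\ref{prop-2.2}) — if these divisibility hypotheses are not assumed here, one only gets submodules rather than ideals and the generator $\ell(x)$ (resp.\ the $t_i$) need not be a right divisor of $x^r-1$ in the naive sense; this point should be stated explicitly. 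The reduction step using $\eta$ also needs the observation that $\eta$ extends to a ring homomorphism $R[x:\Theta]\to\mathbb F_q[x:\theta]$ compatible with $\Theta$ and $\theta$, which follows from $\eta\circ\Theta = \theta\circ\eta$.
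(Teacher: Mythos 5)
Your argument is essentially identical to the paper's: the same projection $\pi$ onto the second coordinate, the same appeal to Proposition \ref{prop-4.3} for $\pi(\mathfrak D)=\langle t(x)\rangle$, and the same identification of $\ker\pi$ with $\langle(\ell(x),0)\rangle$ via the single-generator structure of skew cyclic codes, followed by lifting $t(x)$ to $(s(x),t(x))$. Your explicit subtraction step for the inclusion $\mathfrak D\subseteq\langle(\ell(x),0),(s(x),t(x))\rangle$ and your remark about the role of $\ord(\theta)\mid r$, $\ord(\Theta)\mid s$ are slightly more careful than the paper's write-up, but the proof is the same.
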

\begin{proof}
    As both  $\mathfrak{D}$ and $\frac{ R[x:\Theta]}{\langle x^s-1\rangle}$  are left $R[x:\Theta]$-submodule of $\mathfrak{R}_{r,s}$. We define a left $R[x:\Theta]$-module homomorphism $\pi $ as follows:
    
     \[\pi :\mathfrak{D}\to \frac{ R[x:\Theta]}{\langle x^s-1\rangle}\] and  given by \[\pi(d_1(x), d_2(x)) = d_2(x).\]  Clearly, $\pi(\mathfrak{D})$ is  a left-submodule of    $\frac{ R[x:\Theta]}{\langle x^s-1\rangle}$. 
     Then by Proposition \ref{prop-4.3}, we get that  $\pi(\mathfrak{D}) =\langle t(x)\rangle.$ Also, Ker$ (\pi) =\{(d_1(x), 0)\in \mathfrak{R}_{r, s}\mid  (d_1(x), d_2(x))\in \mathfrak{D}\}.$ Consider  a set    \[K = \{f(x)\in{\mathbb{F}_q[x:\theta]} \mid (f(x), 0)\in \mbox{Ker}(\pi)\}.\] 
      The set  $K$ is an left-submodule   of $\frac{\mathbb{F}_q[x:\theta]}{\langle x^r-1\rangle}$. Therefore    $K = \langle \ell(x)\rangle$ with $\ell(x)\mid_r~(x^r-1)$. Hence, any  $(f(x), 0)\in \mbox{Ker}(\pi)$ and $f(x)  = \mu(x)\ell(x)$, where    $\mu(x)\in \mathbb F_q[x:\theta ]$. Consequently, we can express $(f(x),0) $ as  $\lambda(x)\star(\ell(x),0).$ 
    Thus,  Ker($\pi)=\langle\ell(x),0\rangle$ is a left $R[x:\Theta]$-submodule of $\mathfrak{D}$. Moreover,
    \[\frac{\mathfrak{D}}{\mbox{Ker}(\pi)}\cong \pi(\mathfrak{D})=\langle t(x)\rangle .\] Suppose $(s(x), t(x))\in \mathfrak{D}$, then $\pi(s(x),t(x)) = t(x).$ Therefore, 
 we conclude that the generator polynomials  of   $(\theta,\Theta)$-cyclic code of block length $(r,s)$ over $\mathbb{F}_qR$   are    $(\ell(x),0)$ and $ (s(x),t(x))$ . More precisely,  $\mathfrak{D} = \langle(\ell(x), 0), (s(x),t(x)\rangle$.\par
 \end{proof}
    \begin{lemma}\label{lem-3.13}
        Let $\mathfrak{D} = \langle(\ell(x), 0), (s(x),t(x)\rangle$, where $x^r-1=f(x)\ell(x)$,  $t(x)= \kappa_1t_1(x) + \kappa_2t_2(x)+\kappa_3t_3(x)+\kappa_4t_4(x)$  and  $x^s-1= h_i(x)t_i(x)$ for $i=1,2,3,4.$ Then $\deg s(x)< \deg \ell(x)$,  $\ell(x)\mid_r~h_1(x) s(x)$
    and $\lcm_l~(s(x), \ell(x))\mid_r~h_1(x) s(x)$. 
    \end{lemma}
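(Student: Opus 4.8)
The plan is to exploit the fact that $(\ell(x),0)$ generates the kernel of the projection $\pi$ onto the second coordinate, exactly as in the proof of Theorem \ref{theo-4.4}. First I would establish $\deg s(x) < \deg \ell(x)$. Given the generator $(s(x),t(x))$, apply the right division algorithm (Lemma \ref{lem-2.5}) to divide $s(x)$ by $\ell(x)$ in $\mathbb{F}_q[x:\theta]$, writing $s(x) = \mu(x)\ell(x) + s'(x)$ with $s'(x)=0$ or $\deg s'(x) < \deg \ell(x)$. Then $(s(x),t(x)) - \mu(x)\star(\ell(x),0) = (s'(x),t(x)) \in \mathfrak{D}$, and this pair generates $\mathfrak{D}$ together with $(\ell(x),0)$ just as well, because $\mu(x)\star(\ell(x),0)\in\langle(\ell(x),0)\rangle$. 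So without loss of generality we may take the generator to have second-coordinate-free part of degree strictly less than $\deg\ell(x)$; this is the standard normalization, and I would phrase the lemma's conclusion accordingly (i.e. after choosing the generators in reduced form).

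Next, for the divisibility $\ell(x)\mid_r h_1(x)s(x)$: the key observation is that $h_1(x)$ annihilates the first idempotent component of $t(x)$. Since $x^s-1 = h_1(x)t_1(x)$ and $\kappa_1 t(x) \equiv \kappa_1 t_1(x)$ modulo the appropriate relations, multiplying $(s(x),t(x))$ on the left by $\kappa_1 h_1(x)$ gives $\kappa_1 h_1(x)\star(s(x),t(x)) = (\eta(\kappa_1 h_1(x))\,h_1(x)\,s(x)\text{-type term},\ \kappa_1 h_1(x)t(x))$ — more carefully, I would compute $\kappa_1 h_1(x) t(x) = \kappa_1 h_1(x) t_1(x) = \kappa_1(x^s-1) \equiv 0$ in $\frac{R[x:\Theta]}{\langle x^s-1\rangle}$. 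Here I must be attentive to the noncommutativity and to the action of $\eta$ on $\star$: since $\eta(\kappa_1)=1$, the scalar action of $\kappa_1 h_1(x)$ on the first coordinate is multiplication by $h_1(x)$ (up to the image under $\eta$, which fixes $\mathbb{F}_q$-coefficients). Thus $\kappa_1 h_1(x)\star(s(x),t(x)) = (h_1(x)s(x),\,0) \in \ker\pi$, and therefore $h_1(x)s(x)\in K = \langle\ell(x)\rangle$, i.e. $\ell(x)\mid_r h_1(x)s(x)$.

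Finally, $\lcm_l(s(x),\ell(x))\mid_r h_1(x)s(x)$: we already have $\ell(x)\mid_r h_1(x)s(x)$, and trivially $s(x)\mid_r h_1(x)s(x)$ since $h_1(x)s(x)$ is a left multiple of $s(x)$. Wait — I need $s(x)$ to be a \emph{right} divisor, so I should instead note $h_1(x)s(x) = h_1(x)\cdot s(x)$ exhibits $s(x)$ as a right divisor directly. By the definition of left lcm (Definition \ref{def-2.8}), any common right-multiple... — more precisely, since $h_1(x)s(x)$ is right-divisible by both $\ell(x)$ and $s(x)$, and $\lcm_l(s(x),\ell(x))$ is by definition the minimal such common "left lcm", the defining universal property gives $\lcm_l(s(x),\ell(x))\mid_r h_1(x)s(x)$. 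The main obstacle I anticipate is bookkeeping the noncommutative multiplication and the interaction between the $\star$-action (which twists the first coordinate through $\eta$) and the idempotent decomposition — in particular making sure that $\kappa_1 h_1(x) t(x)$ genuinely vanishes modulo $x^s-1$ and that the first coordinate comes out as exactly $h_1(x)s(x)$ rather than a twisted variant. Everything else is a direct application of the division algorithm and the universal properties of $\gcd_r$ and $\lcm_l$ recorded in Lemmas \ref{lem-2.5}, \ref{lem-2.9}, and \ref{lem-2.10}.
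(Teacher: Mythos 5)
Your proposal is correct and follows essentially the same route as the paper: right-divide $s(x)$ by $\ell(x)$ to reduce to a generator with $\deg s(x)<\deg\ell(x)$, annihilate the second coordinate by a multiplier built from $h_1(x)$ and the idempotents (the paper uses $\sum_i\kappa_ih_i(x)$, you use $\kappa_1h_1(x)$; both land on $(h_1(x)s(x),0)\in\ker\pi=\langle(\ell(x),0)\rangle$ since $\eta(\kappa_1)=1$), and then invoke the universal property of $\lcm_l$ from the two right-divisibilities. Your explicit remark that the degree claim is really a normalization of the generators is, if anything, a more honest phrasing of what the paper's own argument establishes.
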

    \begin{proof} Assume  $\deg(s(x))>\deg(\ell(x))$, then  by using division algorithm in $\mathbb{F}_q[x:\theta],$ we have  \[s(x) = \mu_1(x)\ell(x)+\mu_2(x),\] where $\mu_2(x) = 0$ or $\deg (\mu_2(x))<\deg(\ell(x)).$
    \begin{align*}
    (\mu_2(x),t(x))&=(s(x)-\mu_1(x)\ell(x),t(x))\\
    &=(s(x),t(x))-\mu_1(x)(\ell(x),0)\in \mathfrak{D}\\
    (s(x),t(x))&= (\mu_2(x),t(x))+\mu_1(x)(\ell(x),0).
    \end{align*}
    Hence, from this argument, it is clear that  $\deg(s(x))<\deg(\ell(x)).$
    \par
    \noindent Now, 
    \begin{align*}
       (\kappa_1h_1(x)+\kappa_2h_2(x)+\kappa_3h_3(x)+\kappa_4h_4(x))\star(s(x),t(x))&=(h_1(x)s(x),0)\\
        &=\langle(\ell(x),0)\rangle,
    \end{align*}
    where $t(x)=\kappa_1t_1(x)+\kappa_2t_2(x)+\kappa_3t_3(x)+\kappa_4t_4(x).$
    Thus, $\ell(x)\mid_rh_1(x)s(x).$\par
Since  $\ell(x)\mid_r (h_1(x)s(x))$  and $s(x)\mid_r (h_1(x)s(x))$. Hence,   $\lcm_l~(\ell(x),s(x))\mid_rh_1(x)s(x).$
    \end{proof}
    
    Now,  we deliberate algebraic structure of  separable codes. 
 \begin{definition}
     Let $\mathfrak{D} $ be  
 a linear code of block length $(r,s)$ over $\mathbb{F}_qR$. Suppose the set   $\mathfrak C_r$ is  obtained by deleting last $s $ coordinates  and    $\mathfrak C_s$ is obtained by deleting first $r$ coordinates from $\mathfrak D$.    It is worth noting  that $\mathfrak C_r$ and $C_s$ are  linear code of over $\mathbb F_q$ and $R$, respectively, then  $\mathfrak{D}$ is referred to as separable code if  $\mathfrak{D}= \mathfrak  C_r\times \mathfrak C_s$.  
    \end{definition}
 
   \begin{theorem}\label{theo-4.6}
   Let $\mathfrak{D}=\mathfrak C_r\times \mathfrak C_s$ be a separable code. Then $\mathfrak{D}$
 is called      separable  $(\theta,\Theta)$-cyclic code if and only if $\mathfrak C_r$ and  $\mathfrak C_s$  are  $\theta$ and $ \Theta $-cyclic code over $\mathbb F_q$ and $R$, respectively.  
   \end{theorem}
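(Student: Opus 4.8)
The plan is to prove both implications directly from the definitions, exploiting the defining feature of a separable code: for $\mathfrak{D}=\mathfrak{C}_r\times\mathfrak{C}_s$ a vector $(\mathbf{g},\mathbf{b})$ belongs to $\mathfrak{D}$ if and only if $\mathbf{g}\in\mathfrak{C}_r$ and $\mathbf{b}\in\mathfrak{C}_s$ independently, and the restriction of $\rho_{(\theta,\Theta)}$ to the first $r$ coordinates is the $\theta$-cyclic shift on $\mathbb{F}_q^r$ while its restriction to the last $s$ coordinates is the $\Theta$-cyclic shift on $R^s$.

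For the ``only if'' direction, assume $\mathfrak{D}$ is a separable $(\theta,\Theta)$-cyclic code. Pick an arbitrary $\mathbf{g}=(g_0,\ldots,g_{r-1})\in\mathfrak{C}_r$. By linearity the all-zero word $\mathbf{0}$ lies in $\mathfrak{C}_s$, so separability gives $(\mathbf{g},\mathbf{0})\in\mathfrak{D}$. Applying $\rho_{(\theta,\Theta)}$ and using $\Theta(0)=0$ yields $(\theta(g_{r-1}),\theta(g_0),\ldots,\theta(g_{r-2}),0,\ldots,0)\in\mathfrak{D}$; deleting the last $s$ coordinates (which is exactly the operation defining $\mathfrak{C}_r$) shows the $\theta$-cyclic shift of $\mathbf{g}$ is again in $\mathfrak{C}_r$, so $\mathfrak{C}_r$ is $\theta$-cyclic over $\mathbb{F}_q$. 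The mirror argument applied to $(\mathbf{0},\mathbf{b})$ for $\mathbf{b}\in\mathfrak{C}_s$, deleting the first $r$ coordinates, shows $\mathfrak{C}_s$ is $\Theta$-cyclic over $R$.

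For the ``if'' direction, suppose $\mathfrak{C}_r$ is $\theta$-cyclic and $\mathfrak{C}_s$ is $\Theta$-cyclic. First I would record that $\mathfrak{D}=\mathfrak{C}_r\times\mathfrak{C}_s$ is a linear code of block length $(r,s)$ over $\mathbb{F}_qR$: it is closed under addition as a product of two $\mathbb{F}_q$-linear, resp. $R$-linear, codes, and it is closed under the $\star$-action since $r\star(\mathbf{g},\mathbf{b})=(\eta(r)\mathbf{g},r\mathbf{b})$ with $\eta(r)\mathbf{g}\in\mathfrak{C}_r$ and $r\mathbf{b}\in\mathfrak{C}_s$. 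Then for any $(\mathbf{g},\mathbf{b})\in\mathfrak{D}$ we have $\mathbf{g}\in\mathfrak{C}_r$ and $\mathbf{b}\in\mathfrak{C}_s$, so the $\theta$-shift of $\mathbf{g}$ lies in $\mathfrak{C}_r$ and the $\Theta$-shift of $\mathbf{b}$ lies in $\mathfrak{C}_s$; since $\rho_{(\theta,\Theta)}(\mathbf{g},\mathbf{b})$ is precisely the concatenation of these two shifts, it lies in $\mathfrak{C}_r\times\mathfrak{C}_s=\mathfrak{D}$. Hence $\mathfrak{D}$ is $(\theta,\Theta)$-cyclic.

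The whole argument is essentially bookkeeping, and I do not anticipate a substantive obstacle; the only points needing a little care are (i) in the converse, checking that $\mathfrak{C}_r\times\mathfrak{C}_s$ really is an $R$-submodule of $\mathbb{F}_q^r\times R^s$ for the action $\star$, and (ii) in the forward direction, making sure the coordinate projections used are literally the ``delete last $s$'' and ``delete first $r$'' maps appearing in the definition of $\mathfrak{C}_r$ and $\mathfrak{C}_s$, so that membership transfers verbatim.
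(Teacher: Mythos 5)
Your proposal is correct and follows essentially the same route as the paper: both directions reduce to the observation that $\rho_{(\theta,\Theta)}$ acts block-wise and that membership in $\mathfrak{D}=\mathfrak{C}_r\times\mathfrak{C}_s$ transfers componentwise. The only cosmetic difference is that you shift the specific codeword $(\textbf{g},\textbf{0})$ in the forward direction while the paper shifts an arbitrary codeword of $\mathfrak{D}$ and then projects; your added check that $\mathfrak{C}_r\times\mathfrak{C}_s$ is an $R$-submodule is a harmless extra.
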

   \begin{proof}
    Assume that  $\mathfrak{D}$ is a separable  $(\theta, \Theta)$-cyclic  code  over $\mathbb F_qR$ and \linebreak $ (g_0, g_1,\ldots, g_{r-1}, b_0,b_1,\ldots, b_{s-1}) \in \mathfrak{D}$, then $(g_0, g_1,\ldots, g_{r-1})\in \mathfrak C_r$, $(b_0,b_1,\ldots, b_{s-1})\in \mathfrak C_s.$  So  \[(\theta(g_{r-1}), \theta (g_0),\ldots, \theta (g_{r-2}) , \Theta (b_{s-1}),\Theta (b_0),\ldots, \Theta(b_{s-2}))\in \mathfrak{D},\] which implies $ (\theta(g_{r-1}), \theta(g_0),\ldots, \theta(g_{r-2}))\in \mathfrak C_r$ and $(\Theta (b_{s-1}),\Theta (b_0),\ldots,\theta (b_{s-2}))\in \mathfrak C_s.$ Therefore, the result follows for one side.\par  
    For converse, assume that  $\mathfrak C_r$ and  $\mathfrak C_s$  are  $\theta$ and $ \Theta $-cyclic code over $\mathbb F_q$ and $R$, respectively and $(g_0, g_1,\ldots, g_{r-1})\in \mathfrak C_r, (b_0,b_1,\ldots, b_{s-1})\in \mathfrak C_s$, then $ (\theta(g_{r-1}), \theta(g_0),\ldots, \theta(g_{r-2}))\in \mathfrak C_r$ and  $(\Theta (b_{s-1}),\Theta (b_0),\ldots,\Theta (b_{s-2}))\in \mathfrak C_s.$
       Therefore, \[(\theta(g_{r-1}), \theta(g_0),\ldots, \theta(g_{r-2}),\Theta (b_{s-1}),\Theta (b_0),\ldots,\theta (b_{s-2}))\in  \mathfrak{D} = \mathfrak C_r\times \mathfrak C_s.\] Therefore, it  confirms that   $\mathfrak{D}$ is a separable  $(\theta,\Theta)$-cyclic code over $\mathbb F_qR$.
   \end{proof}
  Now, we explicitly present the generator polynomials of $\mathfrak C_r$ and $\mathfrak C_s$.
   \begin{theorem}\label{theo-4.7}
    Let  $\mathfrak{D}=\langle(\ell(x),0),(s(x),t(x))\rangle$ .  Then $\mathfrak C_r=\langle \gcd_r(\ell(x),s(x))\rangle$ and $\mathfrak C_s=\langle t(x)\rangle$.
   \end{theorem}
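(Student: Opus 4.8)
The plan is to analyze the two natural projections of $\mathfrak D$ onto its coordinate blocks and show that the images are generated by the claimed polynomials. Recall that $\mathfrak C_r$ is obtained from $\mathfrak D$ by deleting the last $s$ coordinates, so in polynomial language $\mathfrak C_r = \{g(x) \in \frac{\mathbb F_q[x:\theta]}{\langle x^r-1\rangle} \mid (g(x), b(x)) \in \mathfrak D \text{ for some } b(x)\}$; similarly $\mathfrak C_s = \pi(\mathfrak D)$ for the projection $\pi$ of Theorem~\ref{theo-4.4}. First I would treat $\mathfrak C_s$: since $\mathfrak C_s = \pi(\mathfrak D)$ and $\pi$ is precisely the left $R[x:\Theta]$-module homomorphism used in the proof of Theorem~\ref{theo-4.4}, we already have $\pi(\mathfrak D) = \langle t(x)\rangle$ there, so $\mathfrak C_s = \langle t(x)\rangle$ with no further work. (One should note that separability is not even needed for this half.)

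For $\mathfrak C_r$, the key observation is that a polynomial $g(x)$ lies in $\mathfrak C_r$ exactly when $(g(x),0) \in \mathfrak D$, using separability: indeed if $(g(x),b(x)) \in \mathfrak D$ then since $\mathfrak D = \mathfrak C_r \times \mathfrak C_s$ we also have $(g(x),0) \in \mathfrak D$. Now from $\mathfrak D = \langle (\ell(x),0),(s(x),t(x))\rangle$, a general element of $\mathfrak D$ whose second component vanishes is of the form $\lambda(x)\star(\ell(x),0) + \mu(x)\star(s(x),t(x))$ with $\mu(x) t(x) \equiv 0 \pmod{x^s-1}$; its first component is $\eta(\lambda(x))\ell(x) + \eta(\mu(x)) s(x)$. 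Taking $\mu(x) = h(x)$ where $x^s-1 = h(x) t(x)$ (which exists since $t(x)\mid_r x^s-1$) shows $h(x) s(x) \bmod (x^r-1)$, hence eventually $\ell(x)$ and $s(x)$ themselves — more directly, $(\ell(x),0) \in \mathfrak D$ gives $\ell(x) \in \mathfrak C_r$, and the element $(s(x),t(x))$ together with separability gives $(s(x),0)\in \mathfrak D$, hence $s(x)\in \mathfrak C_r$. Therefore $\langle \ell(x), s(x)\rangle \subseteq \mathfrak C_r$, and since any skew polynomial combination of first components appearing in $\mathfrak D$ lies in $\langle \ell(x)\rangle + \langle s(x)\rangle$, we get the reverse inclusion, so $\mathfrak C_r = \langle \ell(x), s(x)\rangle$ as a left submodule of $\frac{\mathbb F_q[x:\theta]}{\langle x^r-1\rangle}$.

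It remains to identify $\langle \ell(x), s(x)\rangle$ with $\langle \gcd_r(\ell(x), s(x))\rangle$. Here I would invoke the structure of left submodules of $\frac{\mathbb F_q[x:\theta]}{\langle x^r-1\rangle}$: every such submodule is principal, generated by a right divisor of $x^r-1$, and the submodule generated by two elements $\ell(x), s(x)$ (both already right divisors of $x^r-1$, using Lemma~\ref{lem-3.13} for $s(x)$ in the relevant sense, or by reducing modulo $x^r-1$) is generated by their right gcd. Concretely, write $d(x) = \gcd_r(\ell(x),s(x))$; then $d(x) \mid_r \ell(x)$ and $d(x)\mid_r s(x)$ give $\langle \ell(x), s(x)\rangle \subseteq \langle d(x)\rangle$, while a Bézout-type identity in $\mathbb F_q[x:\theta]$ (available since $\mathbb F_q[x:\theta]$ is a left-and-right Euclidean domain, cf. Lemma~\ref{lem-2.5}) writes $d(x)$ as a left combination $a(x)\ell(x) + b(x)s(x)$, giving the reverse inclusion. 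The main obstacle I anticipate is the bookkeeping around the scalar action $\star$ versus ordinary multiplication: because $R$ acts on the $\mathbb F_q$-component through $\eta$, one must check that the set of first components of elements of $\mathfrak D$ is genuinely closed under left multiplication by $\mathbb F_q[x:\theta]$ (not just by $\eta(R[x:\Theta])$), which follows since $\eta$ is surjective, and that passing to the quotient by $\langle x^r-1\rangle$ does not disturb the gcd computation — handled by noting $\ell(x)\mid_r x^r-1$ so $\langle \ell(x)\rangle$ is well-defined in the quotient, and similarly for $d(x)$ since $d(x)\mid_r \ell(x)\mid_r x^r-1$.
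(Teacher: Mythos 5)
Your treatment of $\mathfrak C_s$ via the projection $\pi$ of Theorem~\ref{theo-4.4} is fine, and your identification of $\langle \ell(x), s(x)\rangle$ with $\langle \gcd_r(\ell(x),s(x))\rangle$ via the right-gcd B\'ezout identity is exactly the mechanism the paper uses. But there is a genuine flaw in the middle of your argument for $\mathfrak C_r$: you assert that $g(x)\in\mathfrak C_r$ exactly when $(g(x),0)\in\mathfrak D$, and you justify this by writing $\mathfrak D=\mathfrak C_r\times\mathfrak C_s$. Theorem~\ref{theo-4.7} does not assume $\mathfrak D$ is separable; it is stated for an arbitrary $(\theta,\Theta)$-cyclic code with the given generators, and indeed Theorem~\ref{theo-4.9} later uses it precisely to \emph{characterize} when $\mathfrak D$ is separable. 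For a non-separable $\mathfrak D$ your ``key observation'' is false (membership of $g(x)$ in the punctured code $\mathfrak C_r$ does not force $(g(x),0)\in\mathfrak D$), and if separability were genuinely needed the theorem would degenerate, since separability forces $\ell(x)\mid_r s(x)$ and hence $\gcd_r(\ell(x),s(x))=\ell(x)$ up to units. The detour through elements of $\mathfrak D$ with vanishing second component and the condition $\mu(x)t(x)\equiv 0\pmod{x^s-1}$ is analysing the shortened code (the kernel of $\pi$), which is not what $\mathfrak C_r$ is.

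The repair is short and is what the paper does: $\mathfrak C_r$ is the projection onto the first $r$ coordinates, so $s(x)\in\mathfrak C_r$ directly from $(s(x),t(x))\in\mathfrak D$ with no appeal to separability; and for the inclusion $\mathfrak C_r\subseteq\langle\gcd_r(\ell(x),s(x))\rangle$ one writes an arbitrary $(c(x),e(x))\in\mathfrak D$ as $\mu(x)\star(\ell(x),0)+\nu(x)\star(s(x),t(x))$ and reads off $c(x)=\eta(\mu(x))\ell(x)+\eta(\nu(x))s(x)$, which is right-divisible by $\gcd_r(\ell(x),s(x))$. The reverse inclusion is your B\'ezout step: $\gcd_r(\ell(x),s(x))=\mu(x)\ell(x)+\nu(x)s(x)$ realizes the gcd as the first component of $\mu(x)\star(\ell(x),0)+\nu(x)\star(s(x),t(x))\in\mathfrak D$. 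With the separability claim deleted and these two lines inserted, your proof coincides with the paper's.
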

   
  \begin{proof}
      Suppose  $c(x)\in \mathfrak C_r$, then for some  $e(x)\in \frac{R[x:\Theta]}{\langle x^s-1\rangle}$, we have    $(c(x),e(x))\in \mathfrak{D}$ . Since \linebreak $\mathfrak{D}=\langle(\ell(x), 0), (s(x),t(x)\rangle\ $,  then  $(c(x),e(x))=\mu(x)\star(\ell(x),0)+\nu(x)\star(s(x),t(x))$ for some   $\mu(x),\nu(x)\in R[x:\Theta]$. This shows that $c(x)=\eta(\mu(x))\ell(x)+\eta(\nu(x))s(x)$. It follows that  $\gcd _r(\ell(x),s(x))\mid_r~ c(x).$ Consequently, $c(x)\in \langle\gcd _r(\ell(x),s(x))\rangle.$ Thus,  $\mathfrak C_r\subseteq\langle\gcd_r(\ell(x),s(x))\rangle.$\par
      \vskip 3pt
      Conversely, since  ${\gcd}_r(\ell(x),s(x))=\mu(x)\ell(x)+\nu(x)s(x)$   for some  $\mu(x),~\nu(x)\in\mathbb F_q[x:\theta]$.  Therefore,  \[{\gcd}_r(\ell(x),s(x)),\nu(x)s(x))=\mu(x)\star(\ell(x),0)+\nu(x)\star(s(x),t(x))\in \mathfrak{D}.\] This infer that $\langle\gcd_r(\ell(x),s(x))\rangle\subseteq \mathfrak C_r.$ Hence, we deduce  that $\mathfrak C_r=\langle\gcd_r(\ell(x),s(x))\rangle.$ We 
 can   determine  $\mathfrak C_s=\langle t(x)\rangle$ by using similar steps. 
  \end{proof}
   
 \begin{proposition}\label{prop-4.8}
     Let  $\mathfrak{D} = \langle(\ell(x), 0),(s(x), t(x))\rangle$. Then  $\ell(x)~|_r~s(x)$ if and only if $s(x) = 0$.
 \end{proposition}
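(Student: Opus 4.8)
The plan is to prove both implications directly, working inside the module $\mathfrak{R}_{r,s}$ and exploiting the degree bound $\deg s(x) < \deg \ell(x)$ furnished by Lemma \ref{lem-3.13}. For the easy direction, if $s(x) = 0$, then trivially $\ell(x) \mid_r s(x)$ since $\ell(x) \mid_r 0$ (one may write $0 = 0\cdot \ell(x)$), so there is nothing to do. The substance is the forward implication: assume $\ell(x) \mid_r s(x)$, i.e.\ $s(x) = \mu(x)\ell(x)$ for some $\mu(x) \in \mathbb{F}_q[x:\theta]$, and deduce $s(x) = 0$.

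The key step is to use this divisibility to produce an element of $\mathfrak{D}$ that collapses the second generator. Since $s(x) = \mu(x)\ell(x)$, consider
\begin{align*}
(s(x), t(x)) - \mu(x)\star(\ell(x), 0) &= (s(x) - \mu(x)\ell(x),\ t(x))\\
&= (0,\ t(x)) \in \mathfrak{D}.
\end{align*}
This is harmless on its own; the real leverage comes from looking instead at the degree constraint. By Lemma \ref{lem-3.13} we know $\deg s(x) < \deg \ell(x)$. On the other hand, if $s(x) = \mu(x)\ell(x)$ with $s(x) \neq 0$, then $\mu(x) \neq 0$, and since $\ell(x)$ divides $x^r - 1$ it is in particular monic (or a unit times monic), so $\deg s(x) = \deg \mu(x) + \deg \ell(x) \geq \deg \ell(x)$. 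This directly contradicts $\deg s(x) < \deg \ell(x)$. Hence $s(x) = 0$, which is exactly what we wanted.

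The main (and essentially only) obstacle is making sure the degree arithmetic $\deg(\mu(x)\ell(x)) = \deg \mu(x) + \deg \ell(x)$ is legitimate in the skew polynomial ring $\mathbb{F}_q[x:\theta]$; this holds because leading coefficients multiply as $a\,\theta^{\deg\mu}(b)$ with $a,b$ nonzero and $\theta$ an automorphism, so the product of leading coefficients is nonzero and no degree drop occurs. One should also confirm that the reduced generator $s(x)$ in $\mathfrak{D} = \langle(\ell(x),0),(s(x),t(x))\rangle$ genuinely satisfies $\deg s(x) < \deg\ell(x)$ — but that is precisely the content of Lemma \ref{lem-3.13}, already established, so it may be cited without further argument. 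Thus the proof reduces to the one-line degree comparison above.
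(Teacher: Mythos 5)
Your proof is correct, but it takes a genuinely different route from the paper's. The paper argues by comparing generating sets: writing $s(x)=\nu(x)\ell(x)$, it observes that $(0,t(x))=(s(x),t(x))-\nu(x)\star(\ell(x),0)\in\mathfrak{D}$ and conversely $(s(x),t(x))\in\langle(\ell(x),0),(0,t(x))\rangle$, so $\mathfrak{D}=\langle(\ell(x),0),(0,t(x))\rangle$, and from this equality of codes it concludes $s(x)=0$. You instead invoke the degree normalization $\deg s(x)<\deg\ell(x)$ from Lemma \ref{lem-3.13} and the additivity of degrees in $\mathbb{F}_q[x:\theta]$ (valid since $\theta$ is an automorphism of a field, so leading coefficients cannot vanish) to get an immediate contradiction when $s(x)=\mu(x)\ell(x)\neq 0$. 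Your final step is arguably the cleaner one: the paper's inference ``$\mathfrak{D}=\mathfrak{D}'$, therefore $s(x)=0$'' silently relies on the same reduced-generator normalization that you cite explicitly, since equality of the two codes does not by itself force the polynomial $s(x)$ to vanish (one could, after all, generate the same code with $s(x)=\ell(x)$). Both arguments ultimately rest on Lemma \ref{lem-3.13}; yours makes that dependence visible, while the paper's buys a proof that does not need degree arithmetic in the skew ring.
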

 \begin{proof}
     Assume $s(x) = 0$, then it clearly holds.\par 
     Conversely, suppose  that  $\ell(x)~|_r~s(x)$.  So   $s(x) = \nu(x)\ell(x)$ for some polynomial  $\nu(x)\in \mathbb{F}_q[x:\theta]$. Now, let us assume \[\mathfrak{D}' =  \langle(\ell(x), 0),(0, t(x))\rangle.\] So,  $(0, t(x)) = (s(x),t(x))-\nu(x)\star(\ell(x),0)\in \mathfrak{D}$, implies that $\mathfrak{D}'\subseteq \mathfrak{D}$.
     Moreover,   $(s(x),t(x)) = \nu(x)\star(\ell(x),0)) + (0,t(x))\in \mathfrak{D}'$, indicating  that $\mathfrak{D}\subseteq \mathfrak{D}'.$  Therefore, we conclude that $s(x) = 0$. 
 \end{proof}
 We sum up our earlier findings in our following theorem.
  
 \begin{theorem}\label{theo-4.9}
      Let   $\mathfrak{D} = \langle(\ell(x), 0),(s(x), t(x))\rangle$. The  following are  equivalent: 
      \begin{enumerate}
        \item $\mathfrak C_r = \langle\ell(x)\rangle,\mathfrak C_s = \langle t(x)\rangle$;\
        \item $\ell(x)~|_r~s(x)$;\
         \item $\mathfrak{D}$ is a separable code;\

           \item $\mathfrak{D} = \langle(\ell(x),0),(0,t(x))\rangle$.

          \end{enumerate}
 \end{theorem}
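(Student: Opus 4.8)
The plan is to prove the chain of equivalences $(1)\Rightarrow(2)\Rightarrow(3)\Rightarrow(4)\Rightarrow(1)$, invoking the structural results already established for $(\theta,\Theta)$-cyclic codes, in particular Theorem \ref{theo-4.7}, Proposition \ref{prop-4.8}, and Lemma \ref{lem-3.13}.

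First I would establish $(1)\Rightarrow(2)$. By Theorem \ref{theo-4.7} we always have $\mathfrak C_r=\langle\gcd_r(\ell(x),s(x))\rangle$. The hypothesis $\mathfrak C_r=\langle\ell(x)\rangle$ together with $\ell(x)\mid_r x^r-1$ forces $\gcd_r(\ell(x),s(x))$ and $\ell(x)$ to generate the same left submodule of $\frac{\mathbb F_q[x:\theta]}{\langle x^r-1\rangle}$; since both are right divisors of $x^r-1$, a degree/divisibility comparison gives $\gcd_r(\ell(x),s(x))=\ell(x)$ up to a unit, hence $\ell(x)\mid_r s(x)$, which is $(2)$. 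Then $(2)\Rightarrow(4)$ is essentially Proposition \ref{prop-4.8}: its proof already shows that if $\ell(x)\mid_r s(x)$ then $\mathfrak D=\mathfrak D'=\langle(\ell(x),0),(0,t(x))\rangle$ (and conversely $s(x)=0$), so $(4)$ holds. For $(4)\Rightarrow(3)$, from the generating set $\langle(\ell(x),0),(0,t(x))\rangle$ any codeword of $\mathfrak D$ has the form $(\mu(x)\ell(x),\nu(x)t(x))$, so deleting the last $s$ coordinates and the first $r$ coordinates shows $\mathfrak D\subseteq\mathfrak C_r\times\mathfrak C_s$, and the reverse inclusion is immediate because $(\ell(x),0)$ and $(0,t(x))$ lie in $\mathfrak D$ and generate all of $\mathfrak C_r\times\mathfrak C_s$ under $\star$; hence $\mathfrak D=\mathfrak C_r\times\mathfrak C_s$ is separable.

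Finally I would close the loop with $(3)\Rightarrow(1)$. If $\mathfrak D=\mathfrak C_r\times\mathfrak C_s$ is separable, then $(\ell(x),0)\in\mathfrak D$ gives $\ell(x)\in\mathfrak C_r$ and $(s(x),t(x))\in\mathfrak D$ gives $s(x)\in\mathfrak C_r$, so $\langle\ell(x),s(x)\rangle\subseteq\mathfrak C_r$; combined with Theorem \ref{theo-4.7} this yields $\mathfrak C_r=\langle\gcd_r(\ell(x),s(x))\rangle\supseteq\langle\ell(x)\rangle$. For the reverse, since $\mathfrak D=\mathfrak C_r\times\mathfrak C_s$ and $(\ell(x),0)$ already generates a copy of $\langle\ell(x)\rangle$ in the first component, while $\mathfrak C_s=\langle t(x)\rangle$ by Theorem \ref{theo-4.7}, separability forces $\mathfrak C_r$ to be generated by the first-component data that can be realized with zero second component, i.e. $\mathfrak C_r=\langle\ell(x)\rangle$; and $\mathfrak C_s=\langle t(x)\rangle$ is already known. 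This gives $(1)$.

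The main obstacle I anticipate is the careful bookkeeping in $(1)\Leftrightarrow(3)$ regarding $\gcd_r(\ell(x),s(x))$ versus $\ell(x)$: one must argue that $\ell(x)$ being a right divisor of $x^r-1$ (so that $\langle\ell(x)\rangle$ has a well-defined ``size'' via $\deg\ell(x)$) lets us upgrade ``$\langle\gcd_r(\ell(x),s(x))\rangle=\langle\ell(x)\rangle$'' to the polynomial identity $\gcd_r(\ell(x),s(x))=\ell(x)$ (equivalently $\ell(x)\mid_r s(x)$), rather than merely an equality of generated submodules. Once that point is handled — using uniqueness of monic right divisors of $x^r-1$ in $\mathbb F_q[x:\theta]$ and the degree formula implicit in Lemma \ref{lem-2.9} — the remaining implications are short and follow the pattern of the proof of Proposition \ref{prop-4.8}.
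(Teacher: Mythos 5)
Your proposal is correct and follows essentially the same route as the paper: the paper's proof is the single sentence ``Theorem \ref{theo-4.7} and Proposition \ref{prop-4.8} together provide the proof,'' and your cycle of implications is built from exactly those two results (plus the kernel description $K=\langle\ell(x)\rangle$ from the proof of Theorem \ref{theo-4.4}, which is what makes your step $(3)\Rightarrow(1)$ rigorous: separability gives $(c(x),0)\in\mathfrak{D}$ for every $c(x)\in\mathfrak C_r$, hence $\mathfrak C_r\subseteq\langle\ell(x)\rangle$). You have simply supplied the details the paper leaves implicit, including the correct observation that $\langle\gcd_r(\ell(x),s(x))\rangle=\langle\ell(x)\rangle$ upgrades to $\gcd_r(\ell(x),s(x))=\ell(x)$ because both are monic right divisors of $x^r-1$ of equal degree.
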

\begin{proof}
    Theorem \ref{theo-4.7} and Proposition \ref{prop-4.8}  together provide the proof.
\end{proof}
In the subsequent theorem, we aim to study the minimal generating sets for $(\theta,\Theta)$-cyclic codes over $\mathbb{F}_qR$. This investigation is crucial for understanding the fundamental properties and structures of these codes. By characterizing the minimal generating sets, we can gain deeper insights into the algebraic and combinatorial aspects of $(\theta,\Theta)$-cyclic codes. 

\begin{theorem}\label{theo-4.12}
Let   $\mathfrak{D}= \langle(\ell(x),0),(s(x),t(x))\rangle$, where $x^r-1=f(x)\ell(x),~t(x)=\kappa_1t_1(x) + \kappa_2t_2(x)+\kappa_3t_3(x)+\kappa_4t_4(x)~\mbox{and}~ x^s-1= h_i(x)t_i(x),~\mbox{for}~i= 1,2,3,4. $ Suppose
\begin{align*}
    &\nonumber G_1= \bigcup\limits_{i=0}^{\deg(f(x))-1}\{x^i\star(\ell(x),0)\},\\
&\nonumber G_2= \bigcup\limits_{i=0}^{\deg(h_1(x))-1}\{x^i\star(s(x),\kappa_1t_1(x))\},\\
&\nonumber G_3= \bigcup\limits_{i=0}^{\deg(h_2(x))-1}\{x^i\star(0,\kappa_2t_2(x))\},\\
&\nonumber G_4= \bigcup\limits_{i=0}^{\deg(h_3(x))-1}\{x^i\star(0,\kappa_3t_3(x))\},\\
&\nonumber G_5= \bigcup\limits_{i=0}^{\deg(h_4(x))-1}\{x^i\star(0,\kappa_4t_4(x))\}.
\end{align*}
Then $G= G_1\cup G_2\cup G_3\cup G_4\cup G_5$ constitutes   a minimal generating set of   $\mathfrak{D}$. Additionally, $|\mathfrak{D}|= q^{\deg(f(x))+\sum_{i=1}^4\deg(h_i(x))}$. 
\end{theorem}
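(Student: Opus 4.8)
The plan is to prove the two assertions of the theorem in turn: that $G$ spans $\mathfrak D$ over $\mathbb F_q$, and that its $\deg(f(x))+\sum_{i=1}^{4}\deg(h_i(x))$ listed elements are $\mathbb F_q$-linearly independent; the size formula $|\mathfrak D|=q^{|G|}$ then follows. Throughout I would use the elementary facts $\eta(\kappa_1)=1$, $\eta(\kappa_2)=\eta(\kappa_3)=\eta(\kappa_4)=0$, the relations $\kappa_i\kappa_j=\delta_{ij}\kappa_i$, the fact that $\Theta$ fixes each $\kappa_i$ (their coefficients lie in the prime field) and that $a\mapsto\kappa_i a$ is injective on $\mathbb F_q$, and I would take $\ell(x),f(x),t_i(x),h_i(x)$ monic so that the right division of Lemma~\ref{lem-2.5} applies in $\mathbb F_q[x:\theta]$. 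A first check using $r(x)\star(g(x),b(x))=(\eta(r(x))g(x),r(x)b(x))$ shows $\kappa_1\star(s(x),t(x))=(s(x),\kappa_1t_1(x))$ and $\kappa_j\star(s(x),t(x))=(0,\kappa_jt_j(x))$ for $j=2,3,4$, so every generator in $G_1,\dots,G_5$ belongs to $\mathfrak D$ and hence the $\mathbb F_q$-span of $G$ is contained in $\mathfrak D$.

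For the spanning inclusion, take an arbitrary codeword, which by Theorem~\ref{theo-4.4} has the form $\mu(x)\star(\ell(x),0)+\nu(x)\star(s(x),t(x))$ with $\mu(x),\nu(x)\in R[x:\Theta]$. Writing $\nu(x)=\sum_{i=1}^{4}\kappa_i\nu_i(x)$ with $\nu_i(x)\in\mathbb F_q[x:\theta]$ and using orthogonality of the idempotents, one gets $\nu(x)\star(s(x),t(x))=(\nu_1(x)s(x),\kappa_1\nu_1(x)t_1(x))+\sum_{j=2}^{4}(0,\kappa_j\nu_j(x)t_j(x))$. Now divide: $\eta(\mu(x))=q_0(x)f(x)+A(x)$ with $\deg A<\deg f$, and $\nu_i(x)=q_i(x)h_i(x)+\bar\nu_i(x)$ with $\deg\bar\nu_i<\deg h_i$. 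Because $f(x)\ell(x)=x^r-1$ and $h_i(x)t_i(x)=x^s-1$, all the quotient contributions either vanish in $\mathfrak R_{r,s}$ (namely $q_0(x)f(x)\ell(x)$ and every $\kappa_iq_i(x)h_i(x)t_i(x)$) or, in the case of $q_1(x)h_1(x)s(x)$, lie in $\langle(\ell(x),0)\rangle$, since $\ell(x)\mid_r h_1(x)s(x)$ by Lemma~\ref{lem-3.13}. What survives is $(A(x)\ell(x)+(\text{a multiple of }\ell(x)),0)+\bar\nu_1(x)\star(s(x),\kappa_1t_1(x))+\sum_{j=2}^{4}\bar\nu_j(x)\star(0,\kappa_jt_j(x))$, and after reducing the remaining multiple of $\ell(x)$ once more modulo $f(x)$ this is an $\mathbb F_q$-combination of the elements of $G_1$, of $G_2$, and of $G_3,G_4,G_5$ respectively. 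Hence $\mathfrak D$ equals the $\mathbb F_q$-span of $G$.

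For independence, suppose an $\mathbb F_q$-combination of the elements of $G$ is zero. Collecting terms, the $G_1$-part is $(A(x)\ell(x),0)$ with $\deg A<\deg f$, the $G_2$-part is $(B_1(x)s(x),\kappa_1B_1(x)t_1(x))$ with $\deg B_1<\deg h_1$, and the $G_{j+1}$-parts ($j=2,3,4$) are $(0,\kappa_jB_j(x)t_j(x))$ with $\deg B_j<\deg h_j$; the bounds $\deg B_i<\deg h_i$ prevent any reduction in the second coordinates, which therefore sum literally to $\sum_{i=1}^{4}\kappa_iB_i(x)t_i(x)$. Setting this to $0$ and multiplying by $\kappa_i$ gives $\kappa_iB_i(x)t_i(x)=0$, hence $B_i(x)t_i(x)=0$ in $\mathbb F_q[x:\theta]$ by injectivity of $a\mapsto\kappa_i a$; since $\mathbb F_q[x:\theta]$ has no zero divisors and $t_i(x)\neq0$ we get $B_1(x)=B_2(x)=B_3(x)=B_4(x)=0$. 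The first coordinate now collapses to $A(x)\ell(x)$, which has degree $<\deg f+\deg\ell=r$ and is $\equiv0\pmod{x^r-1}$, so $A(x)\ell(x)=0$ and $A(x)=0$. Thus $G$ is an $\mathbb F_q$-basis of $\mathfrak D$ and $|\mathfrak D|=q^{\deg(f(x))+\sum_{i=1}^{4}\deg(h_i(x))}$.

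I expect the spanning step to be the main obstacle: one has to keep careful track of how the module action $\star$ interacts with $\eta$ and, above all, to notice that the first-coordinate \emph{overflow} produced by the $G_2$-generators is absorbed exactly into $\langle(\ell(x),0)\rangle$, which is the $\mathbb F_q$-span of $G_1$ — this is precisely the content of Lemma~\ref{lem-3.13}. The independence step is then routine degree bookkeeping, using only the orthogonality of the idempotents and the absence of zero divisors in $\mathbb F_q[x:\theta]$.
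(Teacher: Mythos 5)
Your proposal is correct and follows essentially the same route as the paper's proof: decompose the scalars $\mu(x),\nu(x)$ via the orthogonal idempotents, apply right division by $f(x)$ and by each $h_i(x)$, and invoke Lemma~\ref{lem-3.13} to absorb the first-coordinate overflow $q_1(x)h_1(x)s(x)$ into $\mathrm{Span}(G_1)$. The only difference is that you spell out the linear-independence step (degree bounds plus orthogonality of the $\kappa_i$ and absence of zero divisors in $\mathbb F_q[x:\theta]$), which the paper merely asserts, so your write-up is if anything more complete.
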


\begin{proof}
Suppose  $d(x)\in \mathfrak{D}$,  then 
\[d(x)=m_1(x)\star(\ell(x),0)+ m_2(x)\star(s(x),\kappa_1t_1(x)+\kappa_2t_2(x)+\kappa_3t_3(x)+\kappa_4t_4(x)),\]
 for some  polynomials $m_1(x),m_2(x)\in R[x:\Theta].$
Let $m_1(x)=m'_0+m'_1x+\cdots+m'_\alpha x^\alpha$, where $m'_i=\kappa_1n'_{1,i}+\kappa_2n'_{2,i}+\kappa_3n'_{3,i}+\kappa_4n'_{4,i}$ for $ i=0,1,\ldots,\alpha.$ Then $m_1(x)$ can be  written as 
\begin{align*}
m_1(x)&=\kappa_1(n'_{1,0}+n'_{1,1}x+\cdots+n'_{1,\alpha} x^\alpha)+\kappa_2(n'_{2,0}+n'_{2,1}x+\cdots+n'_{2,\alpha}x^\alpha)\\
     &\hspace{1cm}+\kappa_3(n'_{3,0}+n'_{3,1}x+\cdots+n'_{3,\alpha} x^\alpha)+\kappa_4(n'_{4,0}+n'_{4,1}x+\cdots+n'_{4,\alpha} x^\alpha)\\
&=\kappa_1n_1'(x)+\kappa_2n_2'(x)+\kappa_3n_3'(x)+\kappa_4n_4'(x),
\end{align*}

where $n_i'(x)=n'_{i,0}+n'_{i,1}x+\cdots+n'_{i,\alpha} x^\alpha$ for $i=1,2,3,4.$ Now,
\begin{align*}
    m_1(x)\star(\ell(x),0)
    &= (\kappa_1n_1'(x)+\kappa_2n_2'(x)+\kappa_3n_3'(x)+\kappa_4n_4'(x))\star (\ell(x),0)\\
&=n_1'(x)\star(\eta(\kappa_1)\ell(x),0)+n_2'(x)\star(\eta(\kappa_2)\ell(x),0)\\
&\hspace{1cm}+n_3'(x)\star(\eta(\kappa_3)\ell(x),0)+n_4'(x)\star(\eta(\kappa_4)\ell(x),0)\\
    &= n_1'(x)\star(\ell(x),0).
    \end{align*}  
     In   scenario,  $\deg(n_1'(x))< \deg(f(x))$ implies $
    n_1'(x)\star(\ell(x),0)\in \mbox{Span}(G_1)$. In another scenario, by division algorithm, we have   $n_1'(x)= \mu(x)f(x)+\nu(x)$  such that  $\nu(x)=0$ or $\deg(\nu(x))< \deg(f(x))$, for  $\mu(x), \nu(x)\in \mathbb F_q[x:\theta].$ Therefore, we get 
    \begin{align*}
      n_1'(x)\star(\ell(x),0)
      &= (\mu(x)f(x)+\nu(x))\star(\ell(x),0)\\
      &= \mu(x)f(x)\star(\ell(x),0)+\nu(x)\star(\ell(x),0)\\
      &= 0+\nu(x)\star(\ell(x),0).
      \end{align*}
    Thus, it concludes    $n_1'(x)\star(\ell(x),0)\in \mbox{Span}(G_1)$. Now, it remains  to prove that 
     \[m_2(x)\star(s(x),\kappa_1t_1(x)+\kappa_2t_2(x)+\kappa_3t_3(x)+\kappa_4t_4(x))\in\mbox{Span}(G_1\cup G_2\cup G_3\cup G_4\cup G_5).\]
     Let $m_2(x)= m''_0+m''_1x+\cdots+m''_\beta x^\beta$, where $m''_i= \kappa_1n''_{1,i}+\kappa_2n''_{2,i}+\kappa_3n''_{3,i}+\kappa_4n''_{4,i} 
    ~\mbox{for}~i=0,1,\cdots,\beta.$  So,   
\begin{align*}
  m_2(x)&= \kappa_1(n''_{1,0}+n''_{1,1}x+\cdots+n''_{1,\beta} x^\beta)+\kappa_2(n''_{2,0}+n''_{2,1}x+\cdots+n''_{2,\beta}x^\beta)\\
  &+\kappa_3(n''_{3,0}+n''_{3,1}x+\cdots+n''_{3,\beta} x^\beta)+\kappa_4(n''_{4,0}+n''_{4,0}x+\cdots+n''_{4,1} x^\beta)\\
  &= \kappa_1n_1''(x)+\kappa_2n_2''(x)+\kappa_3n_3''(x)+\kappa_4n_4''(x),
\end{align*}
where $n_i''(x)=n''_{i,0}+n''_{i,1}x+\cdots+n''_{i,\alpha} x^\alpha$ for $i=1,2,3,4.$
Now,
     \begin{align*}
    &m_2(x)\star(s(x),\kappa_1t_1(x)+\kappa_2t_2(x)+\kappa_3t_3(x)+\kappa_4t_4(x))\\
    &= (\kappa_1n_1''(x)+\kappa_2n_2''(x)+\kappa_3n_3''(x)+\kappa_4n_4''(x))\star (s(x),\kappa_1t_1(x)+\kappa_2t_2(x)+\kappa_3t_3(x)+\kappa_4t_4(x))\\
    &=n_1''(x)\star(s(x),\kappa_1t_1(x))+n_2''(x)\star(0,\kappa_2t_2(x))+n_3''(x)\star(0,\kappa_3t_3(x))+n_4''(x)\star(0,\kappa_4t_4(x)).
    \end{align*}
 In scenario, $\deg(n_2''(x))< \deg(h_2(x))$ implies 
$n_2''(x)\star(0,\kappa_2t_2(x))\in \mbox{Span}(G_3)$. In another scenario, by division algorithm, we have   $n_2''(x)= q_2(x)h_2(x)+r_2(x)$  such that  $r_2(x)=0$ or $\deg(r_2(x))< \deg(h_2(x)),$ for  $q_2(x), r_2(x)\in R[x:\Theta]$.  Therefore, we get 
  \begin{align*}
      n_2''(x)\star(0,\kappa_2t_2(x))
      &= (q_2(x)h_2(x)+r_2(x))\star(0,\kappa_2t_2(x))\\
      &= q_2(x)h_2(x)\star (0,\kappa_2t_2(x))+r_2(x)\star(0,\kappa_2t_2(x))\\
      &= 0+r_2(x)\star(0,\kappa_2t_2(x)).
 \end{align*}
 Hence, we get   $n_2''(x)\star(0,\kappa_2t_2(x))\in \mbox{Span}(G_3)$. In similar manner, we can prove  that  $n_3''(x)\star(0,\kappa_3t_3(x))\in \mbox{~Span}(G_4)\mbox {~and~} n_4''(x)\star(0,\kappa_4t_4(x))\in \mbox{~Span} (G_5)$.\par
       \vskip 3pt
     Now, consider $n_1''(x)\star (s(x),\kappa_1t_1(x)).$ In scenario,   $\deg(n_1''(x))< \deg(h_1(x))$ implies $n_1''(x)\star(s(x),\kappa_1t_1(x))\in \mbox{Span}(G_2)$. In another scenario, by division algorithm, we have $n_1''(x)= q_1(x)h_1(x)+r_1(x)$  such that  $r_1(x)=0$ or $\deg(r_1(x))< \deg(h_1(x))$, for $q_1(x), r_1(x)\in  R[x:\theta]$. Therefore, we get 
    \begin{align*}
      n_1''(x)\star(s(x),\kappa_1t_1(x))
      &=(q_1(x)h_1(x))\star(s(x),\kappa_1t_1(x))+r_1(x)\star(s(x),\kappa_1t_1(x))\\
      &= q_1(x)\star((h_1(x)\star s(x)),0)+r_1(x)\star(s(x),\kappa_1t_1(x)),
      \end{align*}
and  $r_1(x)\star(s(x),\kappa_1t_1(x))\in\mbox{~Span}(G_2).$ From Lemma \ref{lem-3.13}, we have  $\ell(x)\mid_rh_1(x)s(x)$ which implies,   $q_1(x)\star((h_1(x)\star s(x)),0)\in \mbox {~Span}(G_1).$ Thus, we get   $n_1''(x)\star (s(x),\kappa_1t_1(x))\in \mbox{~Span}(G_1\cup G_2).$ Hence, we have     $d(x)\in \mbox{~Span}(G_1\cup G_2\cup G_3\cup G_4\cup G_5).$ It is worth noting that the elements in   span of $G_1\cup G_2\cup G_3\cup G_4\cup G_5$ are linearly independent as $R[x:\Theta]$-submodule.  This implies  that   $G= \bigcup\limits_{i=1}^{5} G_i $ forms a minimal spanning  set of $\mathfrak{D}$. Moreover, $|\mathfrak{D}|= q^{\deg(f(x)+\sum_{i=1}^4(h_i(x)}$. 
     
\end{proof}
 Now, we provide a detailed example to elaborate the previous results.

\begin{example}\em
Let $q=27, r=3=s$, $R= \mathbb F_{27}+u\mathbb F_{27}+v\mathbb F_{27}+uv \mathbb F_{27},$ where $\mathbb F_{27}=\mathbb F_3[\omega]$ with $\omega^3+2\omega+2=0$ and  Frobenius automorphism  $\theta(\alpha)= \alpha^3$ for all $\alpha\in \mathbb F_{27}$. We   have order of $\theta$, $\Theta $ is  $3$ such that  $\ord(\theta)\mid r$,  $\ord(\Theta)\mid s$.   Since $\mathbb F_{27}[x:\theta]$ is not a UFD, so the polynomial $x^{3}-1$ has  more than one factorizations. Consider  $\mathfrak{D} = \langle(\ell(x), 0),(s(x), t(x))\rangle$  is a  $(\theta, \Theta)$-cyclic code of  block length $(3, 3)$ over $\mathbb{F}_{27}R$, where 
\begin{eqnarray*}
\ell(x)&=& x+\omega^{17},\\
t_1(x)&=& t_2(x)= x+\omega^5,\\
t_3(x)&=& t_4(x)= x^2+\omega^6x+\omega^8,\\
s(x)&=&\omega.
\end{eqnarray*}
Next, we have 
\begin{eqnarray*}
f(x)&=& x^2+\omega^6x+\omega^{8},\\
h_1(x)&=& h_2(x)= x^2+\omega^{18}x+\omega^{22},\\
h_3(x)&=& h_4(x)= x+\omega^{17}.
\end{eqnarray*}
Now, $\mathfrak{D}$ possesses the generator matrix $G$ as shown below,
 \[G=
 \left( {\begin{array}{cccccc}
\omega^{17}&1&0&0&0&0\\
0&\omega^{25}&1&0&0&0\\
\omega&0&0&\omega^5\kappa_1&\kappa_1&0\\
0&\omega^3&0&0&\omega^{15}\kappa_1&\kappa_1\\
0&0&0&\omega^5\kappa_2&\kappa_2&0\\
0&0&0&0&\omega^{15}\kappa_2&\kappa_2\\
0&0&0&\omega^8\kappa_3&\omega^6\kappa_3&\kappa_3\\
0&0&0&\omega^8\kappa_4&\omega^6\kappa_4&\kappa_4\\
\end{array} } \right),\]
and 
according to Theorem \ref{theo-4.12}, $\mathfrak{D}$ has  $= 27^8$ codewords.
\end{example}
In Table \ref{taboptimal-1}, we   provide examples to demonstrate our findings, where we obtained some near-optimal, marked as $**$,  and optimal codes, marked as $*$ over the field $\mathbb F_9,$ where $\mathbb F_9= \mathbb F_3[\omega]$  with $\omega^2+\omega+1=0$. In rest of Table $1$, we use Frobenius automorphism over the  field $\mathbb F_9$ defined as $\theta(\alpha)=\alpha^3$ for all $\alpha\in \mathbb F_9$ and the coefficients of generator polynomials set in the ascending order for instance, $\omega^7\omega^31$ represents the polynomial $\omega^7+\omega^3x+x^2.$   
\begin{table}[ht]\label{tab-1}
\centering
\caption{ Near-optimal and optimal codes from  $\Theta$-cyclic code $\mathfrak C_s$ over  $ R$}
\label{taboptimal-1}
\vspace{.5 cm}
\begin{tabular}{|p{1cm}|p{1cm}|p{2.5cm}|p{2.5cm}|p{1.5cm}|p{1.5cm}|c|}
\hline
$q$&$n$ &$t_1(x)$ & $t_2(x)$ & $t_3(x)$ & $t_4(x)$ & $\Phi_1(\mathfrak C_s)$\\

\hline
$9$&$4$ &$\omega^7\omega^31$ & $11$ & $\omega^51$ &$\omega$ & $[16,12,4]^*$\\
\hline
$9$&$6$ &$\omega^2\omega1$ & $11$ & $\omega^61$ &$1$ & $[24,20,3]^{**}$\\
\hline
$9$&$6$ &$2\omega^5\omega^31$ & $\omega^21$ & $21$ &$\omega$ & $[24,19,4]^*$\\
\hline
$9$&$8$ &$\omega^7\omega^6\omega 1$ & $\omega^6\omega^31\omega^6\omega^71$ & $\omega^701$ &$\omega^321$ & $[32,19,8]^{**}$\\
\hline
$9$&$8$ &$\omega^7\omega^6\omega 1$ & $\omega\omega^32\omega^61$ & $1\omega^31$ &$\omega11$ & $[32,21,7]^{**}$\\
\hline
$9$&$8$ &$\omega^3\omega^7\omega^21$ & $\omega^5\omega^71$ & $\omega^7\omega1$ &$11$ & $[32,24,6]^{*}$\\
\hline
$9$&$8$ &$\omega^7\omega^521$ & $\omega^71$ & $\omega1$ &$1$ & $[32,27,4]^{*}$\\
\hline
$9$&$12$ &$2\omega^6\omega1$ & $\omega^21$ & $\omega^31$ &$1$ & $[48,43,3]^{**}$\\
\hline
$9$&$12$ &$\omega^5\omega^6\omega^621$ & $\omega^21$ & $\omega^31$ &$1$ & $[48,42,4]^{**}$\\
\hline

\end{tabular}
\end {table}

\section{Duality of  $(\theta, \Theta)$-cyclic codes over $\mathbb F_qR$}  
This section focuses on the  algebraic  structural  characteristics of the dual of   $(\theta, \Theta)$-cyclic code of block length  $(r, s)$ over $\mathbb{F}_qR$. Moreover, we determine the relation between $(\theta, \Theta)$-cyclic code  and their dual.\par
 \vspace{.3cm}
In preliminaries,  it has  shown  that the dual of $(\theta , \Theta )$-cyclic code $\mathfrak D$ of block length $(r,s )$ over $\mathbb F_qR$  again a $(\theta, \Theta)$-cyclic code. Therefore,  
\[\mathfrak{D}^\perp =\langle(\bar\ell(x),0), (\bar s(x),\bar t(x)\rangle,\] where  $\bar\ell(x)$ and $\bar s(x)\in \frac{\mathbb{F}_q[x:\theta]}{\langle x^r-1\rangle}$, 
 $\bar t(x)\in \frac{R[x:\Theta]}{\langle x^s-1\rangle}$, and  $\deg(\bar\ell(x) <\deg(\bar s(x)).$\par
\vskip 3pt
Define $\Gamma_n(x)=\sum_{i=0}^{n-1}x^i$.  We give the following lemma without a formal proof.
\begin{lemma}
    Suppose that  ${\mathfrak m, \mathfrak n}\in \mathbb{N}.$ Then\[x^{ {\mathfrak m\mathfrak n}-1} = (x^{\mathfrak m}-1)\Gamma_{\mathfrak n}(x^{\mathfrak m}) = \Gamma_{\mathfrak n}(x^{m})(x^{\mathfrak m}-1).\]
\end{lemma}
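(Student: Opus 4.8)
The identity is a purely formal computation in the commutative polynomial ring $\mathbb{F}_q[x]$ (here $\Gamma_{\mathfrak n}(x^{\mathfrak m}) = \sum_{i=0}^{\mathfrak n-1} x^{\mathfrak m i}$ has only central coefficients, so there is no subtlety coming from the skew structure). The plan is to verify the two stated factorizations by expanding the telescoping product directly. First I would compute
\begin{equation*}
(x^{\mathfrak m}-1)\,\Gamma_{\mathfrak n}(x^{\mathfrak m}) = (x^{\mathfrak m}-1)\sum_{i=0}^{\mathfrak n-1}x^{\mathfrak m i} = \sum_{i=0}^{\mathfrak n-1}x^{\mathfrak m(i+1)} - \sum_{i=0}^{\mathfrak n-1}x^{\mathfrak m i},
\end{equation*}
and observe that the sum telescopes: every term $x^{\mathfrak m j}$ with $1 \le j \le \mathfrak n-1$ appears once with a plus sign (from the first sum, $i=j-1$) and once with a minus sign (from the second sum, $i=j$), leaving only $x^{\mathfrak m \mathfrak n} - x^{0} = x^{\mathfrak m \mathfrak n} - 1$. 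This gives the first equality.

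For the second equality, I would note that $x^{\mathfrak m}-1$ and $\Gamma_{\mathfrak n}(x^{\mathfrak m})$ both lie in the center $Z(\mathbb{F}_q[x:\theta])$, since their coefficients are $0$ and $1$ (fixed by $\theta$) and the exponents appearing are all multiples of… well, more simply: any polynomial whose nonzero coefficients all equal $\pm 1 \in \mathbb{F}_q^{\theta}$ commutes with $x$ because $\theta$ fixes $\pm 1$; hence both factors are central, and Proposition \ref{prop-2.4} (monic central polynomials commute) — or just direct commutativity — yields
\begin{equation*}
(x^{\mathfrak m}-1)\,\Gamma_{\mathfrak n}(x^{\mathfrak m}) = \Gamma_{\mathfrak n}(x^{\mathfrak m})\,(x^{\mathfrak m}-1).
\end{equation*}
Alternatively one can simply repeat the telescoping computation with the factors in the opposite order. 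Either way the two displayed expressions agree with $x^{\mathfrak m\mathfrak n}-1$.

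I do not anticipate a genuine obstacle here; the only point requiring a word of care is making explicit that the computation is legitimate in the \emph{skew} ring $\mathbb{F}_q[x:\theta]$, i.e. that the two factors really are central so the order of multiplication is immaterial. Since the statement of the lemma as written has a typo ($x^{\mathfrak{mn}-1}$ should read $x^{\mathfrak{mn}}-1$), I would also silently correct the exponent in the write-up so that the identity is in fact true. No induction is needed; the finite telescoping sum does everything.
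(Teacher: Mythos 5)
Your proof is correct; note that the paper explicitly states this lemma ``without a formal proof,'' so there is no argument to compare against, and the telescoping computation you give (together with the observation that both factors have coefficients in the prime field, hence commute with each other in $\mathbb{F}_q[x:\theta]$) is exactly the standard verification the authors are implicitly relying on. Your correction of the typo $x^{\mathfrak{m}\mathfrak{n}-1}$ to $x^{\mathfrak{m}\mathfrak{n}}-1$ is also right, as the lemma is used later in the form $\Gamma_{\mathfrak{m}/r}(x^r)=\frac{x^{\mathfrak m}-1}{x^r-1}$.
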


  Let a non-zero polynomial  $\gamma(x)= a_0+a_1x+a_2x^2+\cdots+a_tx^t$  in $\mathbb F_q[x:\theta]$. Then    
\[\gamma^\dag(x)=a_t+\theta(a_{t-1})x+\theta^2(a_{t-2})x^2+\cdots +\theta^t(a_0)x^t,\] is referred to as  reciprocal polynomial of $\gamma(x).$
Alternatively, we  can express the reciprocal  as $\gamma^\dag(x)=\sum_{i=0}^{t}\theta^i(a_{t-i})x^i$.

\begin{lemma} \cite [Lemma 5]{CLU09}
Suppose   $\Psi_1$ is  a map defined as \[\Psi_1:\mathbb F_q[x:\theta]\rightarrow \mathbb F_q[x:\theta],\] and  given by

\[\sum_{k=0}^ rg_kx^k\to \sum_{k=0}^r\theta(g_k)x^k.\] 
  Then    $\Psi_1$ is  a  ring homomorphism.
\end{lemma}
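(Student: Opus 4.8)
The plan is to show that $\Psi_1$ preserves both addition and multiplication, noting first that $\Psi_1$ is just the coefficient-wise application of the automorphism $\theta$ of $\mathbb F_q$. Additivity is immediate: for $a(x)=\sum a_k x^k$ and $b(x)=\sum b_k x^k$ we have $\Psi_1(a(x)+b(x)) = \sum \theta(a_k+b_k)x^k = \sum(\theta(a_k)+\theta(b_k))x^k = \Psi_1(a(x))+\Psi_1(b(x))$, using that $\theta$ is additive on $\mathbb F_q$. The identity $1 = 1\cdot x^0$ is fixed since $\theta(1)=1$, so $\Psi_1(1)=1$.

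The substantive point is multiplicativity, and here one must be careful because $\mathbb F_q[x:\theta]$ is non-commutative with the twisted rule $(ax^i)(bx^j) = a\,\theta^i(b)\,x^{i+j}$. First I would reduce to monomials: since $\Psi_1$ is additive, it suffices to check $\Psi_1\big((ax^i)(bx^j)\big) = \Psi_1(ax^i)\,\Psi_1(bx^j)$ for scalars $a,b\in\mathbb F_q$ and exponents $i,j\ge 0$. The left-hand side is $\Psi_1\big(a\,\theta^i(b)\,x^{i+j}\big) = \theta\big(a\,\theta^i(b)\big)x^{i+j} = \theta(a)\,\theta^{i+1}(b)\,x^{i+j}$, using that $\theta$ is a ring homomorphism and that powers of $\theta$ commute with $\theta$. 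The right-hand side is $\big(\theta(a)x^i\big)\big(\theta(b)x^j\big) = \theta(a)\,\theta^i\big(\theta(b)\big)\,x^{i+j} = \theta(a)\,\theta^{i+1}(b)\,x^{i+j}$ by the twisted multiplication rule. The two expressions agree, so $\Psi_1$ respects products on monomials, and then bilinearity of multiplication together with additivity of $\Psi_1$ extends this to all polynomials.

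The only mild obstacle is bookkeeping the exponents of $\theta$ correctly in the twisted product—making sure that applying $\theta$ after multiplying produces exactly the same shift in the power of $\theta$ as multiplying the $\Psi_1$-images. Since $\theta^i\circ\theta = \theta\circ\theta^i = \theta^{i+1}$ this is automatic, but it is the one place where the non-commutative structure genuinely enters and must be written out. Altogether $\Psi_1$ is additive, multiplicative, and unital, hence a ring homomorphism; one may also remark it is bijective with inverse given by coefficient-wise $\theta^{-1}$ (here $\theta^{\ord(\theta)-1}$), so it is in fact a ring automorphism, though only the homomorphism property is needed for the statement.
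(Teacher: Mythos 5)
Your proof is correct. Note that the paper itself gives no argument for this lemma --- it is stated as a citation to \cite[Lemma 5]{CLU09} --- so there is no in-paper proof to compare against; your direct verification (additivity from additivity of $\theta$, reduction of multiplicativity to monomials, and the computation $\Psi_1\big((ax^i)(bx^j)\big)=\theta(a)\theta^{i+1}(b)x^{i+j}=\Psi_1(ax^i)\Psi_1(bx^j)$ using $\theta\circ\theta^i=\theta^{i+1}$) is exactly the standard argument and correctly isolates the one point where the twisted multiplication matters.
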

\begin{proposition}\label{lem-5.5}
  Suppose  $\mu(x)$, $\nu(x)\in \mathbb F_q[x:\theta]$ are  any two  polynomials.  Then 
   \begin{enumerate}
     \item 
         If $\deg(\mu(x))\geq$ $\deg(\nu(x)),$ then $(\mu(x)+\nu(x))^\dag=\mu^\dag(x)+x^{\deg(\mu(x))-\deg (\nu(x))}\nu^\dag(x).$
         
        \item $(\mu(x)^\dag)^\dag=\Psi_1^r(\mu(x)) $, where $\deg(\mu(x))=r$.
        \item$(\mu(x)\nu(x))^\dag=\Psi_1^{\deg \mu(x)}(\nu^\dag(x))\mu^\dag(x).$
   
    \end{enumerate}
    \end{proposition}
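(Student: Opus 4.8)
The plan is to treat the three identities as statements about the coefficients of the polynomials under the skew multiplication $(gx^i)(g'x^j)=g\theta^i(g')x^{i+j}$, and to exploit the ring homomorphism $\Psi_1$ from the previous lemma wherever a $\theta$ appears uniformly across all coefficients. For part (1), I would write $\mu(x)=\sum_{i=0}^{m}a_ix^i$ with $a_m\neq 0$ and $\nu(x)=\sum_{i=0}^{n}b_ix^i$ with $n\le m$, and simply compute both sides coefficientwise. By definition $\mu^\dag(x)=\sum_{i=0}^{m}\theta^i(a_{m-i})x^i$, and the point is that the reciprocal of the sum $\mu(x)+\nu(x)$ — which has degree $m$ since $n\le m$ — reverses coefficients about the index $m$, so the $\nu$-part of the reversed coefficient list appears shifted up by $m-n$ positions and with the twist $\theta^i$ matching the twist that $x^{m-n}\nu^\dag(x)$ already carries after multiplication by $x^{m-n}$. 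Here one must be a little careful: $x^{m-n}\nu^\dag(x)=x^{m-n}\sum_{j=0}^{n}\theta^j(b_{n-j})x^j$, and since $x^{m-n}$ commutes past scalars only up to $\theta^{m-n}$, one gets $\sum_j \theta^{m-n+j}(\cdots)$ — I would check that this exactly reproduces the $\theta^i$ exponents demanded by the reciprocal of the degree-$m$ sum. This is the step most likely to hide a sign/exponent slip, so I would do it explicitly.

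For part (2), with $\deg(\mu(x))=r$ and $\mu(x)=\sum_{i=0}^{r}a_ix^i$, the first reciprocal gives $\mu^\dag(x)=\sum_{i=0}^{r}\theta^i(a_{r-i})x^i$, whose leading coefficient is $\theta^r(a_0)$; applying $\dag$ again and reindexing yields $\sum_{i=0}^{r}\theta^i\big(\theta^{r-i}(a_i)\big)x^i=\sum_{i=0}^{r}\theta^r(a_i)x^i=\Psi_1^r(\mu(x))$, using that $\Psi_1$ acts coefficientwise by $\theta$ and that $\theta^i\circ\theta^{r-i}=\theta^r$. For part (3), I would argue by reducing to a monomial case and then extending by additivity using part (1). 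First establish the identity for $\mu(x)=ax^m$ a monomial: then $\mu^\dag(x)=\theta^m(a)x^m$, $\mu(x)\nu(x)=ax^m\nu(x)=\sum_j a\theta^m(b_j)x^{m+j}$ which has degree $m+n$, so $(\mu(x)\nu(x))^\dag=\sum_j \theta^{m+j}\big(a\theta^m(b_{n-j})\big)x^j=\sum_j \theta^{m+j}(a)\,\theta^{m+j+m}(b_{n-j})x^j$; on the other hand $\Psi_1^{m}(\nu^\dag(x))\mu^\dag(x)=\big(\sum_j \theta^{m+j}(b_{n-j})x^j\big)\theta^m(a)x^m=\sum_j \theta^{m+j}(b_{n-j})\,\theta^{j}(\theta^m(a))x^{m+j}$, and after reindexing the exponent of $x$ back to $j$ one matches the twists. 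Then for general $\mu(x)=\sum_m a_mx^m$, expand $\mu(x)\nu(x)=\sum_m (a_mx^m)\nu(x)$, apply part (1) repeatedly to pull the $\dag$ through the sum (tracking the degree shifts, which are exactly the $\deg\mu(x)-m$ that make the shifted reciprocals line up), use the monomial case on each term, and collect using that $\Psi_1^{\deg\mu(x)}$ is a ring homomorphism.

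The main obstacle I anticipate is bookkeeping rather than conceptual: making sure every occurrence of $x^k$ moved past a scalar picks up precisely $\theta^k$, and that the degree-shift factors $x^{\deg\mu-\deg\nu}$ in part (1) carry the matching $\theta$-powers so that the coefficientwise identities close up. A clean way to control this is to fix notation once — write every polynomial as $\sum a_ix^i$, define $\dag$ by the explicit formula $\gamma^\dag(x)=\sum_{i=0}^{t}\theta^i(a_{t-i})x^i$ as in the lemma above, and never use the informal ``reverse the coefficients'' description — and then each of (1), (2), (3) is a finite manipulation of indices. I would also remark that part (3) specializes correctly when $\theta$ is trivial (recovering the classical $(\mu\nu)^\dag=\nu^\dag\mu^\dag$ up to the reciprocal being an anti-homomorphism), which is a useful sanity check while carrying out the computation.
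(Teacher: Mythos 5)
Your overall strategy (fix the explicit formula $\gamma^\dag(x)=\sum_{i=0}^{t}\theta^i(a_{t-i})x^i$ and verify each identity by index bookkeeping) is the right one; the paper itself gives no argument here and simply defers to Lemma~2.8 of \cite{HRS21}, which proceeds by exactly this kind of coefficient computation. Your parts (1) and (2) are correct as outlined: in (1) the shifted term $x^{m-n}\nu^\dag(x)=\sum_{j}\theta^{m-n+j}(b_{n-j})x^{m-n+j}$ does reproduce the tail of $(\mu+\nu)^\dag$, and in (2) the composition $\theta^i\circ\theta^{r-i}=\theta^r$ closes the computation (both subject to the usual implicit caveats that the leading term of $\mu+\nu$ does not cancel in (1) and that $\mu(0)\neq 0$ in (2), without which the degree of the reciprocal drops and the stated formulas fail).

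There is, however, a genuine error in your monomial case for part (3). By the definition in the paper, the reciprocal of $\mu(x)=ax^{m}$ (degree $t=m$, coefficients $a_m=a$ and $a_{m-1}=\cdots=a_0=0$) is the \emph{constant} $\mu^\dag(x)=a_t=a$, not $\theta^m(a)x^{m}$ as you wrote. Likewise, since $\mu(x)\nu(x)=\sum_{j}a\theta^m(b_j)x^{m+j}$ has degree $m+n$ but is supported in degrees $\geq m$, its reciprocal is $\sum_{j=0}^{n}\theta^{j}\bigl(a\theta^m(b_{n-j})\bigr)x^{j}=\sum_{j=0}^{n}\theta^{j}(a)\,\theta^{m+j}(b_{n-j})x^{j}$, with outer twist $\theta^{j}$, not $\theta^{m+j}$. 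As written, your left-hand side is supported in degrees $0,\dots,n$ while your right-hand side $\Psi_1^{m}(\nu^\dag(x))\,\theta^m(a)x^{m}$ is supported in degrees $m,\dots,m+n$; these are different polynomials, and ``reindexing the exponent of $x$ back to $j$'' is not a legitimate operation that can reconcile them. The fix is immediate: with the correct value $\mu^\dag(x)=a$ one gets $\Psi_1^{m}(\nu^\dag(x))\cdot a=\sum_{j}\theta^{m+j}(b_{n-j})\theta^{j}(a)x^{j}$, which equals the correct left-hand side on the nose, with no reindexing. Your subsequent reduction of the general case to monomials via part (1) then goes through, using $x^{M-m}\Psi_1^{m}(\nu^\dag(x))=\Psi_1^{M}(\nu^\dag(x))x^{M-m}$ and collecting $\sum_{m}\theta^{M-m}(a_m)x^{M-m}=\mu^\dag(x)$; but you should redo the monomial computation with the correct reciprocals before relying on it.
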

    \begin{proof}
       
         Similar concepts    are used to prove this proposition as given in  \cite[Lemma 2.8]{HRS21}. 
    \end{proof}
    
\begin{lemma}\cite[Lemma 5]{CLU09}
       Suppose $\Psi_2$ is a map defined as      $$\Psi_2: R[x:\Theta]\rightarrow  R[x:\Theta]$$ and    given by  
\[\sum_{j=0}^s b_jx^j\rightarrow \sum_{j=0}^s\Theta(b_j)x^j,\] 
where $b_j\in  R.$ Then  $\Psi_2$ is  a ring homomorphism.
 \end{lemma}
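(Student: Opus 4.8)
The final statement to prove is that the map $\Psi_2: R[x:\Theta]\rightarrow R[x:\Theta]$ given by $\sum_{j=0}^s b_jx^j\mapsto \sum_{j=0}^s\Theta(b_j)x^j$ is a ring homomorphism. The plan is to verify directly that $\Psi_2$ respects the ring operations of the skew polynomial ring $R[x:\Theta]$, using that $\Theta\in\aut(R)$. The proof is essentially a transcription of the argument for $\Psi_1$ in the earlier lemma, with $\mathbb F_q$ replaced by $R$ and $\theta$ by $\Theta$, so I would first cite that analogy and then carry out the short computation.

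\medskip

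\textbf{Step 1: Additivity.} For $a(x)=\sum_j a_jx^j$ and $b(x)=\sum_j b_jx^j$ in $R[x:\Theta]$, we have $a(x)+b(x)=\sum_j (a_j+b_j)x^j$, so
\[
\Psi_2(a(x)+b(x))=\sum_j \Theta(a_j+b_j)x^j=\sum_j\big(\Theta(a_j)+\Theta(b_j)\big)x^j=\Psi_2(a(x))+\Psi_2(b(x)),
\]
using that $\Theta$ is additive on $R$.

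\medskip

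\textbf{Step 2: Multiplicativity.} This is the only step with any content, and the main (mild) obstacle is bookkeeping with the twisted product $(bx^i)(b'x^j)=b\,\Theta^i(b')\,x^{i+j}$. It suffices to check $\Psi_2$ on monomials and extend by Step 1. For $a(x)=a_ix^i$ and $b(x)=b_jx^j$, the product is $a_i\Theta^i(b_j)x^{i+j}$, hence
\[
\Psi_2(a(x)b(x))=\Theta\big(a_i\Theta^i(b_j)\big)x^{i+j}=\Theta(a_i)\,\Theta^{i+1}(b_j)\,x^{i+j},
\]
where I used that $\Theta$ is multiplicative and that $\Theta\circ\Theta^i=\Theta^{i+1}$. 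On the other hand,
\[
\Psi_2(a(x))\,\Psi_2(b(x))=\big(\Theta(a_i)x^i\big)\big(\Theta(b_j)x^j\big)=\Theta(a_i)\,\Theta^i\big(\Theta(b_j)\big)\,x^{i+j}=\Theta(a_i)\,\Theta^{i+1}(b_j)\,x^{i+j},
\]
again using $\Theta^i\circ\Theta=\Theta^{i+1}$. The two expressions agree, so $\Psi_2(a(x)b(x))=\Psi_2(a(x))\Psi_2(b(x))$ on monomials; bilinearity (Step 1) gives it for all polynomials. Finally $\Psi_2(1)=\Theta(1)=1$, so $\Psi_2$ preserves the identity. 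This completes the verification that $\Psi_2$ is a ring homomorphism; I do not expect any genuine difficulty beyond the index arithmetic $\Theta^i\circ\Theta=\Theta\circ\Theta^i=\Theta^{i+1}$, which is immediate since powers of $\Theta$ commute.
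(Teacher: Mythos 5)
Your proof is correct and complete: additivity follows from $\Theta$ being additive, and the monomial computation $\Theta\circ\Theta^i=\Theta^i\circ\Theta=\Theta^{i+1}$ is exactly the point that makes $\Psi_2$ compatible with the twisted product $(bx^i)(b'x^j)=b\,\Theta^i(b')x^{i+j}$. The paper itself gives no proof of this lemma, citing it directly from the reference, so there is nothing to compare against; your direct verification is the standard argument and fills that gap correctly.
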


 \begin{proposition}
  Suppose  $\mu_1(x)$, $\nu_1(x)\in \mathbb R[x:\Theta]$ are  any two  polynomials.  Then 
   \begin{enumerate}
    \item$(\mu_1(x)\nu_1(x))^\dag=\Psi_2^{\deg \mu_1(x)}(\nu_1^\dag(x))\mu_1^\dag(x).$
    \item $(\mu_1(x)^\dag)^\dag=\Psi_2^s(\mu(x)) $, where $\deg(\mu_1(x))=s$.
    \item 
         If $\deg(\mu_1(x))\geq$ $\deg(\nu_1(x)),$ then $(\mu_1(x)+\nu_1(x))^\dag=\mu_1^\dag(x)+x^{\deg(\mu_1(x))-\deg (\nu_1(x))}\nu_1^\dag(x).$
        
    \end{enumerate}
    \end{proposition}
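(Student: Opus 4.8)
The plan is to prove the three statements about the reciprocal-polynomial operation on $R[x:\Theta]$ by mirroring the proof already in place for $\mathbb F_q[x:\theta]$, namely Proposition~\ref{lem-5.5}, since the only structural facts used there are that $\Theta$ is a ring automorphism of finite order and that $R[x:\Theta]$ satisfies the twisted multiplication rule $(bx^i)(b'x^j)=b\Theta^i(b')x^{i+j}$ --- both of which hold verbatim with $R$ in place of $\mathbb F_q$ and $\Theta$ in place of $\theta$. The key auxiliary object is the ring homomorphism $\Psi_2\colon R[x:\Theta]\to R[x:\Theta]$ acting coefficientwise by $\Theta$, already recorded in the lemma of \cite{CLU09} quoted just above, so I may invoke it freely.

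First I would fix notation: write $\mu_1(x)=\sum_{i=0}^{k}b_ix^i$ with $b_k\neq 0$ (so $\deg\mu_1(x)=k$) and $\nu_1(x)=\sum_{j=0}^{l}c_jx^j$ with $c_l\neq0$, and recall $\mu_1^\dag(x)=\sum_{i=0}^{k}\Theta^i(b_{k-i})x^i$. For part (3), assuming $k\ge l$, I would compute $(\mu_1(x)+\nu_1(x))^\dag$ directly from the definition: the coefficient of $x^i$ in $\mu_1+\nu_1$ is $b_i+c_i$ (with $c_i=0$ for $i>l$), the leading degree is $k$, hence the coefficient of $x^i$ in the reciprocal is $\Theta^i(b_{k-i}+c_{k-i})=\Theta^i(b_{k-i})+\Theta^i(c_{k-i})$; matching the first summand against $\mu_1^\dag(x)$ and the second against $x^{k-l}\nu_1^\dag(x)$ (the shift accounts for the gap between the two degrees) gives the claimed identity, using additivity of $\Theta^i$. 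For part (2), applying the dagger twice sends $x^i$-coefficient $b_i$ to $\Theta^i(b_{k-i})$ and then that back to $\Theta^{k-i}\big(\Theta^i(b_{k-i})\big)=\Theta^{k}(b_i)$, which is exactly the $x^i$-coefficient of $\Psi_2^{k}(\mu_1(x))$ (here I use that $\Theta^i\circ\Theta^{k-i}=\Theta^k$). For part (1), I would expand $\mu_1(x)\nu_1(x)$ via the twisted product, so that its $x^n$-coefficient is $\sum_{i+j=n}b_i\,\Theta^i(c_j)$ and its degree is $k+l$; then the $x^n$-coefficient of $(\mu_1\nu_1)^\dag$ is $\Theta^n\!\big(\sum_{i+j=k+l-n}b_i\Theta^i(c_j)\big)=\sum_{i+j=k+l-n}\Theta^n(b_i)\Theta^{n+i}(c_j)$, and I would check this agrees term-by-term with the coefficient extracted from $\Psi_2^{\deg\mu_1(x)}(\nu_1^\dag(x))\,\mu_1^\dag(x)$ after substituting the index shifts $i\mapsto k-i$, $j\mapsto l-j$ and again using the twisted multiplication rule and $\Theta$'s homomorphism property.

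It is cleanest to prove the three parts in the order (3), then (2), then (1), since (3) is a pure bookkeeping identity, (2) follows from a short composition computation, and (1) is the one that genuinely interlocks the dagger with the skew multiplication. Alternatively, as the statement concedes, one can simply observe that every manipulation is formally identical to \cite[Lemma 2.8]{HRS21} / Proposition~\ref{lem-5.5} with the substitution $(\mathbb F_q,\theta,\Psi_1)\leadsto(R,\Theta,\Psi_2)$, and that nothing in those arguments uses that $\mathbb F_q$ is a field rather than a finite commutative ring.

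The main obstacle --- really the only place to be careful --- is the index-shifting in part (1): one must track how the exponent $\Theta^{?}$ attached to each coefficient transforms when reading off coefficients of the reversed product versus the product of the reversed factors, and verify that the noncommutativity of $R[x:\Theta]$ is respected (which is why $\Psi_2^{\deg\mu_1(x)}$ is applied to $\nu_1^\dag(x)$ and the factors appear in the order $\Psi_2^{\deg\mu_1(x)}(\nu_1^\dag(x))\mu_1^\dag(x)$ and not the reverse). Once the bookkeeping for (1) is pinned down, (2) is an immediate corollary (apply (1) or a direct double-substitution) and (3) is routine.
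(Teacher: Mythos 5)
Your proposal is correct and matches the paper's approach: the paper's own proof is simply the one-line observation that the argument of \cite[Lemma 5]{HRS21} (equivalently, the field-case Proposition for $\mathbb F_q[x:\theta]$) carries over under the substitution $(\mathbb F_q,\theta,\Psi_1)\leadsto(R,\Theta,\Psi_2)$, which is exactly the alternative you note. Your explicit coefficient computations (including the index shift $k+l-j=n+i$ in part (1) and the use of commutativity of $R$) are a sound and more detailed spelling-out of what that citation leaves implicit.
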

    \begin{proof}
      Similar concepts    are used to prove this proposition as given in  \cite [Lemma 5]{HRS21}.  
    \end{proof}
     We assume $ \mathfrak m = r\times s$ throughout this paper.
\begin{definition}\label{def-5.6}
Let  $ d(x) = (g(x), b(x)),~d' (x)= (g'(x)), b'(x))\in \mathfrak{R}_{r,s}$. Define the map    \[ o : \mathfrak{R}_{r,s}\times \mathfrak R_{r,s}\rightarrow \frac{R[x:\Theta]}{\langle x^{\mathfrak m}-1\rangle},\]  
such that 
 \begin{align*}
     o(d(x),~ d' (x))& = \kappa_1g(x)\Psi_1^{\mathfrak m - \deg(g'(x))}(g'^\dag(x))x^{\mathfrak m-1-\deg(g'(x))}\Gamma_\frac{\mathfrak m}{r}(x^r)\\
    &\hspace{2.7cm}+ b(x)\Psi_2^{\mathfrak m-\deg(b'(x))}(b'^\dag(x))x^{\mathfrak m-1-\deg(b'(x)}\Gamma_{\frac{\mathfrak m}{s}}(x^s)\pmod{(x^{ \mathfrak m}-1)}.
 \end{align*}
 Clearly,  between left $R[x:\Theta]$-modules, the map $``o"$ is bilinear. For more detailed information about this map, we refer  \cite [Definition $4.2$]{AHS22}.  We denote $o(d(x), d' (x))$ by $d(x)~o~ d' (x)$ for convenience.
 
 \end{definition}

\begin{proposition}\label{prop-5.3}  
Let  $\textit{\textbf d}$ and  $\textit{\textbf d}^\prime$  be   two vectors in $\mathbb{F}_q^r\times R^s$ and their  associated polynomials are $d(x)=(g(x),b(x))$ and $d' (x)=(g'(x),b'(x)) \in   \mathfrak{R}_{r,s}$. Then $\textit{\textbf d}$ and all its $(\theta, \Theta)$-cyclic shift are orthogonal to $\textit{\textbf d}'$   if and only if $d(x)~ o ~d'(x) \equiv 0\pmod{x^\mathfrak{m}-1}.$
\end{proposition}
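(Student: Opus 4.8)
The plan is to unpack the definition of the bilinear form $``o"$ and show that each coefficient of the polynomial $d(x)~o~d'(x)$ (taken modulo $x^{\mathfrak m}-1$) records exactly one inner product $\rho_{(\theta,\Theta)}^{k}(\textit{\textbf d})\cdot \textit{\textbf d}'$, up to applying a power of $\theta$ or $\Theta$ to the whole expression. Concretely, I would first write $g(x)=\sum_{i=0}^{r-1}g_ix^i$, $b(x)=\sum_{j=0}^{s-1}b_jx^j$ and similarly for $d'(x)$, and then expand the two summands of Definition~\ref{def-5.6} separately. The factors $\Psi_1^{\mathfrak m-\deg(g'(x))}(g'^\dag(x))$ and $\Psi_2^{\mathfrak m-\deg(b'(x))}(b'^\dag(x))$ are reversed-and-twisted versions of $g'(x)$ and $b'(x)$; multiplying by $x^{\mathfrak m-1-\deg(g'(x))}$ aligns the degrees so that the coefficient of $x^{\mathfrak m-1-k}$ in the product $g(x)\Psi_1^{\mathfrak m-\deg g'}(g'^\dag)x^{\mathfrak m-1-\deg g'}$ becomes, after using the skew multiplication rule $(ax^i)(a'x^j)=a\theta^i(a')x^{i+j}$, a sum of terms $g_i\theta^{?}(g'_{?})$ whose index pattern is precisely that of a $k$-fold cyclic shift of $\textit{\textbf g}$ paired against $\textit{\textbf g}'$. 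The role of $\Gamma_{\mathfrak m/r}(x^r)$ is to periodize this length-$r$ convolution so that it lives correctly in the length-$\mathfrak m$ quotient; the analogous statement holds for the $R$-part with $\Gamma_{\mathfrak m/s}(x^s)$.

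The key steps, in order, would be: (1) reduce to computing the coefficient of $x^{\mathfrak m-1-k}$ in $d(x)~o~d'(x)$ for $0\le k\le \mathfrak m-1$; (2) show, via the reciprocal-polynomial identities in Proposition~\ref{lem-5.5} and its $R$-analogue together with the homomorphism property of $\Psi_1,\Psi_2$, that this coefficient equals $\Theta^{j}\!\bigl(\rho_{(\theta,\Theta)}^{k}(\textit{\textbf d})\cdot \textit{\textbf d}'\bigr)$ for a suitable $j$ depending on $k$ (indeed, because $\kappa_1$ kills the $u,v,uv$ components, $\kappa_1 g(x)(\cdots)$ contributes the $\mathbb F_q$-part of the inner product, and $b(x)(\cdots)$ the $R$-part, matching the definition $\textit{\textbf d}\cdot\textit{\textbf d}'=\kappa_1\sum g_ig'_i+\sum b_jb'_j$); (3) observe that since $\Theta$ (hence $\theta$) is an automorphism, $\Theta^{j}(z)=0 \iff z=0$, so the vanishing of all coefficients is equivalent to $\rho_{(\theta,\Theta)}^{k}(\textit{\textbf d})\cdot\textit{\textbf d}'=0$ for every $k$; (4) conclude that $d(x)~o~d'(x)\equiv 0 \pmod{x^{\mathfrak m}-1}$ is equivalent to the orthogonality of $\textit{\textbf d}$ and all its $(\theta,\Theta)$-cyclic shifts to $\textit{\textbf d}'$. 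For the forward direction one reads the coefficients off; for the reverse direction one assembles them back into the polynomial.

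The main obstacle I anticipate is the bookkeeping in step~(2): correctly tracking the powers of $\theta$ and $\Theta$ introduced both by the skew multiplication $(ax^i)(a'x^j)=a\theta^i(a')x^{i+j}$ and by the twist operators $\Psi_1,\Psi_2$ inside the reciprocal, and verifying that the $\Gamma$-factors reduce modulo $x^{\mathfrak m}-1$ exactly to the periodic extension of the relevant shift — i.e., that no spurious cross terms survive and that the length-$r$ (resp. length-$s$) convolutions wrap around $\mathfrak m=r\times s$ consistently. A clean way to manage this is to first verify the identity for a single monomial pair $d(x)=(x^a,0)$, $d'(x)=(x^b,0)$ (and symmetrically for the $R$-component), where the shift structure is transparent, and then extend by the established bilinearity of $``o"$ over the left $R[x:\Theta]$-module structure. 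Once the monomial case and bilinearity are in hand, the general statement follows, and I would cite \cite[Definition 4.2]{AHS22} and the reciprocal-polynomial lemmas above to keep the computation short, since the argument parallels the $\mathbb F_q$ double-skew-cyclic case of Aydogdu et al.
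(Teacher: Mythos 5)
Your plan matches the paper's proof in substance: the paper likewise defines $S_i=\kappa_1\sum_{j}\theta^i(g_{j-i})g'_j+\sum_{\nu}\Theta^i(b_{\nu-i})b'_\nu$, expands $d(x)\ o\ d'(x)$ directly as a double sum, and identifies it with $\sum_{i=0}^{\mathfrak m-1}$ of a twisted $S_i$ times $x^{\mathfrak m-1-i}$ modulo $x^{\mathfrak m}-1$, concluding exactly as in your steps (1)--(4) that vanishing of all coefficients is equivalent to $S_i=0$ for every $i$. The only cosmetic difference is that the paper carries out the full double-sum expansion at once rather than your suggested monomial-plus-bilinearity reduction, but the argument is essentially identical.
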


\begin{proof}
Suppose $\textit{\textbf d}=(g_0,g_1,\ldots, g_{r-1} b_0,b_1,\ldots,b_{s-1})$ and $\textit{\textbf d}^\prime =(g'_0,g'_1,\ldots, g'_{r-1}, b'_0,b'_1,\ldots,b'_{s-1})\in\mathbb{F}_q^r\times R^s$.  For $0\le i\le \mathfrak m-1$, the $i^{th}$  $(\theta,\Theta)$-cyclic shift of $\textit{\textbf d} $ be  represented by the vector   $\textit{\textbf d}^{(i)}=(\theta^i(g_{0-i}),\theta^i(g_{1-i}),\ldots,
\theta^i(g_{r-1-i}), \Theta^i(b_{0-i}),\Theta^i(b_{1-i}),\ldots,
\Theta^i(b_{s-1-i}))$.  Now,  $\textit{\textbf d}^{(i)} \cdot \textit{\textbf d}^\prime \equiv 0\pmod{x^\mathfrak m-1}$ if and only if\[\kappa_1\sum_{j=0}^{r-1}\theta^i(g_{j-i})g'_j+\sum_{\nu=0}^{s-1}\Theta^i(b'_{\nu-i})b'_\nu=0.\] 
Let $S_i=\kappa_1\sum_{j=0}^{r-1}\theta^i(g_{j-i})g'_j+\sum_{\nu=0}^{s-1}\Theta^i(b'_{\nu-i})b'_\nu.$ Then we have 
\begin{align*}
d(x)\ o\ d'(x)&= \kappa_1\left[\sum_{\mu=0}^{r-1}\sum_{j=\mu}^{r-1}g_{j-\mu}\theta^{\mathfrak m-\mu}(g'_j)x^{\mathfrak m-1-\mu}+ \sum_{\mu=1}^{r-1}\sum_{j=\mu}^{r-1}g_j\theta^\mu(g'_{j-\mu})x^{\mathfrak m-1+\mu}\right]\Gamma_{\frac{\mathfrak m}{r}}(x^r)\\
&\hspace{2cm}+\left[\sum_{k=0}^{s-1}\sum_{\nu=k}^{s-1}b_{\nu-k}\Theta^{\mathfrak m-k}(b'_\nu)x^{\mathfrak m-1-k}+\sum_{k=1}^{s-1}\sum_{\nu=k}^{s-1}b_{k}\Theta^{k}(b'_{\nu-k})x^{ \mathfrak m-1+k}\right]\Gamma_{\frac{\mathfrak m}{s}}(x^s)\\
&= \sum_{i=0}^{\mathfrak  m-1}\rho_{(\theta,\Theta)}^{\mathfrak m-i}\ S_ix^{\mathfrak  m-1-i}\pmod{x^{\mathfrak  m}-1}.
\end{align*}
 Hence, $d (x)~ o~ d'(x)\equiv 0\pmod{x^\mathfrak{m}-1}$ if and only if $S_i=0$ for $i=0,1,\ldots,\mathfrak m-1.$
\end{proof}
 
 We give the following proposition to find  the generator polynomials for $\mathfrak{D}^\perp$  based on the above discussion.

\begin{proposition}
\label{lem-5.8}
     Let   $d(x)=(g(x), b(x)),~d^\prime (x)=(g'(x), b'(x))\in \mathfrak{R}_{r,s}$ and  $d(x)\ o\ d^\prime (x)\equiv 0\pmod{x^\mathfrak m-1}.$ Then 
    \begin{enumerate}
        \item   $g(x)\Psi_1^{\mathfrak m-\deg(g'(x))}(g'^\dag(x))\equiv 0 \pmod{x^r-1},$  if either $b(x)=0$ or $b'(x)=0.$ 
        \item 
         $b(x)\Psi_2^{\mathfrak m-\deg(b'(x))}(b'^\dag(x))\equiv 0\pmod{x^s-1},$  if either  $g(x)=0$ or $g'(x)=0.$
    \end{enumerate}
\end{proposition}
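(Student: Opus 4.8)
The plan is to specialize the bilinear identity of Proposition~\ref{prop-5.3} to the case where one of the two components vanishes, and then read off a congruence modulo $x^r-1$ (respectively $x^s-1$) from the congruence modulo $x^{\mathfrak m}-1$. First I would treat statement~(1): suppose $b(x)=0$ or $b'(x)=0$, so that the second summand in the definition of $d(x)\,o\,d'(x)$ (Definition~\ref{def-5.6}) is identically zero. Then the hypothesis $d(x)\,o\,d'(x)\equiv 0 \pmod{x^{\mathfrak m}-1}$ becomes
\[
\kappa_1\, g(x)\,\Psi_1^{\mathfrak m-\deg(g'(x))}(g'^\dag(x))\,x^{\mathfrak m-1-\deg(g'(x))}\,\Gamma_{\frac{\mathfrak m}{r}}(x^r)\equiv 0 \pmod{x^{\mathfrak m}-1}.
\]
The factor $\kappa_1$ is a unit on the $\kappa_1$-component, and multiplication by $x^{\mathfrak m-1-\deg(g'(x))}$ is invertible modulo $x^{\mathfrak m}-1$ (it is a power of $x$, and $x$ is a unit in $R[x:\Theta]/\langle x^{\mathfrak m}-1\rangle$), so these can be cancelled, leaving $g(x)\,\Psi_1^{\mathfrak m-\deg(g'(x))}(g'^\dag(x))\,\Gamma_{\frac{\mathfrak m}{r}}(x^r)\equiv 0 \pmod{x^{\mathfrak m}-1}$.

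The core of the argument is then the reduction step: I want to pass from a congruence of the form $a(x)\,\Gamma_{\frac{\mathfrak m}{r}}(x^r)\equiv 0\pmod{x^{\mathfrak m}-1}$ with $\deg(a(x))<r$ to $a(x)\equiv 0\pmod{x^r-1}$. Here I would invoke the preceding Lemma, which gives $x^{\mathfrak m}-1=(x^r-1)\Gamma_{\frac{\mathfrak m}{r}}(x^r)=\Gamma_{\frac{\mathfrak m}{r}}(x^r)(x^r-1)$ (using $\mathfrak m = r\times s$, so $r\mid \mathfrak m$). Thus $a(x)\Gamma_{\frac{\mathfrak m}{r}}(x^r)=(x^{\mathfrak m}-1)k(x)$ for some $k(x)$, and since $x^r-1$ and $\Gamma_{\frac{\mathfrak m}{r}}(x^r)$ are both central (they are polynomials in $x^r$, and $\ord(\theta)\mid r$ forces $x^r-1$ central by Proposition~\ref{prop-2.2}; likewise for $\Gamma$), one can factor $\Gamma_{\frac{\mathfrak m}{r}}(x^r)$ off both sides — it is a monic polynomial, so it is not a zero divisor in the skew polynomial ring — to obtain $a(x)\equiv 0 \pmod{x^r-1}$. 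Applying this with $a(x)$ the reduction of $g(x)\,\Psi_1^{\mathfrak m-\deg(g'(x))}(g'^\dag(x))$ modulo $x^r-1$ (which has degree $<r$) yields $g(x)\,\Psi_1^{\mathfrak m-\deg(g'(x))}(g'^\dag(x))\equiv 0 \pmod{x^r-1}$, which is statement~(1).

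Statement~(2) is entirely symmetric: if $g(x)=0$ or $g'(x)=0$, the first summand of $d(x)\,o\,d'(x)$ vanishes, and the same cancellation of the unit $x^{\mathfrak m-1-\deg(b'(x))}$ followed by the $\Gamma_{\frac{\mathfrak m}{s}}(x^s)$-reduction (now using $x^{\mathfrak m}-1=(x^s-1)\Gamma_{\frac{\mathfrak m}{s}}(x^s)$ and $\ord(\Theta)\mid s$) gives $b(x)\,\Psi_2^{\mathfrak m-\deg(b'(x))}(b'^\dag(x))\equiv 0 \pmod{x^s-1}$. I expect the main obstacle to be the bookkeeping around the reduction step — specifically, justifying that one may cancel the $\Gamma$-factor, which requires knowing that $\Gamma_{\frac{\mathfrak m}{r}}(x^r)$ is central (hence commutes past everything) and monic (hence a non-zero-divisor in $R[x:\Theta]$, even though $R$ is not a domain, because its leading coefficient is the unit $1$). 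It may be cleanest to phrase this reduction as a standalone sublemma: for a central monic polynomial $\delta(x)$ with $\delta(x)\mid x^{\mathfrak m}-1$ and cofactor $\delta^c(x)$, one has $a(x)\delta^c(x)\equiv 0\pmod{x^{\mathfrak m}-1}$ if and only if $a(x)\equiv 0\pmod{\delta(x)}$.
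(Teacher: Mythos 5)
Your proposal is correct and follows essentially the same route as the paper's proof: specialize the bilinear form $o$ to the case where one component vanishes, then use the factorization $x^{\mathfrak m}-1=(x^r-1)\Gamma_{\frac{\mathfrak m}{r}}(x^r)$ together with centrality to strip off the $\Gamma$-factor and the power of $x$ and land on a congruence modulo $x^r-1$ (resp.\ $x^s-1$). The paper performs this by multiplying the resulting identity through by $x^r-1$ and rearranging rather than by stating a cancellation sublemma, but your explicit justification that the monic central $\Gamma$-factor is a non-zero-divisor is the same idea, made slightly more careful.
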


\begin{proof}
    Assume that    $b(x)=0$ or $b'(x)=0$, then 
    \[d(x)\ o\ d'(x)=\kappa_1g(x)\Psi_1^{\mathfrak m-\deg(g'(x))}(g'^\dag(x))x^{\mathfrak m-1-\deg(g'(x))}\Gamma_\frac{\mathfrak m}{r}(x^r)+0\equiv 0\pmod {x^{\mathfrak m}-1}.\]
    Implying the existence  of $\lambda(x)\in\mathbb F_q[x:\theta]$ such that 
    \[\kappa_1g(x)\Psi_1^{\mathfrak m-\deg(g'(x))}(g'^\dag(x))x^{{\mathfrak m}-1-\deg(g'(x))}\Gamma_\frac{\mathfrak m}{r}(x^r)=\kappa_1\lambda(x)x^\mathfrak m-1.\]
    Since $\Gamma_\frac{\mathfrak m}{r}(x^r)=\frac{x^\mathfrak m-1}{x^r-1}$ and $(x^{\mathfrak m}-1)(x^r-1)=(x^r-1)(x^{\mathfrak m}-1)$. Hence,
    \[ \kappa_1g(x)\Psi_1^{\mathfrak m-\deg(g'(x))}(g'^\dag(x))x^{\mathfrak m}=\kappa_1\lambda(x)x^{(\deg(g'(x))+1)}(x^r-1).\]
    Consequently, \[g(x)\Psi_1^{\ \mathfrak m-\deg(g'(x))}(g'^\dag(x))\equiv0\pmod{x^r-1}.\]
   
   The second case  can be proven using the same argument. 
\end{proof}
  
\begin{proposition}\label{prop-5.9}
     Let $\mathfrak{D} = \langle(\ell(x), 0),(s(x), t(x))\rangle$, where $t(x) = \kappa_1t_1(x) + \kappa_2t_2(x)+\kappa_3t_3(x)+\kappa_4t_4(x)$. Then
     \begin{align*}
	& |\mathfrak C_r|=q^{r-\deg(\gcd_r(\ell(x),s(x)))}, ~| \mathfrak C_s|=q^{4s -\sum_{i=1}^4\deg(t_i(x))},\\
 &|\kappa_1\mathfrak C_s|=q^{s-\deg(t_1(x))},~ |\kappa_2\mathfrak C_s|=q^{s-\deg(t_2(x))},\\
& |\kappa_3\mathfrak C_s|=q^{s-\deg(t_3(x))}, ~|\kappa_4\mathfrak C_s|=q^{s-\deg(t_4(x))},\\
	&|(\mathfrak C_r)^\perp |= q^{\deg(\gcd_r(\ell(x),s(x)))},~ |( \mathfrak C_s)^\perp|=q^{ \sum_{i=1}^4\deg(t_i(x))},\\
	&|(\mathfrak  C^\perp)_r |= q^{\deg(\ell(x))},~ |( \mathfrak C^\perp)_s|=q^{ \sum_{i=1}^4\deg(t_i(x))+\deg(\ell(x))-\deg({\gcd}_r(\ell(x),s(x)))},\\
& |\kappa_1(\mathfrak C_s)^\perp|=q^{\deg(t_1(x))},~|\kappa_2(\mathfrak C_s)^\perp|=q^{\deg(t_2(x))},\\
& |\kappa_3(\mathfrak C_s)^\perp|=q^{\deg(t_3(x))},~|\kappa_4(\mathfrak C_s)^\perp|=q^{\deg(t_4(x))},\\
& |\kappa_1(\mathfrak C^\perp)_s|=q^{\deg(t_1(x)+\deg(\ell(x)-\deg(\gcd_r(\ell(x),s(x)))},~|\kappa_2(\mathfrak C^\perp)_s|=q^{\deg(t_2(x))},\\
& |\kappa_3(\mathfrak C^\perp)_s|=q^{\deg(t_3(x))},~ |\kappa_4(\mathfrak C^\perp)_s|=q^{\deg(t_4(x))}.
\end{align*}
 \end{proposition}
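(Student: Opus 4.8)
The plan is to derive each cardinality from the structural results already established, principally Theorem~\ref{theo-4.7}, Theorem~\ref{theo-4.12}, Proposition~\ref{prop-4.3} and Lemma~\ref{lem-2.9}, together with the elementary fact that for a linear code $\mathfrak{C}$ of length $n$ over $\mathbb{F}_q$ one has $|\mathfrak{C}|\cdot|\mathfrak{C}^\perp| = q^{n}$ (and the analogous statement $|\mathfrak{D}|\cdot|\mathfrak{D}^\perp| = q^{r+4s}$ for codes of block length $(r,s)$ over $\mathbb{F}_qR$, since the Gray image is a linear code over $\mathbb{F}_q$ of length $r+4s$ by Proposition~\ref{prop-3.2}). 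I would organize the verification in three blocks: first the "plain" codes $\mathfrak{C}_r,\mathfrak{C}_s$ and their $\kappa_i$-components; then the duals $(\mathfrak{C}_r)^\perp,(\mathfrak{C}_s)^\perp$; and finally the "swapped" objects $(\mathfrak{C}^\perp)_r,(\mathfrak{C}^\perp)_s$ and their $\kappa_i$-components, which are the projections of $\mathfrak{D}^\perp$ rather than the duals of the projections of $\mathfrak{D}$.

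First, by Theorem~\ref{theo-4.7}, $\mathfrak{C}_r = \langle \gcd_r(\ell(x),s(x))\rangle$ as a $\theta$-cyclic code of length $r$ over $\mathbb{F}_q$; since a skew-cyclic code generated by a right divisor $g(x)$ of $x^r-1$ has $q^{r-\deg g(x)}$ codewords, $|\mathfrak{C}_r| = q^{r-\deg(\gcd_r(\ell(x),s(x)))}$. Again by Theorem~\ref{theo-4.7}, $\mathfrak{C}_s = \langle t(x)\rangle$ with $t(x) = \sum_i\kappa_i t_i(x)$, so Proposition~\ref{prop-4.3} gives $|\mathfrak{C}_s| = q^{4s-\sum_{i=1}^4\deg t_i(x)}$; projecting onto the $\kappa_i$ component, $\kappa_i\mathfrak{C}_s = \kappa_i\langle t_i(x)\rangle$ is (isomorphic to) a $\theta$-cyclic code of length $s$ generated by $t_i(x)\mid_r x^s-1$, hence has $q^{s-\deg t_i(x)}$ elements. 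The dual cardinalities $|(\mathfrak{C}_r)^\perp| = q^{\deg(\gcd_r(\ell(x),s(x)))}$, $|(\mathfrak{C}_s)^\perp| = q^{\sum_i\deg t_i(x)}$ and the $\kappa_i(\mathfrak{C}_s)^\perp = q^{\deg t_i(x)}$ then follow immediately from $|\mathfrak{C}|\cdot|\mathfrak{C}^\perp| = q^{(\text{length})}$ applied in the appropriate length ($r$, $4s$, or $s$).

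The remaining identities concern $(\mathfrak{C}^\perp)_r$ and $(\mathfrak{C}^\perp)_s$, the component codes of $\mathfrak{D}^\perp$. Here I would use the generator description $\mathfrak{D} = \langle(\ell(x),0),(s(x),t(x))\rangle$ with $x^r-1 = f(x)\ell(x)$, $x^s-1 = h_i(x)t_i(x)$, so that by Theorem~\ref{theo-4.12}, $|\mathfrak{D}| = q^{\deg f(x)+\sum_{i=1}^4\deg h_i(x)} = q^{(r-\deg\ell(x)) + \sum_i(s-\deg t_i(x))}$; consequently $|\mathfrak{D}^\perp| = q^{r+4s-|\mathfrak{D}|\text{-exponent}} = q^{\deg\ell(x)+\sum_i\deg t_i(x)}$. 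Now $\mathfrak{D}^\perp$ is itself a $(\theta,\Theta)$-cyclic code, so it has a generator pair $\langle(\bar\ell(x),0),(\bar s(x),\bar t(x))\rangle$ as noted at the start of Section~4, and $(\mathfrak{C}^\perp)_r = \langle\gcd_r(\bar\ell(x),\bar s(x))\rangle$, $(\mathfrak{C}^\perp)_s = \langle\bar t(x)\rangle$. The key input is that $(\mathfrak{C}^\perp)_r = (\mathfrak{C}_r)^\perp$ is \emph{false} in general for mixed-alphabet codes; instead one must use the duality relations between generator polynomials to be established in Theorem~\ref{theo-5.11}. Anticipating that theorem (which gives $\bar\ell(x)$ and $\bar t_i(x)$ explicitly in terms of reciprocals and quotients of $\ell,s,t_i,f,h_i$), one reads off $\deg(\gcd_r(\bar\ell(x),\bar s(x))) = \deg\ell(x)$, yielding $|(\mathfrak{C}^\perp)_r| = q^{\deg\ell(x)}$, and $\sum_i\deg\bar t_i(x) = \sum_i\deg t_i(x)+\deg\ell(x)-\deg\gcd_r(\ell(x),s(x))$ — using $\deg\bar t_1(x) = \deg t_1(x)+\deg\ell(x)-\deg\gcd_r(\ell(x),s(x))$ (the $\kappa_1$-component absorbs the "extra" contribution coming from the $(\bar s(x),0)$ part, via Lemma~\ref{lem-2.9} and Lemma~\ref{lem-3.13}) and $\deg\bar t_i(x) = \deg t_i(x)$ for $i=2,3,4$. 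These give $|(\mathfrak{C}^\perp)_s|$ and its $\kappa_i$-pieces.

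The main obstacle is precisely this last block: justifying the asymmetry between the $\kappa_1$-component and the $\kappa_2,\kappa_3,\kappa_4$-components of $(\mathfrak{C}^\perp)_s$, which is where the interaction between the first-coordinate generator $\ell(x)$ and the second-coordinate generators lives (the pair $(s(x),t(x))$ couples $\mathbb{F}_q$ with the $\kappa_1$-part of $R$ only, since $s(x) = \eta(t(x)$-stuff$)$ sits in the $\kappa_1$-summand under $\eta$). Concretely, I would establish that $(\mathfrak{C}^\perp)_s = \kappa_1(\mathfrak{C}^\perp)_s \oplus \kappa_2(\mathfrak{C}_s)^\perp \oplus \kappa_3(\mathfrak{C}_s)^\perp \oplus \kappa_4(\mathfrak{C}_s)^\perp$ with only the first summand differing from $(\mathfrak{C}_s)^\perp$, then count $\kappa_1(\mathfrak{C}^\perp)_s$ by a dimension/counting argument that subtracts from $|\mathfrak{D}^\perp|$ the contribution $q^{\deg\ell(x)}$ of $(\mathfrak{C}^\perp)_r$ and the fixed contributions $q^{\deg t_i(x)}$ for $i=2,3,4$. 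Everything else is bookkeeping with the identities $\deg f = r-\deg\ell$, $\deg h_i = s-\deg t_i$, Lemma~\ref{lem-2.9} for $\deg\lcm_l$ versus $\deg\gcd_r$, and the product formula $|\mathfrak{C}|\,|\mathfrak{C}^\perp| = q^{\text{length}}$; I would present the nontrivial exponents and leave the routine ones to the reader.
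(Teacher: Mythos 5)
Your first two blocks are fine and coincide with what the paper actually writes down: Theorem~\ref{theo-4.7} gives $\mathfrak C_r=\langle\gcd_r(\ell(x),s(x))\rangle$ and $\mathfrak C_s=\langle t(x)\rangle$, the standard size formula for skew cyclic codes and Proposition~\ref{prop-4.3} give $|\mathfrak C_r|$ and $|\mathfrak C_s|$, and the identity $|\mathfrak C|\,|\mathfrak C^\perp|=q^{\text{length}}$ handles $(\mathfrak C_r)^\perp$, $(\mathfrak C_s)^\perp$ and the $\kappa_i(\mathfrak C_s)^\perp$. (The paper proves only the first two formulas explicitly and dismisses the rest as ``similar''.)

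The third block is where your plan has a genuine gap, and it is a gap of circularity. You propose to read off $\deg\bar\ell(x)$ and $\deg\bar t_i(x)$ from Theorem~\ref{theo-5.11}, but in the paper's logical order Theorem~\ref{theo-5.11} is proved \emph{using} Proposition~\ref{prop-5.10}, which in turn derives those very degrees \emph{from} Proposition~\ref{prop-5.9} (the statement you are proving). So you cannot ``anticipate'' Theorem~\ref{theo-5.11} here without breaking the chain of deduction. Your fallback counting argument does not rescue this, because it subtracts the wrong kernel: the projection $\mathfrak D^\perp\to(\mathfrak C^\perp)_s$ is surjective with kernel equal to the set of codewords of $\mathfrak D^\perp$ vanishing on the last $s$ coordinates, and that set is $(\mathfrak C_r)^\perp\times\{0\}$ (of size $q^{\deg\gcd_r(\ell(x),s(x))}$), \emph{not} $(\mathfrak C^\perp)_r\times\{0\}$ (of size $q^{\deg\ell(x)}$) --- precisely the projection-of-dual versus dual-of-projection distinction you yourself flag earlier. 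Subtracting $q^{\deg\ell(x)}$ as you describe would yield $|\kappa_1(\mathfrak C^\perp)_s|=q^{\deg t_1(x)}$, contradicting the claimed $q^{\deg t_1(x)+\deg\ell(x)-\deg\gcd_r(\ell(x),s(x))}$. The correct non-circular route is exactly this kernel--image count done properly: from Theorem~\ref{theo-4.12} and the duality-preserving Gray map one gets $|\mathfrak D^\perp|=q^{r+4s}/|\mathfrak D|=q^{\deg\ell(x)+\sum_i\deg t_i(x)}$; a codeword $(\textit{\textbf g}',0)\in\mathfrak D^\perp$ forces $\textit{\textbf g}'\in(\mathfrak C_r)^\perp$ and a codeword $(0,\textit{\textbf b}')\in\mathfrak D^\perp$ forces $\textit{\textbf b}'\in(\mathfrak C_s)^\perp$, whence $|(\mathfrak C^\perp)_s|=|\mathfrak D^\perp|/|(\mathfrak C_r)^\perp|$ and $|(\mathfrak C^\perp)_r|=|\mathfrak D^\perp|/|(\mathfrak C_s)^\perp|$, which give the stated exponents without ever invoking Section~4's later results.
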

 \begin{proof} Since we have  proved in Theorem \ref{theo-4.7} that $\mathfrak C_r=\langle \gcd_r(\ell(x),s(x))\rangle$ and $\mathfrak C_s=\langle t(x)\rangle=\langle \kappa_1t_1(x) + \kappa_2t_2(x)+\kappa_3t_3(x)+\kappa_4t_4(x)\rangle$ Therefore,  $|\mathfrak C_r|=q^{r-\deg(\gcd_r(\ell(x), s(x))}$ and $|\mathfrak C_s|= q^{4s-\sum_{i=1}^4\deg(t_i(x))}$. We can demonstrate the other results  in a similar way.

 \end{proof}
 In our next proposition, we 
 shall determine the degree of  generator polynomials for $\mathfrak{D}^\perp$.
 
 \begin{proposition}\label{prop-5.10}
     Let $\mathfrak{D}= \langle(\ell(x), 0), (s(x),t(x))\rangle$, where $t(x)= \kappa_1 t_1(x)+\kappa_2 t_2(x)+\kappa_3 t_3(x)+\kappa_4 t_4(x) $  and $\mathfrak{D}^\perp = \langle(\bar\ell(x), 0), (\bar s(x), \bar t(x))\rangle$. Then
     \begin{align*}
         & \deg(\bar\ell(x))=r-\deg({\gcd}_r(\ell(x), s(x)),\\
         &\deg(\bar t_1(x))=s-\deg(t_1(x))-\deg(\ell(x))+\deg({\gcd}_r(s(x),\ell(x))),\\
         &\deg(\bar t_j(x))=s-\deg(t_j(x)),~ \mbox{for}~j=2,3,4.\\
         \end{align*}
  \end{proposition}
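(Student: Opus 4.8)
The plan is to pin down the three degrees by a counting argument, comparing sizes of $\mathfrak{D}$ and $\mathfrak{D}^\perp$ together with the sizes of the constituent codes $\mathfrak C_r$, $\mathfrak C_s$ and their projections computed in Proposition \ref{prop-5.9}. First I would recall from Theorem \ref{theo-4.12} that $|\mathfrak{D}| = q^{\deg(f(x)) + \sum_{i=1}^4 \deg(h_i(x))}$, where $x^r-1 = f(x)\ell(x)$ and $x^s-1 = h_i(x)t_i(x)$; hence $|\mathfrak{D}| = q^{r - \deg(\ell(x)) + 4s - \sum_{i=1}^4 \deg(t_i(x))}$. Since $\mathfrak{D}$ is a code of block length $(r,s)$ over $\mathbb F_qR$, the total ambient $\mathbb F_q$-dimension is $r + 4s$, so $|\mathfrak{D}^\perp| = q^{r+4s}/|\mathfrak{D}| = q^{\deg(\ell(x)) + \sum_{i=1}^4 \deg(t_i(x))}$. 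Applying Theorem \ref{theo-4.12} again to the generating set of $\mathfrak{D}^\perp = \langle(\bar\ell(x),0),(\bar s(x),\bar t(x))\rangle$ with $\bar t(x) = \sum_{i=1}^4 \kappa_i \bar t_i(x)$, this time gives $|\mathfrak{D}^\perp| = q^{(r - \deg(\bar\ell(x))) + \sum_{i=1}^4 (s - \deg(\bar t_i(x)))}$.

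Next I would extract the individual degrees rather than just their sum. The degree of $\bar\ell(x)$ is handled directly: since $\mathfrak{D}^\perp$ is itself a $(\theta,\Theta)$-cyclic code, by Theorem \ref{theo-4.7} the code $(\mathfrak C^\perp)_r$ (the deletion of the last $s$ coordinates of $\mathfrak{D}^\perp$) equals $\langle \gcd_r(\bar\ell(x),\bar s(x))\rangle$; but on the other hand $(\mathfrak C^\perp)_r = (\mathfrak C_r)^\perp$, the Euclidean dual in $\mathbb F_q^r$ of $\mathfrak C_r = \langle \gcd_r(\ell(x),s(x))\rangle$. Counting, $|(\mathfrak C^\perp)_r| = q^{\deg(\ell(x))}$ can be read off from Proposition \ref{prop-5.9}, and also $|(\mathfrak C^\perp)_r| = q^{r - \deg(\gcd_r(\bar\ell(x),\bar s(x)))}$; combined with the fact that $\deg(\bar\ell(x)) = \deg(\gcd_r(\bar\ell(x),\bar s(x)))$ only when $\bar s(x)=0$, I instead use the cleaner route: $(\mathfrak C_r)^\perp$ has dimension $\deg(\gcd_r(\ell(x),s(x)))$, while the projection $(\mathfrak C^\perp)_r$ has $\mathbb F_q$-dimension $\deg(\ell(x))$, and since $\bar\ell(x)$ is the generator of $\mathrm{Ker}(\pi)$-part of $\mathfrak{D}^\perp$, one gets $\deg(\bar\ell(x)) = r - \deg(\gcd_r(\ell(x),s(x)))$ directly from the kernel dimension count used in Theorem \ref{theo-4.4}. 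For the $\bar t_j(x)$ with $j=2,3,4$, I would observe that in these three idempotent components the code is separable (the $\kappa_j$-component of the generator $(s(x),t(x))$ has zero first coordinate for $j \ge 2$ by the construction in Theorem \ref{theo-4.12}), so $\kappa_j \mathfrak C_s$ behaves like an ordinary $\theta$-cyclic code $\langle t_j(x)\rangle$ of length $s$, whose dual is generated by the reciprocal of $(x^s-1)/t_j(x) = h_j(x)$, giving $\deg(\bar t_j(x)) = \deg(h_j(x)) = s - \deg(t_j(x))$.

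For $\bar t_1(x)$ the first idempotent component is entangled with $\ell(x)$ through $s(x)$, so I would isolate it by the size bookkeeping from Proposition \ref{prop-5.9}: $|\kappa_1(\mathfrak C^\perp)_s| = q^{\deg(t_1(x)) + \deg(\ell(x)) - \deg(\gcd_r(\ell(x),s(x)))}$, and since $\kappa_1(\mathfrak C^\perp)_s = \langle \bar t_1(x)\rangle$ as a code of length $s$ over $\mathbb F_q$ (after scaling by $\kappa_1$), its $\mathbb F_q$-dimension is $s - \deg(\bar t_1(x))$. Equating the two expressions yields
\[
\deg(\bar t_1(x)) = s - \deg(t_1(x)) - \deg(\ell(x)) + \deg(\gcd_r(\ell(x),s(x))),
\]
as claimed. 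As a consistency check I would verify that these three formulas, summed appropriately, reproduce the total $|\mathfrak{D}^\perp| = q^{\deg(\ell(x)) + \sum_{i=1}^4 \deg(t_i(x))}$ computed above.

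\textbf{Main obstacle.} The delicate point is justifying that $\kappa_1(\mathfrak C^\perp)_s$ really is singly generated of the claimed degree: unlike the $j=2,3,4$ components, the $\kappa_1$-slot couples $\ell(x)$, $s(x)$, and $t_1(x)$, so one must argue carefully (via the $\lcm_l$/$\gcd_r$ relations of Lemma \ref{lem-3.13} and the module-homomorphism argument of Theorem \ref{theo-4.4}) that the dual's generator in this component has exactly the degree dictated by the size count, and that no smaller generator exists. Establishing that the size formulas in Proposition \ref{prop-5.9} for the ``mixed'' quantities $|\kappa_1(\mathfrak C^\perp)_s|$ are internally consistent with the generator structure of $\mathfrak{D}^\perp$ is where the real work lies; everything else is routine degree arithmetic.
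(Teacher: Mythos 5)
Your proposal is correct and follows essentially the same route as the paper: each degree is read off by equating $q^{r-\deg(\bar\ell(x))}$ (respectively $q^{s-\deg(\bar t_i(x))}$) with the corresponding cardinality supplied by Proposition \ref{prop-5.9}. (The passing identification of $(\mathfrak C^\perp)_r$ with $(\mathfrak C_r)^\perp$ is false in general --- Proposition \ref{prop-5.9} assigns them the sizes $q^{\deg(\ell(x))}$ and $q^{\deg(\gcd_r(\ell(x),s(x)))}$, which differ unless $\mathfrak D$ is separable --- but you discard that route, and the argument you actually use, that $\bar\ell(x)$ generates $\mathrm{Ker}(\pi)\cap\mathfrak{D}^\perp=(\mathfrak C_r)^\perp\times\{0\}$, is exactly the paper's.)
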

  \begin{proof}
     It can be easily established that $(\mathfrak C_r)^\perp=\langle \bar\ell(x)\rangle$.  So $|(\mathfrak C_r)^\perp|=q^{r-\deg(\bar\ell(x))}.$ By Proposition \ref{prop-5.9}, we get that $|(\mathfrak C_r)^\perp|=q^{\deg(\gcd_r(\ell(x),s(x)))}$. Hence, $\deg(\bar\ell(x))=r-\deg(\gcd_r(\ell(x),s(x))).$
      Again, it can be easily seen that  that  $\kappa_1(\mathfrak C^\perp)_s=\langle \kappa_1 \bar t_1(x) \rangle$  is $\theta$-cyclic code over $\mathbb F_q$. Therefore, $|\kappa_1(\mathfrak C^\perp)_s|=q^{s-\deg(\bar t_1(x))}.$ Now, by  using  Proposition \ref{prop-5.9}, we have  \linebreak   $|\kappa_1(\mathfrak C^\perp)_s|=q^{\deg(t_1(x))+\deg(\ell(x))-\deg(\gcd_r(\ell(x),s(x)))}$. This shows that   \linebreak 
 $\deg(\bar t_1(x))=s-{\deg(t_1(x))-\deg(\ell(x))+\deg(\gcd_r(\ell(x),s(x)))}.$ 
      Similarly,  the remaining parts can be proved.
  \end{proof}
 In the  next theorem,  using the above results, we shall calculate the generator polynomial for $\mathfrak D^\perp.$

 \begin{theorem}\label{theo-5.11}
     Let  $\mathfrak{D} = \langle(\ell(x), 0),(s(x), t(x))\rangle$.  Then \[\mathfrak{D}^\perp =\langle(\bar\ell(x),0), (\bar s(x),\bar t(x)\rangle ,\] such that
     \begin{enumerate}
         \item $\bar\ell(x) = \frac{x^r-1}{\gcd_l\left(\Psi_1^{ {\mathfrak m}-\deg(\ell(x))}(\ell^\dag(x)),
         \Psi_2^{ {\mathfrak m}-\deg(s(x))}(s^\dag(x))\right)},$
         \item $\bar s(x) = \frac{\lambda(x)(x^r-1)}{\Psi_1^{ {\mathfrak m}-\deg(\ell(x))}(\ell^\dag(x))}$, where $\lambda(x)\in {\mathbb{F}_q[x:\theta]},$
         \item $\bar t_1(x)  = \frac{x^s-1}{\Psi_2^{{\mathfrak m}-\deg\left(\frac{\ell(x)t_1(x)}{\gcd_l(\ell(x), s(x)}\right)}\left(\frac{\ell(x)t_1(x)}{\gcd_l(\ell(x), s(x))}\right)^\dag}$,
         \item $\bar t_2(x) = \frac{x^s-1}{\Psi_2^{{\mathfrak m}-\deg t_2(x)}(t_2^\dag(x))},$
         \item $\bar t_3(x) = \frac{x^s-1}{\Psi_2^{{\mathfrak m}-\deg t_3(x)}(t_3^\dag(x))},$
         \item $\bar t_4(x) = \frac{x^s-1}{\Psi_2^{{\mathfrak m}-\deg t_4(x)}(t_4^\dag(x))}.$
     \end{enumerate}
  
 \end{theorem}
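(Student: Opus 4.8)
The plan is to determine $\mathfrak D^\perp$ by first finding its constituent codes $(\mathfrak C^\perp)_r$ and the four pieces $\kappa_i(\mathfrak C^\perp)_s$, and then reassembling the generator polynomials $(\bar\ell(x),0)$ and $(\bar s(x),\bar t(x))$ from them, using the degree information already recorded in Proposition~\ref{prop-5.10} together with the orthogonality criterion of Proposition~\ref{prop-5.3} and its consequence Proposition~\ref{lem-5.8}. The guiding principle throughout is the classical fact that for a $\theta$-cyclic code $\langle g(x)\rangle$ with $x^n-1 = h(x)g(x)$, the dual is generated (up to a unit and up to applying $\Psi_1$ an appropriate number of times) by the reciprocal of $h(x)$; here the non-commutativity forces us to track the twist $\Psi_1^{\mathfrak m - \deg(\cdot)}$ carefully, exactly as in \cite{AHS22}.

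First I would fix notation: since $\mathfrak D^\perp$ is again a $(\theta,\Theta)$-cyclic code (shown in the preliminaries), Theorem~\ref{theo-4.4} gives $\mathfrak D^\perp = \langle(\bar\ell(x),0),(\bar s(x),\bar t(x))\rangle$ with $\bar t(x) = \kappa_1\bar t_1(x) + \kappa_2\bar t_2(x) + \kappa_3\bar t_3(x) + \kappa_4\bar t_4(x)$, and Proposition~\ref{prop-5.10} already pins down $\deg(\bar\ell(x))$ and all four $\deg(\bar t_j(x))$.

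For part (1): take $d(x) = (\ell(x),0)$ and $d'(x) = (\bar\ell(x),0)$ in Proposition~\ref{lem-5.8}(1); since both second components vanish, $\ell(x)\,\Psi_1^{\mathfrak m - \deg(\bar\ell(x))}(\bar\ell^\dag(x)) \equiv 0 \pmod{x^r-1}$, so $\Psi_1^{\mathfrak m-\deg(\bar\ell(x))}(\bar\ell^\dag(x))$ is a right multiple of $\tfrac{x^r-1}{\ell(x)}$; similarly, orthogonality of $(\bar\ell(x),0)$ against $(s(x),t(x))$ forces $\Psi_1^{\cdots}(\bar\ell^\dag(x))$ to be a right multiple of the complement of $s(x)$. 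Taking the left gcd and matching degrees via Proposition~\ref{prop-5.10} yields the stated formula after untangling the reciprocal/$\Psi$ bookkeeping (one uses Proposition~\ref{lem-5.5}(2),(3) and the corresponding statements over $R[x:\Theta]$, plus Lemma~\ref{lem-2.9} to relate $\gcd_r$ and $\lcm_l$ degrees).

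For parts (3)--(6): restricting to the $\kappa_j$-component, $\kappa_j(\mathfrak C^\perp)_s$ is a $\theta$-cyclic code over $\mathbb F_q$, and Proposition~\ref{prop-5.9} gives its size. For $j=2,3,4$ the component $\kappa_j(\mathfrak C^\perp)_s$ "decouples" from the $\mathbb F_q$-part and is simply the ordinary dual of $\langle t_j(x)\rangle$, so $\bar t_j(x) = \tfrac{x^s-1}{\Psi_2^{\mathfrak m - \deg t_j(x)}(t_j^\dag(x))}$ by the classical reciprocal formula; for $j=1$, Lemma~\ref{lem-3.13} shows $\ell(x)\mid_r h_1(x)s(x)$, so the relevant "check polynomial" in the first coordinate is $\tfrac{\ell(x)t_1(x)}{\gcd_l(\ell(x),s(x))}$ (its degree matches $s-\deg(\bar t_1(x))$ by Proposition~\ref{prop-5.10}), and applying the reciprocal-of-check-polynomial rule gives (3). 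Part (2) follows by identifying $\bar s(x)$ as a preimage under the projection $\pi$ of Theorem~\ref{theo-4.4}, combined with the degree bound $\deg(\bar\ell(x)) > \deg(\bar s(x))$ and Proposition~\ref{lem-5.8}(1); the polynomial $\lambda(x)$ absorbs the non-uniqueness of this preimage.

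\textbf{Main obstacle.} The delicate point is the twist bookkeeping: because $\mathbb F_q[x:\theta]$ and $R[x:\Theta]$ are non-commutative and lack unique factorization, one cannot freely commute $\Psi_1,\Psi_2$ past products, and the exponent $\mathfrak m - \deg(\cdot)$ in each formula must be derived — not guessed — by carefully expanding $d(x)\,o\,d'(x)$ as in the proof of Proposition~\ref{prop-5.3} and reading off which reciprocal/twist combination makes each $S_i$ vanish. In particular, verifying that $\gcd_l$ (rather than $\gcd_r$) appears in (1) and (3), and that the numerator $x^r-1$ versus $x^s-1$ is correctly placed, requires matching both the divisibility relations (from Proposition~\ref{lem-5.8} and Lemma~\ref{lem-2.10}) and the exact degrees (from Proposition~\ref{prop-5.10}) simultaneously; any slippage by a unit or by one application of $\Psi$ would be invisible to a degree count alone, so I would double-check the $j=1$ case against a small explicit example of the type in Section~3.
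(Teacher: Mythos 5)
Your overall strategy --- pin down the degrees via Proposition \ref{prop-5.10}, then use the form $o$, Proposition \ref{lem-5.8} and Lemma \ref{lem-2.10} to force each generator --- is the same as the paper's, but two of your steps have concrete gaps. First, in part (1) you apply Proposition \ref{lem-5.8} with $d(x)=(\ell(x),0)$ and $d'(x)=(\bar\ell(x),0)$, which yields $\ell(x)\,\Psi_1^{\mathfrak m-\deg(\bar\ell(x))}(\bar\ell^\dag(x))\equiv 0 \pmod{x^r-1}$, i.e.\ a relation on the \emph{reciprocal} of $\bar\ell(x)$ sitting on the \emph{right}. To combine this with the analogous relation coming from $(s(x),t(x))$ you would need a mirror image of Lemma \ref{lem-2.10} (common \emph{left} factors annihilating a common right factor), which the paper does not provide, and you would then have to un-reciprocate via Proposition \ref{lem-5.5}(2)--(3) to recover $\bar\ell(x)$ --- precisely the ``untangling'' you defer. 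The paper avoids all of this by taking $d(x)=(\bar\ell(x),0)$ and $d'(x)\in\{(\ell(x),0),(s(x),t(x))\}$, so that $\bar\ell(x)$ appears directly as a left factor: $\bar\ell(x)\gcd_l\bigl(\Psi_1^{\mathfrak m-\deg(\ell(x))}(\ell^\dag(x)),\Psi_1^{\mathfrak m-\deg(s(x))}(s^\dag(x))\bigr)=\mu(x)(x^r-1)$, and the degree count kills $\mu(x)$.

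Second, and more seriously, in part (3) you never produce the element of $\mathfrak D$ that makes Proposition \ref{lem-5.8}(2) applicable. Since $\kappa_1\star(s(x),t(x))=(s(x),\kappa_1 t_1(x))$ has a nonzero first coordinate, you cannot read off a relation on $\bar t_1(x)$ the way you can for $j=2,3,4$ (where $\eta(\kappa_j)=0$ kills the first coordinate). The paper's key move is to exhibit
\[
\kappa_1\frac{\ell(x)}{\gcd_r(\ell(x),s(x))}\star(s(x),t(x))-\frac{s(x)}{\gcd_r(\ell(x),s(x))}\star(\ell(x),0)=\Bigl(0,\ \kappa_1\frac{\ell(x)t_1(x)}{\gcd_r(\ell(x),s(x))}\Bigr)\in\mathfrak D,
\]
and only then pair it with $(\bar s(x),\kappa_1\bar t_1(x))$ under $o$ to get $\kappa_1\bar t_1(x)\Psi_2^{\cdots}\bigl(\tfrac{\ell(x)t_1(x)}{\gcd_r(\ell(x),s(x))}\bigr)^\dag\equiv 0\pmod{x^s-1}$. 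Your citation of Lemma \ref{lem-3.13} and a generic ``reciprocal-of-check-polynomial rule'' does not supply this element, and without it the identification of $\tfrac{\ell(x)t_1(x)}{\gcd_r(\ell(x),s(x))}$ as the relevant check polynomial is an unjustified guess (correct only because its degree happens to match Proposition \ref{prop-5.10}). Your instinct to verify the $j=1$ case on an explicit example is sound, but it is this missing construction, not a unit or a stray $\Psi$, that the example would expose.
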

\begin{proof} (1) Since $(\bar\ell(x), 0 )\in \mathfrak{D}^\perp$. So  $(\bar\ell(x), 0 ) ~o~ (\ell(x), 0) \equiv 0\pmod{x^\mathfrak {m}-1}$ and \linebreak $(\bar\ell(x), 0 )~o~(s(x), t(x)) \equiv 0\pmod{x^\mathfrak {m}-1}$. \par
Now, using Proposition  \ref{lem-5.8}, we obtain\[\bar\ell(x)\Psi_1^{{\mathfrak m}-\deg(\ell(x))}(\ell^\dag (x)) \equiv 0 \pmod
{x^r-1}\] and \[\bar\ell(x)\Psi_1^{ {\mathfrak m}-\deg(s(x))}(s^\dag (x)) \equiv 0 \pmod{x^r-1}.\] Since  $(x^r-1)\in Z((\mathbb{F}_q[x:\theta]))$, so using   Lemma \ref{lem-2.10}, we obtain \[\bar\ell(x){\gcd}_l\left(\Psi_1^{{\mathfrak m}-\deg(\ell(x)}(\ell^\dag(x)), \Psi_1^{{\mathfrak m}-\deg(s(x))}(s^\dag(x))\right)\equiv 0\pmod{x^r-1}.\] Implying   \[\bar\ell(x){\gcd}_l\left(\Psi_1^{{\mathfrak m}-\deg(\ell(x)}(\ell^\dag(x)), \Psi_1^{{\mathfrak m}-\deg(s(x))}(s^\dag(x))\right) =  \mu(x)(x^r-1),\] for $\mu(x)\in\mathbb{F}_q[x:\theta]$.  
As $\deg(\bar \ell(x)) = r-\deg({\gcd}_r(\ell(x), s(x))$, so we conclude that  $\mu(x) = 1$. Hence, \[\bar\ell(x) = \frac{x^r-1}{\gcd_l\left(\Psi_1^{{\mathfrak m}-\deg(\ell(x))}(\ell^\dag(x)), \Psi_1^{{\mathfrak m}-deg(s(x))}(s^\dag(x)\right)}.\]
$(2)$  As $d'(x) =  (\bar s(x), \bar t(x))\in \mathfrak{D}^\perp,$ then   \[d'(x) o (\ell(x), 0) =  (\bar s(x), \bar t(x))~ o~ (\ell(x), 0),\] by the bilinearity of map $``o"$, we have \[\bar s(x)\Psi_1^{{\mathfrak m}-\deg(\ell(x))}(\ell^\dag(x)) \equiv 0 \pmod{x^r-1}.\] Consequently, for some   $\lambda(x)\in \mathbb{F}_q[x:\theta],$ we have \[\bar s(x)\Psi_1^{{\mathfrak m}-\deg(\ell(x))}(\ell^\dag(x)) = \lambda(x)(x^r-1).\] As a result,  \[\bar s(x)= \frac{\lambda(x)(x^r-1)}{\Psi_1^{{\mathfrak m}-\deg(\ell(x))}(\ell^\dag(x))}.\] 
(3) Since
\begin{align*}
&\kappa_1\left(\frac{\ell(x)}{\gcd_r(\ell(x), s(x))}\right)\star(s(x), \kappa_1t_1(x)+\kappa_2t_2(x)+\kappa_3t_3(x)+\kappa_4t_4(x))\\
&\hspace{5cm}- \frac{s(x)}{\gcd_r(\ell(x), s(x))}\star \left(\ell(x), 0\right)\\
&~~~~~~~~~~~~~~~~~~~~~~~~~~~~~~~=\left(0, \frac{\kappa_1\ell(x)t_1(x)}{\gcd_r(\ell(x), s(x))}\right)\in \mathfrak{D}.
\end{align*}
Further, $\kappa_1 \star (\bar s(x), \kappa_1\bar t_1(x)+\kappa_2\bar t_2(x)+\kappa_3\bar t_3(x)+\kappa_4\bar t_4(x)) = (\bar s(x), \kappa_1\bar t_1(x))\in \mathfrak{D}^\perp.$ Therefore, \[ (\bar s(x), \kappa_1\bar t_1(x))\ o\left(0, \kappa_1\frac{\ell(x)t_1(x)}{\gcd_r(\ell(x), s(x))}\right) = 0 .\] Using Lemma \ref{lem-5.8}, we get  \[\kappa_1\bar t_1(x)\Psi_2^{{\mathfrak m}-\deg\left(\frac{\ell(x)t_1(x)}{\gcd_r(\ell(x), s(x))}\right)}\left(\frac{\ell(x)t_1(x)}{\gcd_r(\ell(x), s(x))}\right)^\dag \equiv 0\pmod{x^s-1}.\] 
 Hence, $\kappa_1\bar t_1(x)\Psi_2^{{\mathfrak m}-\deg\left(\frac{\ell(x)t_1(x)}{\gcd_r(\ell(x), s(x))}\right)}\left(\frac{\ell(x)t_1(x)}{\gcd_r(\ell(x), s(x))}\right)^\dag = \kappa_1\gamma(x)(x^s-1),$ where $\gamma(x)\in  R[x:\Theta].$ 
 So, \[\bar t_1(x) = \frac{\gamma(x)(x^s-1)}{\Psi_2^{{\mathfrak m}-\deg\left(\frac{\ell(x)t_1(x)}{\gcd_r(\ell(x),s(x))}\right)}\left(\frac{\ell(x)t_1(x)}{\gcd_r(\ell(x), s(x))}\right)^\dag}.\]
 Now, Proposition \ref{prop-5.10} implies that  $\deg(\bar t_1(x)) = s-\deg (t_1(x))-\deg(\ell(x))+\deg(\gcd_r(\ell(x), s(x))),$ so $\gamma(x) = 1.$\par
 \vskip 5pt
 
 Therefore, $\bar t_1(x) = \frac{x^s-1}{\Psi_2^{{\mathfrak m}-\deg\left(\frac{\ell(x)t_1(x)}{\gcd_r(\ell(x), s(x))}\right)}\left(\frac{\ell(x)t_1(x)}{\gcd_r(\ell(x), s(x))}\right)^\dag}$.\par 
 \vskip 5pt
\noindent(4)~~~ Since $\kappa_2\star(\bar s(x), \kappa_1\bar t_1(x)+\kappa_2\bar t_2(x)+\kappa_3\bar t_3(x)+\kappa_4\bar t_4(x)) = (0, \kappa_2\bar t_2(x))\in \mathfrak{D}^\perp $. Then \[(0, \kappa_2\bar t_2(x))~ o~ (s(x), \kappa_1\bar t_1(x)+\kappa_2\bar t_2(x)+\kappa_3\bar t_3(x)+\kappa_4\bar t_4(x)) \equiv 0\pmod{x^\mathfrak m-1}.\]  Lemma \ref{lem-5.8} implies that  \[\kappa_2\bar t_2(x)\Psi_2^{{\mathfrak m}-\deg(t_2(x))}(t_2^\dag(x)) \equiv 0 \pmod{x^s-1}.\] So, for  some $\nu(x)\in R[x:\Theta]$, we have \[\kappa_2\bar t_2(x)\Psi_1^{{\mathfrak m}-\deg t_2(x)}(t_2^\dag(x)) = \nu(x)(x^s-1).\]  Next, by Proposition \ref{prop-5.10}, it follows that  $\deg(\bar t_2(x))= s-\deg(t_2(x))$. Therefore, $\nu(x) = 1.$ Consequently, we have    \[\bar t_2(x) = \frac{x^s-1}{\Psi_2^{{\mathfrak m}-\deg t_2(x)}(t_2^\dag(x))}.\] 
By using similar argument as above we can calculate, 
 \[\bar t_3(x) = \frac{x^s-1}{\Psi_2^{{\mathfrak m}-\deg t_3(x)}(t_3^\dag(x))},\] and 
         \[\bar t_4(x) = \frac{x^s-1}{\Psi_2^{{\mathfrak m}-\deg t_4(x)}(t_4^\dag(x))}.\]

 \end{proof}

\section{QECCs from $(\theta,\Theta)$-cyclic codes over $\mathbb{F}_qR$}
Quantum computers can potentially solve intricate problems at a significantly accelerated pace compared to classical computers. The field of quantum computing employs QECCs to shield quantum information from potential errors stemming from factors such as decoherence and quantum noise.\par
This section examines the method to    construct  QECCs from separable  $(\theta,\Theta)$-cyclic code over $\mathbb F_qR$. Let $(\mathbb C^q)^{\otimes n}$ denote a $q^n$-dimensional Hilbert space over the complex field $\mathbb C$. Then a $q$-ary $[[n, k,d]]_q$ QECC is $K$-dimensional subspace of  $(\mathbb C^q)^{\otimes n}$, where $k= \log_q K$ denote the dimension,  $n$ is length  and  $d$ denote the Hamming distance of 
 QECC, respectively.\par

\begin{definition}\cite{KKKS06}(Quantum singleton bound)
Consider $\mathcal{Q}$ as an $[[n,k,d]]_q$ QECC. According to the Singleton bound, it follows that $2d \leq n - k + 2$. Additionally, $\mathcal{Q}$ is termed a Maximum Distance Separable (MDS) QECC if $2d = n - k + 2$.
\end{definition}

We now present some results which will be useful in constructing QECCs.
\begin{theorem}\label{theo-6.1}
    Let  $\mathfrak{D}$ be a   linear code of block length $(r,s)$ over $\mathbb{F}_qR$.  Then $\Phi(\mathfrak{D}^\perp) = \Phi(\mathfrak{D})^\perp.$ 
    Moreover, $\Phi(\mathfrak{D})$ is self-dual if and only if    $\mathfrak{D}$ is self-dual.
\end{theorem}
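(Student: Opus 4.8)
The plan is to first prove the inclusion $\Phi(\mathfrak{D}^\perp)\subseteq\Phi(\mathfrak{D})^\perp$ by a direct computation that compares the $R$-valued inner product on $\mathbb{F}_q^r\times R^s$ with the Euclidean inner product on $\mathbb{F}_q^{r+4s}$, and then upgrade this to an equality by a cardinality count. The self-dual statement will then fall out formally.

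For the inclusion, take $\mathbf{d}=(g_0,\ldots,g_{r-1},b_0,\ldots,b_{s-1})\in\mathfrak{D}$ and $\mathbf{d}'=(g_0',\ldots,g_{r-1}',b_0',\ldots,b_{s-1}')\in\mathfrak{D}^\perp$, and write $b_j=\sum_{k=1}^4\kappa_kn_{k,j}$, $b_j'=\sum_{k=1}^4\kappa_kn_{k,j}'$. Using $\kappa_k\kappa_l=0$ for $k\neq l$ and $\kappa_k^2=\kappa_k$, one gets $b_jb_j'=\sum_{k=1}^4\kappa_kn_{k,j}n_{k,j}'$, hence
\[
\mathbf{d}\cdot\mathbf{d}'=\kappa_1\Big(\sum_{i=0}^{r-1}g_ig_i'+\sum_{j=0}^{s-1}n_{1,j}n_{1,j}'\Big)+\sum_{k=2}^4\kappa_k\Big(\sum_{j=0}^{s-1}n_{k,j}n_{k,j}'\Big).
\]
Since $\{\kappa_1,\kappa_2,\kappa_3,\kappa_4\}$ is a complete set of orthogonal idempotents, $\mathbf{d}\cdot\mathbf{d}'=0$ forces each of the four bracketed scalars to vanish, so in particular their sum vanishes. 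On the other hand, because $MM^T=I_4$ we have $(b_jM)\cdot(b_j'M)=(n_{1,j},\ldots,n_{4,j})MM^T(n_{1,j}',\ldots,n_{4,j}')^T=\sum_{k=1}^4 n_{k,j}n_{k,j}'$, so the Euclidean product satisfies $\Phi(\mathbf{d})\cdot\Phi(\mathbf{d}')=\sum_i g_ig_i'+\sum_j\sum_k n_{k,j}n_{k,j}'$, which is exactly that sum of the four brackets. Hence $\Phi(\mathbf{d})\cdot\Phi(\mathbf{d}')=0$ and therefore $\Phi(\mathfrak{D}^\perp)\subseteq\Phi(\mathfrak{D})^\perp$.

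For the reverse inclusion I would count. As $\Phi$ is a bijection, $|\Phi(\mathfrak{D}^\perp)|=|\mathfrak{D}^\perp|$, and by Proposition \ref{prop-3.2} the image $\Phi(\mathfrak{D})$ is an $\mathbb{F}_q$-linear code of length $r+4s$, so $|\Phi(\mathfrak{D})^\perp|=q^{r+4s}/|\mathfrak{D}|$. Thus it suffices to establish $|\mathfrak{D}|\,|\mathfrak{D}^\perp|=q^{r+4s}$. Here I would exploit that $\mathfrak{D}$, being an $R$-submodule of $\mathbb{F}_q^r\times R^s$, is stable under the $\star$-action of each $\kappa_k$, so $\mathfrak{D}=\kappa_1\star\mathfrak{D}\oplus\kappa_2\star\mathfrak{D}\oplus\kappa_3\star\mathfrak{D}\oplus\kappa_4\star\mathfrak{D}$; using $\eta(\kappa_1)=1$, $\eta(\kappa_k)=0$ for $k\ge 2$, together with $R=\kappa_1R\oplus\kappa_2R\oplus\kappa_3R\oplus\kappa_4R$, this identifies $\mathbb{F}_q^r\times R^s$ with $\mathbb{F}_q^{r+s}\times\mathbb{F}_q^s\times\mathbb{F}_q^s\times\mathbb{F}_q^s$ and exhibits $\mathfrak{D}$ as a product $A_1\times A_2\times A_3\times A_4$ of $\mathbb{F}_q$-linear codes (with $A_1$ of length $r+s$ and $A_2,A_3,A_4$ of length $s$). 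The displayed identity for $\mathbf{d}\cdot\mathbf{d}'$ shows that orthogonality in $R$ is equivalent to blockwise Euclidean orthogonality, so $\mathfrak{D}^\perp$ corresponds to $A_1^\perp\times A_2^\perp\times A_3^\perp\times A_4^\perp$. The classical relation $|A_i|\,|A_i^\perp|=q^{\mathrm{len}(A_i)}$ then gives $|\mathfrak{D}|\,|\mathfrak{D}^\perp|=q^{(r+s)+s+s+s}=q^{r+4s}$. Combining with the inclusion yields $\Phi(\mathfrak{D}^\perp)=\Phi(\mathfrak{D})^\perp$.

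The self-dual claim is then formal: if $\mathfrak{D}=\mathfrak{D}^\perp$ then $\Phi(\mathfrak{D})=\Phi(\mathfrak{D}^\perp)=\Phi(\mathfrak{D})^\perp$; conversely if $\Phi(\mathfrak{D})=\Phi(\mathfrak{D})^\perp=\Phi(\mathfrak{D}^\perp)$ then injectivity of $\Phi$ gives $\mathfrak{D}=\mathfrak{D}^\perp$. The only genuinely delicate point is the bookkeeping in the cardinality step: making precise the idempotent decomposition of $\mathfrak{D}$ into $\mathbb{F}_q$-linear blocks and verifying that the $R$-valued form splits as an orthogonal direct sum along the $\kappa_k$, so that $|\mathfrak{D}|\,|\mathfrak{D}^\perp|=q^{r+4s}$; the rest is routine linear algebra over $\mathbb{F}_q$.
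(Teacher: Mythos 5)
Your proof is correct, and its core — expanding the $R$-valued inner product along the orthogonal idempotents $\kappa_1,\dots,\kappa_4$, reading off that each coefficient vanishes, and then using $MM^T=I_4$ to identify the Euclidean product $\Phi(\mathbf d)\cdot\Phi(\mathbf d')$ with the sum of those four coefficients — is exactly the computation in the paper's proof of Theorem \ref{theo-6.1}. Where you genuinely diverge is in the reverse inclusion. The paper disposes of it in one line, ``since $\Phi$ is a bijection, $|\Phi(\mathfrak{D}^\perp)|=|\Phi(\mathfrak{D})^\perp|$,'' which as stated is a non sequitur: bijectivity only gives $|\Phi(\mathfrak{D}^\perp)|=|\mathfrak{D}^\perp|$, and one still needs $|\mathfrak{D}|\,|\mathfrak{D}^\perp|=q^{r+4s}$, which is not automatic for modules over a ring that is not a field. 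Your cardinality step supplies precisely this missing ingredient: decomposing $\mathfrak{D}$ via the $\kappa_k\star$-action into blocks $A_1\times A_2\times A_3\times A_4$ of $\mathbb{F}_q$-linear codes of lengths $r+s,s,s,s$, observing that the displayed identity for $\mathbf d\cdot\mathbf d'$ makes $R$-orthogonality equivalent to blockwise Euclidean orthogonality (so $\mathfrak{D}^\perp\cong A_1^\perp\times\cdots\times A_4^\perp$), and invoking $|A_i|\,|A_i^\perp|=q^{\mathrm{len}(A_i)}$. So your argument is not a different route so much as a repaired version of the paper's: same inclusion computation, but with the counting actually justified rather than asserted. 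The self-duality equivalence is handled identically in both.
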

\begin{proof}
    Let $\textit{\textbf d}_1 =(g_0,g_1,\ldots,g_{r-1}, b_0,b_1,\ldots,b_{s-1})\in \mathfrak{D}$,
    $\textit{\textbf d}_2 =(g'_0,g'_1,\ldots,g'_{r-1}, b'_0,b'_1,\ldots,b'_{s-1})\in\mathfrak{D}^\perp$,
    where $b_i = \kappa_1n_{1,i}+\kappa_2n_{2,i}+\kappa_3n_{3,i}+\kappa_4n_{4,i}$ and $b'_i = \kappa_1n'_{1,i}+\kappa_2n'_{2,i}+\kappa_3n'_{3,i}+\kappa_4n'_{4,i}~\mbox{for}~ i =0,1,2,\ldots,s-1.$ Now,  
       \[\textit{\textbf d}_1\cdot\textit{\textbf d}_2 = \kappa_1\sum_{j=0}^{r-1} g_j g'_j + \sum_{i=0}^{s-1} b_i b'_i =0,\] which implies
    \begin{align*}
         \kappa_1\sum_{j=0}^{r-1} g_j g'_j +\sum_{i=0}^{s-1}(\kappa_1n_{1,i}+\kappa_2n_{2,i}+\kappa_3n_{3,i}+\kappa_4n_{4,i})(\kappa_1n'_{1,i}+\kappa_2n'_{2,i}+\kappa_3n'_{3,i}+\kappa_4n'_{4,i})&=0\\
         \kappa_1\sum_{j=o}^{r-1} g_i g'_i +\sum_{i=0}^{s-1}(\kappa_1n_{1,i}n'_{1,i}+\kappa_2n_{2,i}n'_{2,i}+\kappa_3n_{3,i}n'_{3,i}+\kappa_4n_{4,i}n'_{4,i})&=0.
     \end{align*}
    Now, by comparing the coefficients of   $\kappa_1$, $\kappa_2$, $\kappa_3$, and $\kappa_4$ on both the sides, we get
    \begin{align*}
    \sum_{j=0}^{r-1} g_jg'_j +\sum_{i=0}^{s-1}n_{1,i}n'_{1,i}&=0,\\
    \sum_{i=0}^{s-1}n_{2,i}n'_{2,i}&=0,\\
    \sum_{i=0}^{s-1}n_{3,i}n'_{3,i}&=0,\\
    \sum_{i=0}^{s-1}n_{4,i}n'_{4,i}&=0.
    \end{align*}
Next, we have   
\begin{align*}
\Phi(\textit{\textbf d}_1)\cdot\Phi(\textit{\textbf d}_2)&=\sum_{j=0}^{r-1} g_jg'_j + \sum_{i=0}^{s-1}b_iMM^Tb'_i,\\
    &=\sum_{j=0}^{r-1} g_jg'_j + \sum_{i=0}^{s-1}(n_{1,i}n'_{1,i}+n_{2,i}n'_{2,i}+n_{3,i}n'_{3,i}+n_{4,i}n'_{4,i}).
\end{align*}
Since $MM^T =  I_{4\times 4}.$
Thus, from the above equation, we have  $\sum_{j=0}^{r-1} g_jg'_j + \sum_{i=0}^{s-1}(n_{1,i}n'_{1,i}+n_{2,i}n'_{2,i}+n_{3,i}n'_{3,i}+n_{4,i}n'_{4,i}) = 0$. This implies that $\Phi(\textit{\textbf d}_2) \in \Phi(\mathfrak{D})^\perp$ for $\Phi(\textit{\textbf d}_1) \in \Phi(\mathfrak{D})$. Therefore, $\Phi(\mathfrak{D}^\perp) \subseteq \Phi(\mathfrak{D})^\perp$.
Since $\Phi$ is a bijection, so we have $|\Phi(\mathfrak{D}^\perp)| = |\Phi(\mathfrak{D})^\perp|$. Thus, we can conclude that $\Phi(\mathfrak{D}^\perp) = \Phi(\mathfrak{D})^\perp.$
Furthermore, if  $\mathfrak{D}$ is a self-dual code, then  $\Phi(\mathfrak{D}) = \Phi(\mathfrak{D}^\perp) = \Phi(\mathfrak{D})^\perp$. Consequently, we can conclude that $\Phi(\mathfrak{D})$ is also self-dual. 
\end{proof}
To construct QECCs in our setup, we present a well-known result, called CSS construction, discussed by Calderbank, Shor and Stean in 1998.
\begin{theorem}\label{theo-6.2} \cite[CSS construction]{CRSS98}
 Let   $\mathfrak D_{1}=[n,k_1,d_1]$ and $\mathfrak D_{2}=[n,k_2,d_2 ]$ are linear codes  over $\mathbb{F}_q$ with $\mathfrak D_{2}^\perp\subseteq \mathfrak D_{1}$.  Then  a $[[n,k_1+k_2-n,\mbox{min}\{d_1,d_2\}]]_q$  QECC   exists. Particularly, if  $\mathfrak D_{1}^\perp\subseteq \mathfrak D_{1}$, then a $[[n,2k_1-n,d_1 ]]_q$ QECC  can be constructed. 
    
\end{theorem}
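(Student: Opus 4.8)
The statement is precisely the Calderbank--Shor--Steane (CSS) construction, so the plan is to reproduce its proof in the present $q$-ary setting. The first step is to restate the hypothesis as an inclusion of classical codes: put $C_1 = \mathfrak D_1$ and $C_2 = \mathfrak D_2^\perp$. Over $\mathbb F_q$ one has $\dim C_2 = n-k_2$ and $C_2^\perp = \mathfrak D_2$, hence $d(C_2^\perp)=d_2$, and the assumption $\mathfrak D_2^\perp \subseteq \mathfrak D_1$ is exactly $C_2 \subseteq C_1$. Thus it is enough to build, from a pair $C_2 \subseteq C_1$ of $\mathbb F_q$-linear codes, a quantum code of dimension $[C_1:C_2] = q^{\,k_1-(n-k_2)} = q^{\,k_1+k_2-n}$ and minimum distance at least $\min\{d(C_1),d(C_2^\perp)\} = \min\{d_1,d_2\}$.

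The second step exhibits the code explicitly: for each coset $v+C_2$ of $C_2$ in $C_1$, set
\[
|v+C_2\rangle \;=\; \frac{1}{\sqrt{|C_2|}}\sum_{w\in C_2}|v+w\rangle \in (\mathbb C^q)^{\otimes n},
\]
and let $\mathcal Q$ be the $\mathbb C$-span of these vectors. Distinct cosets yield orthonormal states, so $\dim_{\mathbb C}\mathcal Q = [C_1:C_2] = q^{\,k_1+k_2-n}$; this is the claimed $K$, with $k = \log_q K = k_1+k_2-n$ and length $n$.

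The third step is error correction, carried out by the usual split into $X$-type and $Z$-type Pauli errors. A shift error $X(\mathbf a)$, $\mathbf a \in \mathbb F_q^n$, sends $|v+C_2\rangle$ to $|v+\mathbf a+C_2\rangle$; measuring the syndrome against a parity-check matrix of $C_1$ returns $0$ on $\mathcal Q$ and detects every $\mathbf a$ with $0<w_H(\mathbf a)\le d(C_1)-1$, while an undetectable nonzero shift that also changes the coset would have to lie in $C_1\setminus C_2$, impossible below weight $d_1$. Conjugating by the $n$-fold quantum Fourier transform turns each phase error $Z(\mathbf b)$ into a shift and replaces the pair $C_2\subseteq C_1$ by $C_1^\perp\subseteq C_2^\perp$, so the same argument with a parity-check matrix of $C_2^\perp$ (that is, of $\mathfrak D_2$) handles $Z$-errors of weight $<d(C_2^\perp)=d_2$. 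Equivalently, one observes that $C_2^\perp\times C_1^\perp\subseteq\mathbb F_q^{2n}$ is self-orthogonal for the trace-symplectic form --- merely $C_2\subseteq C_1$ rewritten --- and invokes the stabilizer-code correspondence, whose nonzero symplectic-dual vectors outside $C_2^\perp\times C_1^\perp$ have minimum weight $\min\{d(C_1),d(C_2^\perp)\}$. Either way $\mathcal Q$ is an $[[n,k_1+k_2-n,\min\{d_1,d_2\}]]_q$ code, and the special case is the choice $\mathfrak D_2=\mathfrak D_1$: the hypothesis then reads $\mathfrak D_1^\perp\subseteq\mathfrak D_1$, $k_2=k_1$, $d_2=d_1$, giving $[[n,2k_1-n,d_1]]_q$.

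The main obstacle is exactly this third step: making rigorous that the coset states form a nondegenerate stabilizer code and that no nonzero Pauli error of weight below $\min\{d_1,d_2\}$ is simultaneously undetectable and nontrivial on $\mathcal Q$. Cleanly this requires first developing the $q$-ary Pauli/stabilizer formalism together with the symplectic self-orthogonality criterion, after which the CSS case is immediate. Since all of this is entirely standard --- it is \cite{CRSS98} in the binary case, with the routine $q$-ary extension --- in the paper the theorem is simply cited; our remaining task, discharged in the examples and tables that follow, is only to supply classical pairs $(\mathfrak D_1,\mathfrak D_2)$ with $\mathfrak D_2^\perp\subseteq\mathfrak D_1$, for which the duality results of the preceding section and Theorem~\ref{theo-6.1} are the tools.
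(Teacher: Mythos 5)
The paper contains no proof of this theorem: it is stated purely as a citation of the CSS construction from \cite{CRSS98}, exactly as you observe at the end of your proposal, so there is no internal argument to compare against. Your outline is the standard CSS proof and is essentially sound: setting $C_2=\mathfrak D_2^\perp\subseteq C_1=\mathfrak D_1$, the coset states $|v+C_2\rangle$ give a subspace of the correct dimension $q^{k_1+k_2-n}$, shift errors are controlled by the syndrome of $C_1$, and conjugation by the $q$-ary Fourier transform converts phase errors into shifts relative to $C_2^\perp=\mathfrak D_2$, handled by its distance $d_2$. One slip occurs in your ``equivalently'' aside: the symplectic self-orthogonal space attached to the pair $C_2\subseteq C_1$ is $C_2\times C_1^\perp$ (that is, $\mathfrak D_2^\perp\times\mathfrak D_1^\perp$), not $C_2^\perp\times C_1^\perp$; self-orthogonality of the latter under the symplectic form would require $\mathfrak D_2\subseteq\mathfrak D_1$ rather than the stated hypothesis $\mathfrak D_2^\perp\subseteq\mathfrak D_1$. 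Since this aside is only an alternative formulation and your primary argument does not rely on it, the correction is cosmetic. Finally, as is standard, the construction guarantees minimum distance \emph{at least} $\min\{d_1,d_2\}$, which is the sense in which the cited theorem (and its use in Theorem \ref{theo-6.8} and the tables) should be read.
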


Our next theorems present the dual containing property of $\theta $-cyclic codes over $\mathbb F_q$, $\Theta$-cyclic codes over $R$ and separable $(\theta, \Theta)$-cyclic code over $\mathbb F_qR$. Here, we consider $\ord(\theta)\mid r$ and $\ord(\Theta)\mid s$.
\begin{theorem}\label{theo-6.4}\cite{DBUBT12}
 Let   $\mathfrak C_r=\langle\ell(x)\rangle$ be  a $\theta$-cyclic code of length $r$ over  $\mathbb F_q.$ Then $\mathfrak C_r^\perp\subseteq \mathfrak C_r$  if and only if     $x^r-1\mid _r f^\dag(x)f(x)$,  where $x^r-1=f(x)\ell(x).$ 
\end{theorem}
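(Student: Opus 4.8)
# Proof Proposal for Theorem \ref{theo-6.4}

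The plan is to deduce the dual-containment criterion from a divisibility identity in $\mathbb F_q[x:\theta]$, after first pinning down a generator polynomial of $\mathfrak C_r^\perp$. Because $\mathfrak C_r=\langle\ell(x)\rangle$ with $\ell(x)\mid_r x^r-1$, the code $\mathfrak C_r$ consists exactly of the right multiples of $\ell(x)$ reduced modulo $x^r-1$; hence, for any $\theta$-cyclic code $\langle p(x)\rangle$ with $p(x)\mid_r x^r-1$, we have $\langle p(x)\rangle\subseteq\mathfrak C_r$ if and only if $\ell(x)\mid_r p(x)$. Thus it suffices to show $\mathfrak C_r^\perp=\langle f^\dag(x)\rangle$ and then to translate the condition $\ell(x)\mid_r f^\dag(x)$ into $x^r-1\mid_r f^\dag(x)f(x)$.

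The technical backbone is that $\ell(x)$ and $f(x)$ commute. From $f(x)\ell(x)=x^r-1$, the centrality of $x^r-1$ (Proposition \ref{prop-2.2}, since $\ord(\theta)\mid r$) gives $f(x)\bigl(\ell(x)f(x)\bigr)=\bigl(f(x)\ell(x)\bigr)f(x)=(x^r-1)f(x)=f(x)(x^r-1)$, and left-cancelling $f(x)$ in the domain $\mathbb F_q[x:\theta]$ yields $\ell(x)f(x)=x^r-1=f(x)\ell(x)$. Applying the reciprocal operator to $x^r-1=f(x)\ell(x)$, Proposition \ref{lem-5.5}(3) together with $(x^r-1)^\dag=-(x^r-1)$ gives $\Psi_1^{\deg f}\bigl(\ell^\dag(x)\bigr)f^\dag(x)=-(x^r-1)$, so $f^\dag(x)\mid_r x^r-1$ and $\deg f^\dag(x)=\deg f(x)=r-\deg\ell(x)$.

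To identify $\mathfrak C_r^\perp$, I would specialise the orthogonality criterion of Propositions \ref{prop-5.3} and \ref{lem-5.8} to a single block (equivalently, invoke \cite{DBUBT12}): a polynomial $g'(x)$ together with all its $\theta$-cyclic shifts is orthogonal to $\mathfrak C_r$ exactly when $\ell(x)\,\Psi_1^{r-\deg g'(x)}\bigl(g'^\dag(x)\bigr)\equiv 0\pmod{x^r-1}$. Taking $g'(x)=f^\dag(x)$ and using Proposition \ref{lem-5.5}(2) with $\Psi_1^{r}=\mathrm{id}$ (valid as $\ord(\theta)\mid r$) reduces this to $\ell(x)f(x)\equiv 0\pmod{x^r-1}$, which holds since $\ell(x)f(x)=x^r-1$; hence $\langle f^\dag(x)\rangle\subseteq\mathfrak C_r^\perp$, and as both have $\mathbb F_q$-dimension $\deg\ell(x)$ they coincide. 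It then remains to verify $\mathfrak C_r^\perp\subseteq\mathfrak C_r\iff x^r-1\mid_r f^\dag(x)f(x)$: if $f^\dag(x)=v(x)\ell(x)$ then $f^\dag(x)f(x)=v(x)\ell(x)f(x)=v(x)(x^r-1)$; conversely $f^\dag(x)f(x)=w(x)(x^r-1)=w(x)\ell(x)f(x)$ permits right-cancellation of $f(x)$, giving $f^\dag(x)=w(x)\ell(x)$, i.e.\ $\ell(x)\mid_r f^\dag(x)$.

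The delicate part is the noncommutative bookkeeping: identifying the correct (twisted) reciprocal of $f(x)$ that generates $\mathfrak C_r^\perp$, keeping left- and right-divisibility straight throughout, and justifying the cancellations — all of which rest on the two facts that $x^r-1$ is central in $\mathbb F_q[x:\theta]$ and that this ring is a domain.
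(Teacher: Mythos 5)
The paper does not actually prove Theorem \ref{theo-6.4}: it imports the statement from \cite{DBUBT12}, so there is no in-text argument to compare yours against. Your reconstruction is correct and follows the standard route for skew-cyclic duality: identify $\mathfrak C_r^\perp=\langle f^\dag(x)\rangle$, observe that containment of one skew-cyclic code in another amounts to right divisibility of the generators, and convert $\ell(x)\mid_r f^\dag(x)$ into $x^r-1\mid_r f^\dag(x)f(x)$ using $\ell(x)f(x)=f(x)\ell(x)=x^r-1$ together with cancellation in the domain $\mathbb F_q[x:\theta]$. The individual steps check out: the commutation $\ell(x)f(x)=x^r-1$ via centrality of $x^r-1$ and left cancellation; the identity $(x^r-1)^\dag=-(x^r-1)$, which with Proposition \ref{lem-5.5}(3) gives $f^\dag(x)\mid_r x^r-1$; the dimension count resting on $\deg f^\dag(x)=\deg f(x)$ (which silently uses that $f(x)$ has nonzero constant term, guaranteed by $f(x)\ell(x)=x^r-1$); and the final right cancellation of $f(x)$. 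The one step you do not carry out in full is the inclusion $\langle f^\dag(x)\rangle\subseteq\mathfrak C_r^\perp$, which you reduce to the single-block case of Propositions \ref{prop-5.3} and \ref{lem-5.8} or to \cite{DBUBT12} itself; that is acceptable, since those orthogonality criteria are established independently of Theorem \ref{theo-6.4}, but it is the one place where your argument leans on external machinery rather than being self-contained.
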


\begin{theorem}\label{theo-6.5}
  Suppose  $\mathfrak C_s= \langle\kappa_1t_1+ \kappa_2t_2+\kappa_3t_3+\kappa_4t_4\rangle$ is  a $\Theta $-cyclic code of length $s$ over $R.$ Then    $\mathfrak C_s^\perp\subseteq\mathfrak C_s$ if and only if   $ x^s-1\mid_r h^{\dag}_i(x)h_i(x)$, where  $h_i(x)t_i(x)=x^s-1$ for $i=1,2,3,4.$
\end{theorem}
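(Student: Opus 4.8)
# Proof Proposal for Theorem \ref{theo-6.5}

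The plan is to reduce the dual-containment question for the $\Theta$-cyclic code $\mathfrak C_s$ over $R$ to four parallel dual-containment questions for $\theta$-cyclic codes over $\mathbb F_q$, and then invoke Theorem \ref{theo-6.4} componentwise. The key structural input is the idempotent decomposition $R = \kappa_1 R \oplus \kappa_2 R \oplus \kappa_3 R \oplus \kappa_4 R$, which induces $\mathfrak C_s = \kappa_1 \mathfrak C_{s,1} \oplus \kappa_2 \mathfrak C_{s,2} \oplus \kappa_3 \mathfrak C_{s,3} \oplus \kappa_4 \mathfrak C_{s,4}$, where by Proposition \ref{prop-4.3} each component satisfies $\mathfrak C_{s,i} = \langle t_i(x) \rangle$ with $t_i(x) \mid_r x^s-1$ in $\mathbb F_q[x:\theta]$, say $h_i(x) t_i(x) = x^s - 1$.

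First I would establish that the dual decomposes compatibly with the idempotents, i.e. $\mathfrak C_s^\perp = \kappa_1 \mathfrak C_{s,1}^\perp \oplus \kappa_2 \mathfrak C_{s,2}^\perp \oplus \kappa_3 \mathfrak C_{s,3}^\perp \oplus \kappa_4 \mathfrak C_{s,4}^\perp$. This follows from the orthogonality relations $\kappa_i \kappa_j = 0$ for $i \neq j$ and $\kappa_i^2 = \kappa_i$: for $\mathbf b = \sum_i \kappa_i \mathbf n_i \in \mathfrak C_s$ and $\mathbf b' = \sum_j \kappa_j \mathbf n_j'$, the inner product $\mathbf b \cdot \mathbf b'$ in $R^s$ expands to $\sum_i \kappa_i (\mathbf n_i \cdot \mathbf n_i')$, so $\mathbf b \cdot \mathbf b' = 0$ in $R$ if and only if $\mathbf n_i \cdot \mathbf n_i' = 0$ in $\mathbb F_q$ for every $i$. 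Hence $\mathfrak C_s^\perp \subseteq \mathfrak C_s$ holds if and only if $\mathfrak C_{s,i}^\perp \subseteq \mathfrak C_{s,i}$ holds for each $i = 1,2,3,4$ simultaneously. (One must also note that since $\ord(\Theta) \mid s$ and $\ord(\theta)\mid s$, the relevant quotients are genuine rings and Theorem \ref{theo-6.4} applies to each $\mathfrak C_{s,i}$.)

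Next I would apply Theorem \ref{theo-6.4} to each component code $\mathfrak C_{s,i} = \langle t_i(x)\rangle$ over $\mathbb F_q$ with $x^s - 1 = h_i(x) t_i(x)$: it gives $\mathfrak C_{s,i}^\perp \subseteq \mathfrak C_{s,i}$ if and only if $x^s - 1 \mid_r h_i^\dag(x) h_i(x)$. Combining this with the equivalence from the previous step yields exactly the claimed statement: $\mathfrak C_s^\perp \subseteq \mathfrak C_s$ if and only if $x^s - 1 \mid_r h_i^\dag(x) h_i(x)$ for all $i = 1,2,3,4$.

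I expect the main obstacle to be verifying carefully that the dual of $\mathfrak C_s$ as an $R$-submodule of $R^s$ really splits as $\bigoplus_i \kappa_i \mathfrak C_{s,i}^\perp$ rather than merely containing it — in particular checking the size/dimension count, which can be read off from Proposition \ref{prop-5.9} (the quantities $|\kappa_i \mathfrak C_s|$ and $|\kappa_i(\mathfrak C_s)^\perp|$ there) to confirm the inclusion is an equality. A secondary subtlety worth spelling out is that $\kappa_i$-multiplication commutes with the $\Theta$-cyclic shift (because $\Theta(\kappa_i) = \kappa_i$, as the $\kappa_i$ have coefficients fixed by $\theta$), so that $\mathfrak C_{s,i}^\perp$ is indeed a $\theta$-cyclic code over $\mathbb F_q$ and Theorem \ref{theo-6.4} is legitimately applicable; everything else is a routine assembly of the cited results.
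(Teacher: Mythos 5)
Your proposal is correct and follows essentially the same route as the paper: decompose $\mathfrak C_s$ via the idempotents into $\bigoplus_i \kappa_i\mathfrak C_{s,i}$, observe that dual containment holds for $\mathfrak C_s$ if and only if it holds for each component $\mathfrak C_{s,i}$, and then apply Theorem \ref{theo-6.4} componentwise. The only difference is that you actually justify the splitting $\mathfrak C_s^\perp=\bigoplus_i\kappa_i\mathfrak C_{s,i}^\perp$ via the orthogonality $\kappa_i\kappa_j=0$, whereas the paper asserts this equivalence without proof.
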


\begin{proof}
    We know that  $\mathfrak C_s=\kappa_1\mathfrak C_{s,1}\oplus\kappa_2\mathfrak C_{s,2}\oplus\kappa_3\mathfrak C_{s,3}\oplus\kappa_4\mathfrak C_{s,4},$  is $\Theta$-cyclic code of length $s$ over $R$  if and only if each $\mathfrak C_{s,i}$ is  $\theta$-cyclic code of length $s$ over $\mathbb F_q$. Also, $\mathfrak C_{s}^{\perp}\subseteq \mathfrak C_s$ if and only if $\mathfrak C_{s,i}^{\perp}\subseteq \mathfrak C_{s,i}.$ Further   $\mathfrak C_{s,i}=\langle t_i(x)\rangle$ such that   $x^s-1=h_i(x)t_i(x)$ for $i=1,2,3,4$ .  Now, Theorem \ref{theo-6.4} infer that    $\mathfrak C_{s,i}^{\perp}\subseteq \mathfrak C_{s,i}$ if and only if $x^s-1~|_r~ h^\dag_i(x)h_i(x)$.
\end{proof}

\begin{theorem}\label{theo-6.6}
Suppose  $\mathfrak{D}$ is  a separable $(\theta, \Theta)$-cyclic code over  $\mathbb F_qR$. Then $\mathfrak{D}^\perp\subseteq \mathfrak{D}$ if and only if $\mathfrak C^\perp_r\subseteq \mathfrak C_r$ and $\mathfrak C^\perp_s\subseteq \mathfrak C_s$.
\end{theorem}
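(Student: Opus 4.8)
The plan is to use separability to collapse the whole statement onto the two constituent codes. Since $\mathfrak{D}$ is separable, $\mathfrak{D}=\mathfrak{C}_r\times\mathfrak{C}_s$, where $\mathfrak{C}_r$ is a linear code over $\mathbb{F}_q$ and $\mathfrak{C}_s$ is a linear code over $R$. The first and essential step is to prove that the dual respects this product, namely
\[\mathfrak{D}^\perp=\mathfrak{C}_r^\perp\times\mathfrak{C}_s^\perp,\]
where $\mathfrak{C}_r^\perp=\{\mathbf{g}'\in\mathbb{F}_q^r\mid\mathbf{g}\cdot\mathbf{g}'=0\ \text{for all}\ \mathbf{g}\in\mathfrak{C}_r\}$ and $\mathfrak{C}_s^\perp=\{\mathbf{b}'\in R^s\mid\mathbf{b}\cdot\mathbf{b}'=0\ \text{for all}\ \mathbf{b}\in\mathfrak{C}_s\}$. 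I would write an arbitrary $\mathbf{d}'\in\mathbb{F}_q^r\times R^s$ as $\mathbf{d}'=(\mathbf{g}',\mathbf{b}')$ and unfold the membership condition $\mathbf{d}\cdot\mathbf{d}'=\kappa_1(\mathbf{g}\cdot\mathbf{g}')+\mathbf{b}\cdot\mathbf{b}'=0$ over all $\mathbf{d}=(\mathbf{g},\mathbf{b})\in\mathfrak{C}_r\times\mathfrak{C}_s$; specialising to $\mathbf{b}=\mathbf{0}$ and then to $\mathbf{g}=\mathbf{0}$ (both permitted, since $\mathfrak{C}_r$ and $\mathfrak{C}_s$ contain the zero word) decouples the condition into $\kappa_1(\mathbf{g}\cdot\mathbf{g}')=0$ for all $\mathbf{g}\in\mathfrak{C}_r$ together with $\mathbf{b}\cdot\mathbf{b}'=0$ for all $\mathbf{b}\in\mathfrak{C}_s$. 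The reverse inclusion is immediate because $\kappa_1\cdot 0+0=0$.

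The only point requiring care is the idempotent factor $\kappa_1$ sitting in the inner product: I must argue that $\kappa_1(\mathbf{g}\cdot\mathbf{g}')=0$ genuinely forces $\mathbf{g}\cdot\mathbf{g}'=0$. Here $\mathbf{g}\cdot\mathbf{g}'\in\mathbb{F}_q\subseteq R$, and the map $\psi\colon\mathbb{F}_q\to\kappa_1 R$, $\alpha\mapsto\kappa_1\alpha$, is a ring homomorphism (using commutativity of $R$ and $\kappa_1^2=\kappa_1$, $\kappa_i\kappa_j=0$, $\sum_i\kappa_i=1$) with $\psi(1)=\kappa_1\neq 0$; as $\mathbb{F}_q$ is a field, $\psi$ is injective, so $\kappa_1\alpha=0\iff\alpha=0$. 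This completes Step~1 and shows along the way that $\mathfrak{D}^\perp$ is again separable, with first component $\mathfrak{C}_r^\perp$ and second component $\mathfrak{C}_s^\perp$.

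With the identity $\mathfrak{D}^\perp=\mathfrak{C}_r^\perp\times\mathfrak{C}_s^\perp$ in hand, the theorem becomes a purely formal fact about containment of products of nonempty sets: $\mathfrak{C}_r^\perp\times\mathfrak{C}_s^\perp\subseteq\mathfrak{C}_r\times\mathfrak{C}_s$ if and only if $\mathfrak{C}_r^\perp\subseteq\mathfrak{C}_r$ and $\mathfrak{C}_s^\perp\subseteq\mathfrak{C}_s$. The ``if'' direction is trivial; for ``only if'', given $\mathbf{g}'\in\mathfrak{C}_r^\perp$ the word $(\mathbf{g}',\mathbf{0})$ lies in $\mathfrak{C}_r^\perp\times\mathfrak{C}_s^\perp\subseteq\mathfrak{C}_r\times\mathfrak{C}_s$ (as $\mathbf{0}\in\mathfrak{C}_s^\perp$), whence $\mathbf{g}'\in\mathfrak{C}_r$, and symmetrically $\mathfrak{C}_s^\perp\subseteq\mathfrak{C}_s$. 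Combining, $\mathfrak{D}^\perp\subseteq\mathfrak{D}$ if and only if $\mathfrak{C}_r^\perp\subseteq\mathfrak{C}_r$ and $\mathfrak{C}_s^\perp\subseteq\mathfrak{C}_s$, as asserted. I expect the decoupling identity $\mathfrak{D}^\perp=\mathfrak{C}_r^\perp\times\mathfrak{C}_s^\perp$ — in particular handling the $\kappa_1$ twist cleanly — to be the only real obstacle; everything after it is routine.
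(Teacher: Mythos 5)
Your proposal is correct and follows essentially the same route as the paper: both reduce the statement to the identity $\mathfrak{D}^\perp=\mathfrak{C}_r^\perp\times\mathfrak{C}_s^\perp$ and then compare products componentwise. The only difference is that the paper asserts this identity without justification, whereas you actually prove it (including the observation that $\kappa_1\alpha=0$ forces $\alpha=0$ for $\alpha\in\mathbb{F}_q$), so your write-up is, if anything, more complete than the paper's.
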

\begin{proof}
    Assume  $\mathfrak D^\perp\subseteq \mathfrak D= \mathfrak C_r\times\mathfrak  C_s, $ then $\mathfrak C_r^\perp\times\mathfrak  C_s^\perp\subseteq \mathfrak C_r\times\mathfrak  C_s.$ Which implies $\mathfrak C^\perp_r\subseteq \mathfrak C_r$ and $\mathfrak C^\perp_s\subseteq \mathfrak C_s.$\par
    Same argument can be used to prove the converse.
   
\end{proof}

 Now, by considering the above discussion, we present the key result.
\begin{theorem}\label{theo-6.8}  Let   $\mathfrak{D} =\mathfrak  C_r\times \mathfrak C_s$ be  a separable   $(\theta, \Theta)$-cyclic code of block length $(r,s)$ over $\mathbb F_qR$. If   $\mathfrak C_r^\perp\subseteq \mathfrak C_r$ and  $\mathfrak C_s^\perp\subseteq \mathfrak C_s$, then exists a $[[r+4s,2k-(r+4s),d_H]]_q$ QECC  over  $\mathbb F_q$,  where $k$ is dimension  and  $d_H$ is  Hamming distance of  $\Phi(\mathfrak{D})$.
\end{theorem}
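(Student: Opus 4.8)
The plan is to chain together three results already established in the paper: the dual-containment transfer for separable codes (Theorem \ref{theo-6.6}), the fact that the Gray map $\Phi$ commutes with duality (Theorem \ref{theo-6.1}), and the CSS construction (Theorem \ref{theo-6.2}). The upshot is that no new computation is needed; the work is purely in verifying that the hypotheses propagate cleanly through $\Phi$.

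First I would invoke Theorem \ref{theo-6.6}: since $\mathfrak{D} = \mathfrak C_r \times \mathfrak C_s$ is separable and $\mathfrak C_r^\perp \subseteq \mathfrak C_r$, $\mathfrak C_s^\perp \subseteq \mathfrak C_s$, we get $\mathfrak{D}^\perp \subseteq \mathfrak{D}$ as a linear code of block length $(r,s)$ over $\mathbb F_q R$. Next I would push this inclusion through the Gray map. By Theorem \ref{theo-6.1} we have $\Phi(\mathfrak D^\perp) = \Phi(\mathfrak D)^\perp$, and since $\Phi$ is an $\mathbb F_q$-linear bijection (Proposition \ref{prop-3.2}) it is inclusion-preserving, so $\Phi(\mathfrak D)^\perp = \Phi(\mathfrak D^\perp) \subseteq \Phi(\mathfrak D)$. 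Thus $\Phi(\mathfrak D)$ is a dual-containing linear code over $\mathbb F_q$.

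Then I would record the parameters: by Proposition \ref{prop-3.2}, $\Phi(\mathfrak D)$ is an $[r+4s, k, d_H]$ linear code over $\mathbb F_q$ with $d_L = d_H$, where $q^k = |\mathfrak D|$. Finally, applying the CSS construction (Theorem \ref{theo-6.2}) to the dual-containing code $\mathfrak D_1 = \Phi(\mathfrak D)$ yields an $[[n, 2k_1 - n, d_1]]_q$ QECC; substituting $n = r+4s$, $k_1 = k$, $d_1 = d_H$ gives the claimed $[[r+4s,\, 2k - (r+4s),\, d_H]]_q$ QECC.

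I do not anticipate a genuine obstacle here, since every ingredient is already in hand; the only point requiring a word of care is the observation that a bijective linear map preserves containments of a code in its dual, so that $\Phi$ turns the self-orthogonality-type condition $\mathfrak D^\perp \subseteq \mathfrak D$ into $\Phi(\mathfrak D)^\perp \subseteq \Phi(\mathfrak D)$ — which is exactly the hypothesis the CSS construction needs. One should also note explicitly that $\Phi$ being distance-preserving is what guarantees the minimum distance of the resulting quantum code is the Hamming distance $d_H$ of $\Phi(\mathfrak D)$ rather than merely a quantity tied to the Lee metric on $\mathbb F_q^r \times R^s$.
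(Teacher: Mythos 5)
Your proposal is correct and follows the same route as the paper's proof: Theorem \ref{theo-6.6} gives $\mathfrak D^\perp\subseteq\mathfrak D$, Theorem \ref{theo-6.1} transfers this to $\Phi(\mathfrak D)^\perp\subseteq\Phi(\mathfrak D)$, and the CSS construction of Theorem \ref{theo-6.2} yields the stated QECC. Your explicit remark that the bijectivity and linearity of $\Phi$ are what turn $\Phi(\mathfrak D^\perp)=\Phi(\mathfrak D)^\perp$ into the required dual-containment is in fact slightly cleaner than the paper's own wording of that step.
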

\begin{proof}
Suppose  $\mathfrak C_r^\perp\subseteq \mathfrak C_r$ and  $\mathfrak C_s^\perp\subseteq \mathfrak C_s$, then by Theorem \ref {theo-6.6}, we get $\mathfrak D^\perp\subseteq \mathfrak D.$  Now, by Theorem \ref {theo-6.1}, we have $|\Phi(\mathfrak{D}^\perp)| = |\Phi(\mathfrak{D})^\perp|$, then  $\Phi(\mathfrak D)^\perp\subseteq \Phi(\mathfrak D). $ Again by Theorem \ref{theo-6.2}, this infer that there exists a  $[[r+4s,2k-(r+4s),d_H]]_q$ QECC   over $\mathbb F_q.$
    
\end{proof} 
 In order to support our findings, we present some illustrative examples of constructing   QECCs.
\begin{example} \label{ex-6.8}\em
 Let $q=9,~r=26,~ s=6$ and $R= \mathbb F_{9}+u\mathbb F_{9}+v\mathbb F_{9}+uv \mathbb F_{9},$ where $\mathbb F_{9}=\mathbb F_3[\omega]$ with $\omega^2+2\omega+2=0.$ The Frobenius automorphism $\theta:\mathbb F_{9} \to \mathbb F_{9}$ is defined as $\theta(\alpha)= \alpha^3$ for all $\alpha\in \mathbb F_{9}.$ We   have  $\ord(\theta) =2=\ord(\Theta) $  such that  $\ord(\theta)\mid r$,  $\ord(\Theta)\mid s$.   Since $\mathbb F_9[x:\theta]$ is not a UFD,  so both the polynomials $x^{26}-1$ and $x^{6}-1$ have more than one factorization. Let us delve into one of these factors.
 \begin{align*}
 x^{26}-1&= (x^3 + \omega x^2 + 2x\omega + 2)  (x^3 + \omega^7x^2 + 2x + 1)  (x^3 + 2x^2 + \omega^7x + 1)  (x^3 + x^2 + \omega^7x + 2)\times  \\ 
 &\hspace{.5cm}(x^3 + \omega x^2 +\omega^6x + 1) (x^3 + \omega^7x^2 + \omega^6x + 2) (x^3 + x^2 + \omega^3x + \omega^6)\times\\
 &\hspace{.5cm}(x^3 + 2x^2 + \omega^3x + \omega^2)  (x + \omega^2)^2
 \end{align*}
 and 
 \[x^{6}-1=(x+\omega^{6})(x+1)(x+\omega^{2})^2(x+2)(x+\omega^2).\]
 Let     $\ell(x)=(x^3 + 2x^2 + \omega^3x + \omega^2)  (x + \omega^2).$ Then  
 \begin{align*}
 f(x)&= x^{22} + \omega^7x^{21} + \omega^6x^{20} + \omega^3x^{19} + \omega^6x^{18} + \omega^2x^{16} + \omega^3x^{15}  + 2x^{14} 
    + \omega^7x^{13} + \omega^3x^{12}+\\ 
    &\hspace{.5 cm} \omega^7x^{11} 
    + x^{10} + \omega^3x^9 + \omega^2x^8 + \omega^7x^7+ \omega x^6 + \omega^3x^5 + \omega^3x^3 + \omega^7x^2 + \omega^3x + 1\\
 f^\dag(x)&=x^{22} + \omega x^{21} + \omega^7x^{20} + \omega x^{19} + \omega x^{17} + \omega x^{16} + \omega^5x^{15}  + \omega^2x^{14} 
    + \omega x^{13} + x^{12} + \omega^5x^{11}+\\
    & \hspace{.5cm}\omega^3x^{10}+\omega^5x^9 + 2x^8 + \omega x^7 + \omega^2x^6+ \omega^6 x^4 + \omega x^3 + \omega^6x^2 + \omega^5x  + 1\\
 f^\dag(x)f(x)&= ( x^{18} + \omega^6 x^{17} + \omega^5 x^{16} + \omega^2x^{15} 
+ \omega^7x^{14} 
    + \omega^2 x^{12} + 2x^{10}+\omega^2x^9 + x^8 + \omega^6x^6+\\
    &\hspace{.5cm} \omega^3 x^4 + \omega ^2x^3 + \omega x^2 + \omega^6x  + 1)(x^{26}-1).
     \end{align*}
     Next, let us choose  $t_1(x)= (x+1)(x+\omega^6)$, then 
     \begin{align*}
         s_1(x)&= x^4 + \omega^3x^3 + \omega x^2 + \omega^7x + \omega^6
\\
         s^\dag_1(x)&= \omega^6x^4 + \omega^5x^3 + \omega x^2 + \omega x + 1\\
         s^\dag_1(x)s_1(x)&=(\omega^6x^2 + \omega^2
)(x^{6}-1).
     \end{align*}
Finally, we  choose   $t_i(x)=(x+w^{6})$  for $i=2,3,4,$ then 
\begin{align*}
    s_i(x)&= x^5 + \omega^6x^4 + x^3 + \omega^6x^2 + x + \omega^6\\
    s^\dag_i(x)&=  \omega^2x^5 + x^4 + \omega^2x^3 + x^2 + \omega^2x + 1\\
    s^\dag_i(x)s_i(x)&= (\omega^2x^4 + \omega^2x^2 + \omega^2)(x^{6}-1).
\end{align*}
 Here, $\mathfrak C_r =\langle \ell(x)\rangle$ is  a $\theta$-cyclic code  having parameters $[26,22,3]$ over $\mathbb F_{9}$ and $\mathfrak C_s= \langle\kappa_1t_1(x)+ \kappa_2t_2(x)+\kappa_3t_3(x)+\kappa_4t_4(x)\rangle$ is  a $\Theta$-cyclic code   over $R$ such that   $\Phi_1(\mathfrak C_s)$ has parameters $[24,19,3]$ over $\mathbb F_9.$
Therefore, by Proposition \ref{prop-3.2},  we get  $\Phi(\mathfrak{D})$ is a  $[50,41,3]$ linear code  $\mathbb F_{9}$. From the above equation, we have  $x^{26}-1\mid_r~(f^\dag(x)f(x))$ and $x^{6}-1\mid_r~(s^\dag_i(x)s_i(x))$ for $i=1,2,3,4.$ Therefore, by     Theorems \ref{theo-6.4} and \ref{theo-6.5}, we get  that $\mathfrak C_r^\perp\subseteq\mathfrak C_r$ and $\mathfrak C_s^\perp\subseteq \mathfrak C_s$.
Hence,  from Theorem \ref{theo-6.6},  we conclude that  $\mathfrak{D}^\perp\subseteq\mathfrak{D}.$ Thus, Theorems \ref{theo-6.8} enables us to construct a  $[[50,32,3]]_{9}$  QECC. This QECC has better parameters than $[[50,30,3]]_{9}$ as given  in  \cite{E20}. 
\end{example}

\begin{example}\label{ex-6.9}\em

 Let $q=25,~r=8,~ s=10$ and $R= \mathbb F_{25}+u\mathbb F_{25}+v\mathbb F_{25+}uv \mathbb F_{25},$ where $\mathbb F_{25}=\mathbb F_5[\omega]$ with $\omega^2+4\omega+2=0.$ The Frobenius automorphism $\theta:\mathbb F_{25} \to \mathbb F_{25}$ is defined as $\theta(\alpha)= \alpha^5$ for all $\alpha\in \mathbb F_{25}.$ We   have  $\ord(\theta)=2=\ord(\Theta) $ such that  $\ord(\theta)\mid r$,  $\ord(\Theta)\mid s$.   Since $\mathbb F_{25}[x:\theta]$ is not a UFD,  so both the polynomials $x^{8}-1$ and $x^{10}-1$ have more than one factorization. Let us delve into one of these factors.

 \[ x^8-1= (x+1)(x+4)(x+\omega^9)^2(x+\omega^3)^2(x+3)(x+2)\]
 and 
 \[x^{10}-1=(x+4)(x+\omega^{20})(x+\omega^{16})(x+4)(x+1)^2 (x+\omega^8)(x+\omega^{16})(x+\omega^{20})(x+\omega^4).\]
 Let    $\ell(x)=(x+\omega^9)(x+2)$.  Then  
 \begin{align*}
 f(x)&= x^6+\omega^4x^5+\omega^{10}x^4+\omega^4x^3+\omega^{22}x+\omega^{21}\\
 f^\dag(x)&=\omega^{21}x^6+\omega^{14}x^5+\omega^5x^4+\omega^{20}x^3+\omega^{10}x^2+\omega^{20}x+1\\
 f^\dag(x)f(x)&= (\omega^{21}x^4+2x^3+3x+w^9)(x^8-1).
     \end{align*}
     Next, let us choose $t_1(x)= (x+\omega^8)(x+\omega^{16})(x+\omega^{20}),$ then 
     \begin{align*}
         s_1(x)&= x^7+\omega^4x^6+2x^5+\omega^{10}x^4+3x^3+\omega^{22}x^2+4x+\omega^{16}\\
         s^\dag_1(x)&= \omega^8x^7+4x^6+\omega^{14}x^5+3x^4+\omega^2x^3+2x^2+\omega^{20}x+1\\
         s^\dag_1(x)s_1(x)&=(\omega^8x^4+\omega^8x^3+\omega^{15}x^2+\omega^{20}x+\omega^4)(x^{10}-1).
     \end{align*}
Finally, we  choose   $t_i(x)=(x+w^{20})$  for $i=2,3,4,$ then 
\begin{align*}
    s_i(x)&= x^9+\omega^{16}x^8+x^7+\omega^{16}x^6+x^5+\omega^{16}x^4+x^3+\omega^{16}x^2+x+\omega^{16}\\
    s^\dag_i(x)&=  \omega^8x^9+x^8+\omega^{8}x^7+x^6+\omega^{8}x^5+x^4+\omega^{8}x^3+x^2+\omega^{8}x+1\\
    s^\dag_i(x)s_i(x)&= (\omega^8x^8+\omega^{20}x^7+\omega^7x^6+\omega^2x^5+\omega^{21}x^4+\omega^{14}x^3+\omega^{23}x^2+\omega^8x+\omega^4)(x^{10}-1).
\end{align*}
 Here,  $\mathfrak C_r =\langle \ell(x)\rangle$ is a   $\theta$-cyclic code  having parameters $[8,6,3]$ over $\mathbb F_{25}$, $\mathfrak C_s= \langle\kappa_1t_1(x)+ \kappa_2t_2(x)+\kappa_3t_3(x)+\kappa_4t_4(x)\rangle$ is a $\Theta$-cyclic code over $R$  such that   $\Phi_1(\mathfrak C_s)$ has  parameters $[40,34,3]$ over $\mathbb F_{25.}$ Therefore by Proposition \ref{prop-3.2}, we get  $\Phi(\mathfrak{D})$ is a $[48,40,3]$  linear code  over $\mathbb F_{25}$.  From the above, we have $ x^{8}-1 \mid_r~(f^\dag(x)f(x))$ and $ x^{10}-1\mid _r~(s^\dag_i(x)s_i(x))$ for $i=1,2,3,4.$ Therefore, by  Theorems \ref{theo-6.4} and \ref{theo-6.5}, we get  that $\mathfrak C_r^\perp\subseteq\mathfrak C_r$ and $\mathfrak C_s^\perp\subseteq \mathfrak C_s$. Hence, from  Theorem \ref{theo-6.6}, we get that  $\mathfrak{D}^\perp\subseteq\mathfrak{D}.$ Thus, Theorem \ref{theo-6.8} enables us to construct a $[[48,32,3]]_{25}$ QECC. This QECC has better parameters than $[[48,16,3]]_{25}$ as given in \cite{DBAPBC23}. 
\end{example}
\begin{example}\label{ex-6.10}\em
Let $q=27, r=9, s=3$, $R= \mathbb F_{27}+u\mathbb F_{27}+v\mathbb F_{27}+uv \mathbb F_{27},$ where $\mathbb F_{27}=\mathbb F_3[\omega]$ with $\omega^3+2\omega+2=0$ and  Frobenius automorphism  $\theta(\alpha)= \alpha^3$ for all $\alpha\in \mathbb F_{27}$. We   have order of $\theta$, $\Theta $ is  $3$ such that  $\ord(\theta)\mid r$,  $\ord(\Theta)\mid s$.     Since $\mathbb F_{27}[x:\theta]$ is not a UFD, so both the polynomials $x^{9}-1$ and $x^{3}-1$ have more than one factorization. Let us delve into one of these factorizations.
 \[ x^9-1= (x+\omega^{21})(x+\omega^{9})(x+\omega^{15})(x+\omega^{5})(x+\omega^{25})(x+\omega)(x+\omega^{27})(x+\omega^{21})(x+2)\]
 and 
 \[x^{3}-1=(x+\omega^{9})(x+\omega^{3})(x+\omega).\]
 Let 
 $\ell(x)=(x+\omega^9)(x+2).$ Then  
 \begin{align*}
 f(x)&= x^7+\omega^{16}x^6+\omega^{2}x^5+\omega^7x^4+\omega^{12}x^3+\omega^{15}x^2+\omega^{17}x+\omega^{17}\\
 f^\dag(x)&=\omega^{25}x^7+\omega^{17}x^6+\omega^{5}x^5+\omega^{10}x^4+\omega^{7}x^3+\omega^{18}x^2+\omega^{22}x+1\\
 f^\dag(x)f(x)&= (\omega^{25}x^5+\omega^{9}x^4+\omega^{17}x^3+\omega^{12}x^2+\omega^{22}x+\omega^4)(x^9-1).
     \end{align*}
     Next, let us choose $t_i(x)= (x+\omega)$ for $i=1,2$, then 
     \begin{align*}
         s_i(x)&= x^2+\omega^{22}x+\omega^{12}\\
         s^*_i(x)&= \omega^4x^2+\omega^{14}x+1\\
         s^\dag_i(x)s_i(x)&=(\omega^4x+\omega^{25})(x^{10}-1).
     \end{align*}
Finally, let us  choose   $t_j(x)=(x+w^{3})$  for $j=3,4$ then 
\begin{align*}
    s_j(x)&= x^2+\omega^{14}x+\omega^{10}\\
    s^\dag_j(x)&=  \omega^{12}x^2+\omega^{16}x+1\\
    s^*_j(x)s_j(x)&= (\omega^{12}x+\omega^{23})(x^{3}-1).
\end{align*}

Here,  $\mathfrak C_r =\langle \ell(x)\rangle$ is  a $\theta$-cyclic code over  having  parameters $[9,7,3]$ over $\mathbb F_{27}$, $\mathfrak C_s= \langle\kappa_1t_1+ \kappa_2t_2+\kappa_3t_3+\kappa_4t_4\rangle$ is  a $\Theta$-cyclic code over $R$  such that $\Phi_1(\mathfrak C_s)$ has parameters $[12,8,3]$ over $\mathbb F_{27}.$
Then, $\Phi(\mathfrak{D})$ is a $[21,15,3]$  linear code  over $\mathbb F_{27}$. From the above, we have  $x^9-1 \mid_r~(f^\dag(x)f(x))$ and $x^{3}-1\mid_r~(s^\dag_i(x)s_i(x))$ for $i=1,2,3,4.$ 
Therefor, by  Theorems \ref{theo-6.4} and \ref{theo-6.5}, we can conclude that $\mathfrak C_r^\perp\subseteq\mathfrak C_r$ and $\mathfrak C_s^\perp\subseteq \mathfrak C_s$. Moreover, Theorem \ref{theo-6.6} leads to the conclusion that $\mathfrak{D}^\perp\subseteq\mathfrak{D}.$ Hence, Theorem \ref{theo-6.8} enables us to construct a QECC with parameters $[[21,9,3]]_{27}$. 
\end{example}
As an application of our study,  Table \ref{tab-2} showcases several  MDS QECCs obtained from $\theta$-cyclic codes over $\mathbb F_q$.  The coefficients of generator polynomials are presented in ascending order. For instance, $ \omega^7\omega^521$ corresponds to the polynomial $w^7+w^5x+2x^2+x^3.$  In Table \ref{tab-3}, we obtain some new QECCs from $\Theta$-cyclic code over $R,$ which have better parameters than the existing quantum codes. Furthermore, in Table \ref{tab-4}, we construct some new QECCs from $(\theta, \Theta )$-cyclic code  over $\mathbb F_qR.$

\begin{table}[ht]
    \centering
    \caption{MDS QECCs from $\theta$-cyclic code over $\mathbb F_q$}
    \label{tab-2}
    \vspace{.5cm}
    \begin{tabular}{|c|c|c|c|}
    \hline
    $q$\hspace{1.5cm} &$n$\hspace{1.5cm} &$\ell(x)$\hspace{1.5cm} &$[[n,k,d]]_q$\\
        \hline 
         $3^2$\hspace{1.5cm} &$6$\hspace{1.5cm} &$\omega^{6}\omega^{7}1$\hspace{1.5cm} &$[[6,2,3]]_{3^2}$\\
         \hline
         $3^4$\hspace{1.5cm} &$8$\hspace{1.5cm} &$\omega^{69}\omega^{28}1$\hspace{1.5cm} &$[[8,4,3]]_{3^4}$\\
         \hline
           $5^2$\hspace{1.5cm} &$8$\hspace{1.5cm} &$\omega^{23}\omega^{10}1$\hspace{1.5cm} &$[[8,4,3]]_{5^2}$\\
            \hline $13^2$\hspace{1.5cm} &$8$\hspace{1.5cm} &$\omega^{117}\omega^{109}1$\hspace{1.5cm} &$[[8,4,3]]_{13^2}$\\
         
         \hline
         $3^3$\hspace{1.5cm} &$9$\hspace{1.5cm} &$\omega^{18}\omega^{19}1$\hspace{1.5cm} &$[[9,5,3]]_{3^3}$\\
         \hline
         $5^3$\hspace{1.5cm} &$12$\hspace{1.5cm} &$\omega^{99}\omega^{38}1$\hspace{1.5cm} &$[[12,8,3]]_{5^3}$\\
         \hline
         $5^3$\hspace{1.5cm} &$15$\hspace{1.5cm} &$\omega^{2}3\omega^{76}1$\hspace{1.5cm} &$[[15,9,4]]_{5^3}$\\
         \hline
         $7^5$\hspace{1.5cm} &$15$\hspace{1.5cm} &$\omega^{295}\omega^{96}$\hspace{1.5cm} &$[[15,11,3]]_{7^5}$\\
           \hline
           $3^3$\hspace{1.5cm} &$18$\hspace{1.5cm} &$\omega^{18}\omega^{19}1$\hspace{1.5cm} &$[[18,14,3]]_{3^3}$\\
        
         \hline
         $7^3$\hspace{1.5cm} &$18$\hspace{1.5cm} &$\omega^{339}\omega^{212}1$\hspace{1.5cm} &$[[18,14,3]]_{7^3}$\\
         \hline
         $5^4$\hspace{1.5cm} &$20$\hspace{1.5cm} &$\omega^{332}\omega^{292}1$\hspace{1.5cm} &$[[20,16,3]]_{5^4}$\\
         \hline
         $7^3$\hspace{1.5cm} &$21$\hspace{1.5cm} &$\omega^{261}\omega^{17}\omega^{178}1$\hspace{1.5cm} &$[[21,15,4]]_{7^3}$\\
         \hline
         $5^5$\hspace{1.5cm} &$25$\hspace{1.5cm} &$\omega^{1374}\omega^{3116}\omega^{3015}1$\hspace{1.5cm} &$[[25,19,4]]_{5^5}$\\
         \hline
           $13^2$\hspace{1.5cm} &$26$\hspace{1.5cm} &$1\omega^{21}\omega^{148}1$\hspace{1.5cm} &$[[26,20,4]]_{13^2}$\\
         \hline
    \end{tabular}

\end{table}
\begin{landscape}
\begin{table}[ht]
\centering
\caption{ New QECCs constructed from $\Theta$-cyclic code $\mathfrak C_s$ over  $ R$}
\label{tab-3}
\vspace{.5 cm}

\begin{tabular}
{|c|c|c|c|c|c|c|c|c|c|}
\hline

$q$&$n$  &$t_1(x)$ & $t_2(x)$ & $t_3(x)$ & $t_4(x)$ & $\Phi_1(\mathfrak C_s)$& New QECCS &$\mbox{Existing QECCs}$\\
\hline

$9$&$6$  &$1~\omega^{5}\omega^{5}1$ & $\omega^{2}1$ & $\omega^61$ & $11$ & $[24,18,4]$& $[[24,12,4]]_{9}$&$[[24,10,4]]_{9}\cite{IPV22}$\\

\hline
$9$&$6$  &$\omega^{6}\omega^{3}1$ & $21$ & $\omega^21$ & $\omega^61$ & $[24,19,3]$& $[[24,14,3]]_{9}$&$[[24,8,2]]_{9}\cite{AM16}$\\
\hline
$27$&$9$  &$\omega^{18}\omega^{19}1$ & $\omega^{9}1$ & $\omega^31$ & $1$ & $[36,32,3]$& $[[36,28,3]]_{27}$&$[[36,20,3]]_{27}\cite{RGS23}$\\
\hline
$25$&$10$  &$\omega^{4}4\omega^{16}1$ & $\omega^{20}1$ & $\omega^81$ & $41$ & $[40,34,3]$& $[[40,28,3]]_{25}$&$[[40,24,3]]_{25}\cite{BDUBY20}$\\
\hline
$9$&$12$  &$\omega^{6}\omega^{3}\omega^5\omega^31$ & $\omega^31$ & $\omega^31$ & $\omega^31$ & $[48,41,3]$& $[[48,34,3]]_{9}$&$[[48,30,3]]_{9}\cite{PIP21}$\\
\hline
$9$&$12$  &$\omega^{3}\omega121$ & $\omega^61$ & $11$ & $1$ & $[48,42,4]$& $[[48,36,4]]_{9}$&$[[48,34,4]]_{9}\cite{RGS23}$\\
\hline
$49$&$14$  &$\omega^{18}\omega^{15}1$ & $\omega^61$ & $\omega^61$ & $\omega^61$ & $[56,51,3]$& $[[56,48,3]]_{49}$&$[[56,44,3]]_{49}\cite{PIP21}$\\
\hline
$49$&$14$  &$6~\omega\omega^{46}\omega^{47}1$ & $\omega^{18}1$ & $\omega^61$ & $11$ & $[56,49,4]$& $[[56,42,4]]_{49}$&$[[56,40,4]]_{49}\cite{PIP21}$\\
\hline
$9$&$18$  &$\omega^{6}\omega^{3}\omega^2~1~1$ & $\omega^{6}1$ & $\omega^21$ & $1$ & $[72,66,3]$& $[[72,60,3]]_{9}$&$[[72,54,3]]_{9}\cite{RGS23}$\\
\hline
$121$&$22$  &$\omega^{100}\omega^{80}\omega^{38}\omega^{20}1$ & $\omega^{40}1$ & $\omega^{80}1$ & $11$ & $[88,81,4]$& $[[88,74,4]]_{121}$&$[[80,64,4]]_{121}\cite{BDUBY20}$\\
\hline
$25$&$20$  &$\omega^{22}3\omega^{13}1$ & $\omega^{22}1$ & $\omega^{8}1$ & $1$ & $[80,75,3]$& $[[80,70,3]]_{25}$&$[[80,56,3]]_{25}\cite{BDUBY20}$\\
\hline
$121$&$22$  &$\omega^{20}\omega^{107}\omega^{32}1$ & $11$ & $\omega^{80}1$ & $\omega^{40}1$ & $[88,82,3]$& $[[88,76,3]]_{121}$&$[[88,72,3]]_{121}\cite{BDUBY20}$\\
\hline
$169$&$26$  &$1\omega^{148}\omega^{21}1$ & $\omega^{60}1$ & $\omega^{48}1$ & $1$ & $[104,99,4]$& $[[104,94,4]]_{169}$&$[[104,80,4]]_{169}\cite{BDUBY20}$\\
\hline

\end{tabular}
\end {table}
\end{landscape}


\begin{landscape}
\begin{table}[ht]
\centering
\caption{ New QECCs constructed from  $(\theta,\Theta)$-cyclic code $\mathfrak D$ over  $ \mathbb F_qR$}
\label{tab-4}
\vspace{.5 cm}

\begin{tabular}
{|c|c|c|c|c|c|c|c|c|c|c|}
\hline

$q$&$(r,s)$ &$\ell(x)$&$t_1(x)$ & $t_2(x)$ & $t_3(x)$ & $t_4(x)$ & $\Phi(\mathfrak D)$& New QECCS &$\mbox{Existing QECCs}$\\
\hline
$9$&$(12,6)$ &$\omega^{3}\omega121$&$1~\omega^{5}\omega^{5}1$ & $\omega^{2}1$ & $\omega^61$ & $\omega$ & $[36,27,4]$& $[[36,18,4]]_{9}$ &$[[36,6,4]]_{9}\cite{E20}$\\
\hline
$25$&$(8,8)$ &$\omega^{23}\omega^{10}1$&$\omega^{23}\omega^{10}1$ & $\omega^5$ & $\omega^{3}1$ & $1$ &$[40,34,3]$& $[[40,28,3]]_{25}$ &$[[40,24,3]]_{25}\cite{BDUBY20}$\\
\hline
$27$&$(18,9)$ &$\omega^{18}\omega^{19}1$&$\omega^{18}\omega^{19}1$ & $\omega^{9}1$ & $\omega^31$ & $1$& $[54,48,3]$& $[[54,42,3]]_{27}$ &$[[54,36,3]]_{27}\cite{RGS23}$\\
\hline
$25$&$(40,10)$ &$2\omega^{4}\omega^{8}1$&$\omega^{20}4\omega^{8}1$ & $\omega^{20}1$ & $\omega^{20}1$ & $1$ &$[80,72,3]$& $[[80,64,3]]_{25}$ &$[[80,56,3]]_{25}\cite{BDUBY20}$\\
\hline
$25$&$(8,10)$ &$\omega^{15}\omega^{16}1$&$\omega^{20}4\omega^{8}1$ & $\omega^{20}1$ & $\omega^{20}1$ & $1$ &$[48,41,3]$& $[[48,34,3]]_{25}$ &$[[48,16,3]]_{25}\cite{DBAPBC23}$\\
\hline
$9$&$(56,6)$ &$1022\omega^{6}1$&$\omega^{6}\omega^{3}1$ & $21$ & $\omega^21$ & $\omega^61$ &$[80,70,4]$& $[[80,60,4]]_{9}$ &$[[80,48,4]]_{9}\cite{AM16}$\\
\hline
$169$&$(8,24)$ &$\omega^{117}\omega^{109}1$&$\omega^{15}\omega^{3}1$ & $(11)1$ & $\omega^{85}1$ & $1$ &$[104,98,3]$& $[[104,92,3]]_{169}$ &$[[104,88,3]]_{169}\cite{BDUBY20}$\\

\hline

\end{tabular}
\end {table}
\end{landscape}
\section{Conclusion}

In summary, this paper investigates the algebraic structure of $(\theta,\Theta)$-cyclic codes with block length $(r,s)$ over the ring $\mathbb F_qR$, where $R= \mathbb F_q+u\mathbb F_q+v\mathbb F_q+uv\mathbb F_q$ with $u^2=u, v^2=v, uv=vu$, and $q$ is an odd prime power. Our exploration begins with the decomposition of the ring $R$ into idempotent, followed by an in-depth examination of linear codes over the ring $\mathbb F_qR$. We define a Gray map from $\mathbb F_q^r\times R^s\to \mathbb F_q^{r+4s}$ and investigate its fundamental properties. The algebraic structure of $(\theta,\Theta)$-cyclic codes with block length $(r,s)$ over $\mathbb F_qR$ is thoroughly determined, and we establish generator polynomials for this family of codes in Theorem \ref{theo-4.4}. The algebraic structure of separable codes is discussed in detail by Theorem \ref{theo-4.7}, and the minimal generating set of $(\theta,\Theta)$-cyclic codes over $\mathbb F_qR$ is presented in Theorem \ref{theo-4.12}. The relationship between generator polynomials of $(\theta,\Theta)$-cyclic codes and their duals is established in Theorem \ref{theo-5.11}. As an application of our findings, we demonstrate the construction of QECCs from separable $(\theta,\Theta)$-cyclic codes over $\mathbb F_qR$ in Theorem \ref{theo-6.8}. Additionally, detailed examples (Examples \ref{ex-6.8}, \ref{ex-6.9}, \ref{ex-6.10}) are provided to illustrate the process of constructing QECCs. Our study yields optimal and near-optimal codes from $\Theta$-cyclic codes over $R$ (Table \ref{taboptimal-1}), MDS QECCs from $\theta$-cyclic codes over $\mathbb F_q$ (Table \ref{tab-2}), and new QECCs with improved parameters compared to existing codes (Tables \ref{tab-3} and \ref{tab-4}).


\begin{thebibliography}{99}
\bibitem{ASA14}
T. Abualrub, I. Siap and N. Aydin, $\mathbb Z_ {2}\mathbb Z_ {4} $-Additive Cyclic Codes, {\it
 IEEE Trans. Inform. Theory} 60(3), 
 pp. 1508–1514  (2014).
 \bibitem{AM16}
M. Ashraf and G. Mohammad, Quantum codes from cyclic codes over $ \mathbb{F}_q+ u\mathbb{F}_q+ v\mathbb{F}_q+ uv\mathbb{F}_q $, {\it Quantum Inf. Process.}  15(10), pp.  4089-4098 (2016).
 \bibitem{AHS22} I. Aydogdu,  R. M. Hesari and K. Samei, Double skew cyclic codes over $\mathbb F_q$, {\it Comput. Appl. Math.} 41(3), pp. 1-17 (2022).
\bibitem{AS13}
 I. Aydogdu and I. Siap, The structure of $\mathbb{Z}_2\mathbb Z_2^s$-additive codes:  Bounds on the Minimum Distance, {\it Appl. Math. Inf. Sci.} 7(6),  pp. 2271-2278 (2013). 
 \bibitem{BDUBY20}
T. Bag, H. Q. Dinh, A. K.  Upadhyay, R. K. Bandi and W. Yamaka,  Quantum codes from skew constacyclic codes over the ring  ${\mathbb F_q[u,v]}\slash{\langle u^2-1, v^2-1, uv-vu\rangle}$, {\it Discrete Math.} 343(3), pp. 1-17 (2020).
 \bibitem{BFRV10}
 J. Borges, C. Fernández-Córdoba, J. Pujol, J. Rifá and M. Villanueva,  $\mathbb{Z}_2 \mathbb{Z}_4$-linear codes: generator matrices and duality, {\it Des. Codes Cryptogr.} 54, pp. 167-179 (2010).
\bibitem{BCT18}
J. Borgers, C.  Fernández-Córdob and  R. Ten-valls, $\mathbb Z_2$-double cyclic codes, {\it Des. codes cryptogr.} 86,  pp. 463-479 (2018).
\bibitem{BGU07}
D. Boucher, W. Geiselmann  and F. Ulmer, Skew-cyclic codes, {\it Appl. Algebra Eng. Commun. Comput.} 18, pp. 379-389 (2007).
\bibitem{BU009}
D. Boucher,  and F. Ulmer, Coding with skew polynomial rings, {\it  J. Symbolic Comput. } 44(12), pp. 1644-1656 (2009).
 
 \bibitem{BHOS98}
 A. E. Brouwer, H.O. Hamalainen, P. R. Ostergard and N. J. A. Sloane, Boundes on mixed binary/ternary codes, {\it IEEE Trans. Inform. Theory}  44(1), pp. 140-161 (1998).
 
 \bibitem{HKC94}
A. R. Calderbank, A. R. Hammons,  P. V. Kumar, N. J. A. Sloane and P. Solé , The $\mathbb{Z}_4$-linearity of Kerdock, Preparata, Goethals and related codes, {\it IEEE Trans. Inform. Theory} 40(2), pp. 301-309 (1994).
   \bibitem{CRSS98} A. R. Calderbank, E. M. Rains, P. M. Shor and N. J. A. Sloane, Quantum error correction via codes over GF(4), {\it IEEE Trans. Inform. Theory} 44(4), pp. 1369-1387 (1998).
  \bibitem{CLU09} L. Chausde, P. Laidreau and F. Ulmer, skew cyclic code of prescribed distance or rank, {\it Des. Codes Cryptogr.} 50(3), pp. 267-284 (2009).
 
  \bibitem{DBAPBC23}

H. Q. Dinh, T. Bag, K. Abdukhalikov, S. Pathak, A. K. Upadhyay, R. Bandi and W. Chinnakum, On a class of skew constacyclic codes over mixed alphabets and applications in constructing optimal and quantum codes, {\it Cryptogr. Commun. } 15(1), pp. 171-198 (2023).
   \bibitem{DBUBT12} H. Q. Dinh, T. Bag, A. K. Upadhyay, R. K. Bandi and  R. Tansuchat, A class of skew cyclic code and application in quantum codes construction, {\it   Discrete Math.} 344(2), pp. 1-13  (2021).
    \bibitem{DPBUC20} H. Q. Dinh, S. Pathak, T. Bag, A. K. Upadhyay and W. Chinnakum, A Study of $\mathbb{F}_qR$-cyclic codes and their applications in constructing quantum codes, {\it IEEE Access} 8, pp. 190049-190063 (2020).
    
    \bibitem{D74}BR. Mc Donald, Finite ring with identity, {\it Marcel Dekker, New york}, 1974.
    \bibitem{E20}
Y. Edel, Some good quantum twisted codes (2020). https://www.mathi.uni-heidelberg.de/~yves/Matritzen/QTBCH/QTBCHTab9.html
    

  \bibitem{GJ13} J. Gao, Skew cyclic codes over $\mathbb{F}_ p+ v\mathbb{F}_p$, {\it  J. Appl. Math. Inform.} 31(3-4), pp. 337-342 (2013).
  \bibitem{GSWF16}
J. Gao, M. J. Shi, T. Wu and F. W. Fu, On double cyclic codes ove $\mathbb Z4$, {\it  Finite Fields Appl.} 39,  pp. 233–250 (2016).
\bibitem{GSY14}
F. Gursoy, I. Siap and B. Yildiz, Construction of skew cyclic codes over  $\mathbb F_q+v\mathbb F_q$, {\it  Adv. Math. Commun}.  8(3), pp. 313-322 (2014).
\bibitem{HRS21}
 R. M. Hesari, R. Rezaei and  K. Samei, On self-dual skew cyclic codes of length $p^s$
 over $\mathbb F_{p^m} + u\mathbb F_{p^m}$, {\it  Discrete Math.} 344(11), pp. 1-16 (2021).


\bibitem{IPV22}

H. Islam, O. Prakash and  R. K.   Verma, New quantum codes from constacyclic codes over the ring $ R_ {k, m} $, {\it Adv. Math. Commun.} 16(1), pp. 17-35  (2022).
   
 \bibitem{JLU12} Jitmans, Lingss and P. Udomkavanich , Skew constacyclic codes over finite chain ring, {\it Adv. Math. Commun.}  6,  pp. 39-62 (2012).
  \bibitem{KKKS06}
  A. Ketkar, A. Klappenecker, S. Kumar and P. K. Sarvepalli, Nonbinary Stabilizer Codes Over Finite Fields, {\it  IEEE Trans. Inf. Theory}  52(11),  pp. 4892-4914 2006.
  \bibitem{O33} O. Ore, Theory of non-commutative polynomials, {\it   Ann. of Math.} pp. 480-508, (1933).
  \bibitem{OOI19}

M. Özen, N. Tuğba Özzaim and H. İnce, Skew quasi cyclic codes over $\mathbb F_q +u \mathbb F_q$, {\it  J. Algebra Appl. } 18(4),  (2019).
\bibitem{PIP21}
O. Prakash, H. Islam,S. Patel and et al. New Quantum Codes from Skew Constacyclic Codes Over a Class Of Non-Chain Rings $R_{e,~ q}$, {\it Int. J. Theor. Phys.} 60 (3), pp. 334–3352 (2021).
  \bibitem{RGS23} P. Rai, B. Singh and A. Gupta, Quantum and LCD codes from Skew Constacyclic Codes over a General Class of Non-chain Rings, {\it Preprint},  doi.org/10.21203/rs.3.rs-3777147/v1 

 
 \bibitem{RP97}
  J. RifÃ and  J. Pujol, Translations-invariant propelinear codes, {\it IEEE Trans. Inform. Theory} 43(2), pp. 590-598 (1997).




 



\bibitem{S95}
P. W. Shor, Scheme for reducing decoherence in quantum computer memory, {\it Phys. Rev. A}  52(4), pp. R2493–R2496 (1995).

\bibitem{SAAS11}
I. Siap, T. Abualrub, N. Aydin and P. Seneviratne,  Skew cyclic codes of arbitrary length, {\it  Int. J. Inf. Coding Theory} 2(1), pp. 10-20 (2011).
 \bibitem{YSS15}
T. Yao, M. Shi and P. Solé, Skew cyclic codes over $\mathbb F_{q}+u\mathbb F_{q}+v\mathbb F_{q}+uv\mathbb F_{q},$ {\it J. Algebra Comb. Discrete Struct. Appl.} 3(2), pp. 163-168 (2015).







\end{thebibliography}
\end{document}